\documentclass[acmsmall,authorversion]{acmart}
\usepackage[utf8]{inputenc}
\usepackage[framemethod=tikz]{mdframed}
\usepackage{appendix}
\usepackage{dsfont}
\usepackage{enumerate}
\usepackage{bbm}
\usepackage{float}
\usepackage{enumitem}

\usepackage{algorithm}
\usepackage{verbatim}
\usepackage[noend]{algpseudocode}
\usepackage{xfrac}
\usepackage{thmtools}
\usepackage{thm-restate}
\usepackage{pifont}
\usepackage{soul}
\newcommand{\xmark}{\ding{53}}%
\newcommand{\cmark}{\ding{51}}%
\usepackage{array}
\usepackage{sidecap}
\usepackage{booktabs}
\newcolumntype{C}[1]{>{\centering\arraybackslash}p{#1}} 

\declaretheorem[name=Theorem]{theorem}
\declaretheorem[name=Lemma]{lemma}
\declaretheorem[name=Definition]{definition}

\declaretheorem[name=Example,style=definition]{example}

\newcommand{\norm}[1]{\left\lVert#1\right\rVert}
\definecolor{lightgray}{gray}{0.5}

\newcommand{\abs}[1]{\left|#1\right|}
\newcommand{\BigO}[1]{\ensuremath{\operatorname{\mathcal O}\left(#1\right)}}
\newcommand{\SmallO}[1]{\ensuremath{\operatorname{o}\left(#1\right)}}

\renewcommand{\vec}[1]{{\pmb{#1}}}
\newcommand{\card}[1]{\left|#1\right|}
\newcommand{\set}[1]{\left\{#1\right\}}

\newcommand{\spacedcdot}{\,\cdot\,}
\newcommand{\naturals}{\mathbb N}
\newcommand{\reals}{\mathbb{R}}
\newcommand{\x}{\vec{x}}
\newcommand{\dv}{\vec{\theta}}

\renewcommand{\u}{\vec{u}}
\newcommand{\X}{\mathcal{X}}
\newcommand{\I}{\mathcal{I}}
\renewcommand{\U}{\mathcal{U}}
\newcommand{\A}{\mathcal{A}}
\newcommand{\F}{\mathcal{F}}

\newcommand{\algoname}{\textsc{OHF}}
\newcommand{\R}{\mathcal{R}}
\newcommand{\umax}{u_{\mathrm{max}}}
\newcommand{\umin}{u_{\mathrm{min\vspace{-1em}}}}
\newcommand{\T}{\mathcal{T}}
\newcommand{\parentheses}[1]{\left(#1\right)}
\newcommand\vGroup[2]{\underset{#2}{\underbrace{#1} } }

\newcommand{\brackets}[1]{\left[#1\right]}
\newcommand{\diam}[1]{\,\mathrm{diam}\parentheses{#1}}
\renewcommand{\S}{\mathcal{S}}
\newcommand{\M}{\mathcal{M}}
\renewcommand{\C}{\mathcal{C}}

\newcommand{\E}{\mathcal{E}}
\renewcommand{\P}{\mathcal{P}}

\newcommand{\regret}{\mathfrak{R}_T \parentheses{F_{\alpha}, \vec\A}}
\newcommand{\VT}{\mathbb{V}_{\T}}
\newcommand{\WT}{\mathbb{W}_{\T}}
\newcommand{\barregret}{\bar{\mathfrak{R}}_T \parentheses{F_{\alpha}, \vec\A}}

\usepackage{array}
\newcolumntype{C}[1]{>{\raggedright\arraybackslash}p{#1}}

\newcommand{\SC}{\textsc{Cycle}}
\newcommand{\BT}{\textsc{Tree}}
\newcommand{\Grid}{\textsc{Grid}}
\newcommand{\Abilene}{\textsc{Abilene}}
\newcommand{\GEANT}{\textsc{GEANT}}
\newcommand{\lru}{\textsc{LRU}}
\newcommand{\lfu}{\textsc{LFU}}
\newcommand{\horizonalgoname}{\textsc{OHF}}
\newcommand{\slotalgoname}{\textsc{OSF}}

\newcommand{\OPT}{\textsc{OPT}}
\usepackage{titlesec}

\usepackage{tikz}
\usetikzlibrary{shapes.geometric,calc}

\newcommand\score[2]{

\pgfmathsetmacro\pgfxa{#1+1}

\tikzstyle{scorestars}=[circle, draw, inner sep=0.15em,anchor=west]

\begin{tikzpicture}[baseline]
  \foreach \i in {1,...,#2} {
    \pgfmathparse{(\i<=#1?"black!70":"white")}
    \edef\starcolor{\pgfmathresult}
    \draw (\i*1em,0) node[name=star\i,scorestars,fill=\starcolor]  {};
   }
  \pgfmathparse{(#1>int(#1)?int(#1+1):0}
  \let\partstar=\pgfmathresult
  \ifnum\partstar>0 
    \pgfmathsetmacro\starpart{#1-(int(#1))}
    \path [clip] (star\partstar.north west) rectangle 
    ($(star\partstar.south west)!\starpart!(star\partstar.south east)$);
    \fill (\partstar*1em,0) node[scorestars,fill=black!70]  {};
  \fi,

\end{tikzpicture}
}

\usepackage{soul}

\newcommand{\new}[1]{{#1}}

\usepackage{subcaption}

\title{Enabling Long-term  Fairness in Dynamic Resource Allocation}

\author{Tareq Si Salem}
\email{tareq.si-salem@inria.fr}
\affiliation{%
  \institution{Inria}
  \city{Sophia Antipolis}
  \country{France}
}
\author{George Iosifidis}
\email{G.Iosifidis@tudelft.nl}
\affiliation{%
  \institution{TU Delft}
  \city{Delft}
  \country{The Netherlands}
}

\author{Giovanni Neglia}
\email{giovanni.neglia@inria.fr}
\affiliation{%
  \institution{Inria}
  \city{Sophia Antipolis}
  \country{France}
}

\begin{abstract}
We study the fairness of dynamic resource allocation problem under the $\alpha$-fairness criterion. We recognize two different fairness objectives that naturally arise in this problem: the well-understood slot-fairness objective that aims to ensure fairness at every timeslot, and the less explored horizon-fairness objective that aims to ensure fairness across utilities accumulated over a time horizon.  We argue that horizon-fairness comes at a lower price in terms of social welfare.  We study horizon-fairness with the regret as a performance metric and show that vanishing regret cannot be achieved in presence of an unrestricted adversary. We propose restrictions on the adversary's capabilities corresponding to realistic scenarios and an online policy that indeed guarantees vanishing regret under these restrictions. We  demonstrate the applicability of the proposed fairness framework to a representative resource management problem considering a virtualized caching system where different caches cooperate to serve content requests.
\end{abstract}


\setcopyright{acmcopyright}
\acmJournal{POMACS}
\acmYear{2022} \acmVolume{6} \acmNumber{3} \acmArticle{46} \acmMonth{12} \acmPrice{15.00}\acmDOI{10.1145/3570606}

\received{August 2022}
\received[revised]{October 2022}
\received[accepted]{November 2022} 

\begin{CCSXML}
<ccs2012>
   <concept>
       <concept_id>10010405.10010481.10010484.10011817</concept_id>
       <concept_desc>Applied computing~Multi-criterion optimization and decision-making</concept_desc>
       <concept_significance>500</concept_significance>
       </concept>
   <concept>
       <concept_id>10003752.10003809.10010047.10010048</concept_id>
       <concept_desc>Theory of computation~Online learning algorithms</concept_desc>
       <concept_significance>500</concept_significance>
       </concept>
   <concept>
       <concept_id>10003752.10010070.10010099.10010100</concept_id>
       <concept_desc>Theory of computation~Algorithmic game theory</concept_desc>
       <concept_significance>300</concept_significance>
       </concept>
 </ccs2012>
\end{CCSXML}

\ccsdesc[500]{Applied computing~Multi-criterion optimization and decision-making}
\ccsdesc[500]{Theory of computation~Online learning algorithms}
\ccsdesc[300]{Theory of computation~Algorithmic game theory}

\keywords{Online Learning, Multi-timescale Fairness, Axiomatic Bargaining, Dynamic Resource Allocation}

\begin{document}

\maketitle
\section{Introduction}
Achieving fairness when allocating resources in communication and computing systems has been a subject of extensive research, 
and has been successfully applied in numerous practical problems. Fairness is leveraged to perform congestion control in the Internet~\cite{kellyORS98, mo2000fair}, to select transmission power  in multi-user wireless networks~\cite{srikantJSAC06, tassiulasFnT06}, and to allocate  multidimensional resources in cloud computing platforms~\cite{carleeToN13, baochuninfocom14, bonaldsigmetrics15}. Depending on the problem at hand, the criterion of fairness can be expressed in terms of how the service performance is distributed across the end-users, or in terms of how the costs are balanced across the servicing nodes. The latter case exemplifies the natural link between fairness and load balancing in  resource-constrained systems~\cite{loadbalacing1,loadbalacing2}. A prevalent fairness metric is $\alpha$-fairness, which encompasses the utilitarian principle (Bentham-Edgeworth solution~\cite{edgeworth1881mathematical}), proportional fairness (Nash bargaining solution~\cite{Nash1950}), max-min fairness (Kalai–Smorodinsky bargaining solution~\cite{kalai1975other}){, and, under some conditions, Walrasian equilibrium~\cite{iosifidis-sigmetrics15}.} All these fairness metrics have been used in different cases for the design of resource management mechanisms~\cite{radunovic2007unified, Nace2008}.


A common limitation of the above works is that they consider \emph{static} environments. That is, the resources to be allocated and, importantly, the users' utility functions, are fixed and known to the decision maker. This assumption is very often unrealistic for today's communication and computing systems. For instance, in small-cell mobile networks the user churn is typically very high and unpredictable, thus hindering the fair allocation of spectrum to cells~\cite{andrews5g}. Similarly, placing content files at edge caches to balance the latency gains across the served areas is non-trivial due to the non-stationary and fast-changing patterns of requests~\cite{paschosComMag16}. At the same time, the increasing virtualization of these systems introduces cost and performance volatility,  as extensive measurement studies have revealed~\cite{traverso2013temporal,leconte2016placing,elayoubi2015performance}. This uncertainty is exacerbated for services that process user-generated data (e.g., streaming data applications) where the performance (e.g., inference accuracy) depends also on  a priori unknown input data and  dynamically selected machine learning libraries~\cite{jose-conext,Liu2019Aug,Alipourfard2017Mar}. 

\subsection{Contributions}
This paper makes the next step towards enabling long-term fairness in dynamic systems. We consider a system that serves a set of agents $\mathcal I$, where a controller selects at each timeslot $t \in \naturals$ a resource allocation profile $\x_t$ from a set of eligible allocations~$\mathcal X$ based on  past agents' utility functions ${\u}_{t'}: \X \to \reals^\I$ for $t' < t$ and of $\alpha$-fairness function $F_{\alpha} : \reals^\I_{\geq 0} \to \reals$. The utilities might change due to unknown, unpredictable,  and (possibly) non-stationary perturbations that are revealed to the controller only after it decides $\x_t$. We employ the terms \emph{horizon-fairness}~(HF) and \emph{slot-fairness}~(SF) to distinguish the different ways fairness can be enforced  in a such time-slotted dynamic system. Under horizon-fairness, the controller enforces fairness on the aggregate utilities for a given time horizon~$T$, whereas under slot-fairness, it enforces fairness on the utilities at each timeslot separately. Both metrics have been studied in previous work, e.g., see~\cite{gupta2021individual,Liao2022Feb,jalota2022online,Sinclair2022} and the discussion in Section~\ref{s:related_work}. Our focus is on horizon-fairness, which raises novel technical challenges and subsumes slot-fairness as a special case.


We design the \emph{online horizon-fair}~(\algoname) policy by leveraging \emph{online convex optimization}~(OCO)~\cite{Hazanoco2016}, to handle this reduced-information setting under a powerful \emph{adversarial} perturbation model. \new{Adversarial analysis is a modeling technique to characterize a system's performance under unknown and hard to characterize exogenous parameters and has been recently successfully used to model caching problems (e.g., in~\cite{paschos2019learning, sisalem2021no, mhaisen2022online, paria2021texttt,bura2021learning,Li2021,salem2021accai}).
In our context}, the performance of a resource allocation policy $\vec \A$ is evaluated by the \emph{fairness regret}, which is defined as the difference between the $\alpha$-fairness, over the time-averaged utilities, achieved by a static optimum-in-hindsight (\emph{benchmark}) and the one achieved by the policy:
\begin{align}
    \regret \triangleq \sup_{ \set{\u_t}^T_{t=1} \in {{\U^T}}} \set{\max_{\x \in \X}  F_{\alpha}\left({\frac{1}{T}\sum_{t \in \T}\vec u_{t}(\x)}\right) -F_{\alpha}\left({\frac{1}{T}\sum_{t \in \T}\vec u_{t}(\x_t)}\right)}.
    \label{e:b_regret0}
\end{align}
If the fairness regret vanishes over time (i.e., $\lim_{T\to \infty}  \regret = 0$),  policy $\vec \A$ will attain the same fairness value as the static benchmark under any possible sequence of utility functions. A policy that achieves sublinear regret under these  adversarial conditions, can also succeed in more benign conditions where the perturbations are not adversarial, or the utility functions are revealed at the beginning of each slot.

The fairness regret metric~\eqref{e:b_regret0}  departs from the  template of~OCO. In particular, the scalarization of the vector-valued utilities, through the $\alpha$-fairness function, is not applied at every timeslot to allow for the controller to easily adapt its allocations, instead is only applied at the end of the time horizon $T$.  Our first result characterizes the challenges in tackling this learning problem. Namely, Theorem~\ref{theorem:impossibility} proves that, when utility perturbations are only subject to four mild technical conditions, such as in standard OCO, it is impossible to achieve vanishing fairness-regret. Similar negative results were obtained under  different setups of  primal-dual learning and online saddle point learning~\cite{mannor2009online,anderson2022lazy, rivera2018online}, but they have been devised for specific problem structures (e.g., online matrix games)  and thus do not apply to our setting.


In light of this negative result, we introduce additional \emph{necessary} conditions on the adversary to obtain a vanishing regret guarantee. Namely, the adversary can only induce perturbations to the time-averaged utilities we call budgeted-severity or partitioned-severity constrained. These conditions capture several practical utility patterns, such as non-stationary corruptions, ergodic and periodic inputs~\cite{Liao2022Feb,balseiro2022best,zhou2019robust, duchi2012ergodic}. We proceed to propose the \algoname{} policy which adapts dynamically the allocation decisions and provably achieves $\regret = o(1)$ (see Theorem~\ref{th:maintheorem}). 




The \algoname{} policy employs a novel learning approach that operates concurrently, and in a synchronized fashion, in a primal and a dual (conjugate) space.  Intuitively, \algoname{} learns the weighted time-varying utilities in a primal space, and learns the weights accounting for the global fairness metric in some dual space. To achieve this, we develop novel  techniques through a convex conjugate approach (see Lemmas~\ref{lemma:convex_conjugate},~\ref{l:recover_f}, and~\ref{l:saddle_problem} in the Appendix). 

Finally, we  apply our fairness framework to a  representative resource management
problem in virtualized caching systems where different caches cooperate by serving jointly the
received content requests. We evaluate  the performance of \algoname{} with its slot-fairness counterpart policy through numerical examples. We evaluate the price of fairness of \algoname,  which quantifies the efficiency loss due to fairness, across different network topologies and participating agents.  Lastly, we apply \algoname{} to a Nash bargaining scenario, a concept that has been widely used in resource allocation to distribute to a set of agents the utility of their cooperation~\cite{Boche2011,Iosifidis2017,Wenjie2009,Liang2017}.

\subsection{Outline of Paper}
The paper is organized as follows. The related literature is discussed in Section~\ref{s:related_work}. The definitions and background are provided in Section~\ref{s:fairness}. The adversarial model and the proposed algorithm are presented in Section~\ref{s:OHF}. Extensions to the fairness framework are provided in Section~\ref{s:extensions}.  The resource management
problem in virtualized caching systems application is provided in Section~\ref{s:experiments}. Finally, we conclude the paper and provide directions for future work in Section~\ref{s:conclusion}.

\section{Literature Review}
\label{s:related_work}

\subsection{Fairness in Resource Allocation}
Fairness has found many applications in wired and wireless networking~\cite{kellyORS98, mo2000fair,srikantJSAC06,tassiulasFnT06,altman2008generalized}, and cloud  computing platforms~\cite{carleeToN13, baochuninfocom14, bonaldsigmetrics15}.  Prevalent fairness criteria are the max-min fairness and proportional fairness, which are rooted in axiomatic bargaining theory, namely the Kalai–Smorodinsky~\cite{kalai1975other} and Nash bargaining solution~\cite{Nash1950}, respectively. On the other side of the spectrum, a controller might opt to ignore fairness and maximize the aggregate utility of users, i.e., to follow the \emph{utilitarian principle}, also referred to as the  Bentham-Edgeworth solution~\cite{edgeworth1881mathematical}. The \emph{Price of Fairness} (PoF)~\cite{bertsimas2011price} is now an established metric for assessing how much the social welfare (i.e., the aggregate utility) is affected when enforcing some fairness metric. Ideally, we would like this price to be as small as possible, bridging in a way these two criteria. Atkinson~\cite{ATKINSON1970244} proposed  the unifying  $\alpha$-fairness criterion which yields different fairness criteria based on the  value of~$\alpha \in \reals_{\geq 0}$, i.e.,  the utilitarian principle~($\alpha=0$), proportional fairness~($\alpha =1$), and max-min fairness~($\alpha\to \infty$). Due to the generality of the $\alpha$-fairness criterion, we use  it to develop our theory,  which in turn renders our results transferrable to all above fairness and bargaining problems. In this work, the PoF, together with the metric of fairness-regret, are the two criteria we use to characterize our fairness solution. 

\subsection{Fairness in Dynamic Resource Allocation} 
Several works consider slot-fairness in dynamic systems~\cite{jalota2022online,Sinclair2022, Talebi2018}. Jalota and Ye~\cite{jalota2022online} proposed a weighted proportional fairness algorithm for a system where new users arrive in each slot, having linear i.i.d. perturbed unknown utility functions at the time of selecting an allocation, and are allocated resources from an i.i.d. varying budget.  Sinclair et al.~\cite{Sinclair2022} consider a similar setup, but assume the utilities are known at the time of selecting an allocation, and the utility parameters (number of agents and their type) are drawn from some fixed known distribution. They propose an adaptive threshold policy, which achieves a target efficiency (amount of consumed resources' budget) and fairness tradeoff, where the latter is defined w.r.t. to an offline weighted proportional fairness benchmark.
Finally, Talebi and  Proutiere~\cite{Talebi2018} study dynamically arriving tasks that are assigned to a set of servers with unknown and stochastically-varying service rates. Using a stochastic multi-armed bandit model, the authors achieve proportional fairness across the service rates assigned to different tasks at each slot. All these important works, however, do not consider the more practical horizon-fairness metric where fairness is enforced throughout the entire operation of the system and not over each slot separately. 

Horizon-fairness has been recently studied through the lens of competitive analysis~\cite{kawase2021online, banerjee2022online, bateni2022fair}, where the goal is to design a policy that achieves online fairness within a constant factor from the fairness of a suitable benchmark. Kawase and Sumita~\cite{kawase2021online} consider the problem of  allocating arriving items irrevocably to one agent who has additive utilities over the items. The arrival of the items is arbitrary and can even be selected by an adversary. The authors consider known utility at the time of  allocation, and design policies under the max-min fairness criterion. Banerjee et al.~\cite{banerjee2022online} consider a similar problem  under the proportional fairness criterion, and they allow the policies to exploit available predictions. We observe that the competitive ratio guarantees, while theoretically interesting, may not be informative about the fairness of the actual approximate solution achieved by the algorithm for ratios different from one. For instance, when maximizing a Nash welfare function under the proportional fairness criterion, the solution achieves some axiomatic fairness properties~\cite{Nash1950} (e.g., Pareto efficiency, individual rationality, etc.), but this welfare function is  meaningless for ``non-optimal'' allocations~\cite{Sinclair2022}, i.e., a policy with a high competitive ratio  is not necessary less fair than a policy with a lower competitive ratio. For this reason, our work considers regret as a performance metric: when regret vanishes asymptotically, the allocations of the policy indeed achieve the exact same objective as the adopted benchmark.

\new{Altman et al.~\cite{altman2012multiscale} consider the $\alpha$-fairness problem in a dynamic resource allocation, and investigate  fairness enforced at different time scales (instantaneous and long-term). They consider known utilities at the time of selecting an allocation in a stationary setting. Lodi et al.~\cite{lodi2021fairness} also treat  fairness across different time scales (single-period and $T$-period) as an offline problem.  In this work, we make a similar distinction on the fairness criterion in the online setting, where we define the  slot-fairness and horizon-fairness.}

A different line of work~\cite{gupta2021individual,Liao2022Feb,Cayci2020,benade2018make,zeng2020fairness,sinclair2020sequential,baek2021fair} considers horizon-fairness through regret analysis. Gupta and Kamble~\cite{gupta2021individual} study individual fairness criteria that advocate similar individuals should be treated similarly. They extend the notion of individual fairness to online contextual decision-making, and  introduce: (1)~fairness-across-time and (2)~fairness-in-hindsight. Fairness-across-time criterion requires the treatment of individuals to be individually fair relative to the past as well as future, while  fairness-in-hindsight only requires individual fairness at the time of the decision. The utilities are known at the time of selecting an allocation and are i.i.d. and drawn from an unknown fixed distribution. Liao et al.~\cite{Liao2022Feb} consider a similar setup to ours, with a limited adversarial model and time-varying but {known} utilities, and focus on  proportional fairness.
They consider  adversarial perturbation added on a fixed item distribution where the  demand of items generally behaves predictably, but for some time steps, the demand behaves erratically. 
Our approach departs significantly from these interesting works in that we consider unknown utility functions, a broader adversarial model (in fact, as broad as possible while still achieving vanishing fairness regret), and by using the general $\alpha$-fairness criterion that encompasses all the above criteria as special cases. This makes, we believe, our \algoname{} algorithm applicable to a wider range of practical problems.  Table~\ref{tab:related_work} summarizes the differences between our contribution and the related works.

\begin{table}[t]
	\caption{Summary of related work under online fairness in resource allocation.}
\begin{footnotesize}
	\begin{center}
		\begin{tabular}{clc>{\centering\arraybackslash}p{4.6em} >{\centering\arraybackslash}p{4.6em} c}
			\hline
			\textbf{Paper} & \textbf{Criterion}&\textbf{HF/SF} & \textbf{Unknown utilities} & \textbf{Adversarial utilities} & \textbf{Metric}\\
			\hline
			\cite{jalota2022online} & Weighted proportional fairness & SF & \cmark& \xmark& Regret  \\
			\cite{Sinclair2022} & Weighted proportional fairness & SF &\xmark&\xmark& Envy, Efficiency\\
			\cite{Talebi2018} & Proportional fairness& SF &\xmark&\xmark &  Regret \\
			\cite{gupta2021individual} & Individual fairness& HF/SF &\xmark&\xmark& Regret  \\
			\cite{Liao2022Feb} & Proportional fairness& HF&\xmark& \cmark&  Regret \\
			\cite{Cayci2020} & $\alpha$-fairness &HF & \cmark& \xmark& Regret  \\
			\cite{benade2018make} & Envy-freeness & HF&\xmark&\cmark& Envy  \\
			\cite{zeng2020fairness}  & Weighted proportional fairness & HF&\xmark&\cmark& Envy, Pareto Efficiency  \\
			\cite{baek2021fair} & Proportional fairness & HF&\xmark&\xmark& Regret\\
			\cite{kawase2021online} & Max-Min fairness  & HF& \xmark &\cmark& Competitive ratio  \\
			\cite{banerjee2022online} & Proportional fairness & HF&\xmark& \cmark& Competitive ratio   \\
			\cite{bateni2022fair} & Proportional fairness & HF &\xmark&\xmark& Competitive ratio   \\
			\hline
		This work & Weighted $\alpha$-fairness & HF/SF & \cmark& \cmark& Fairness Regret\\
			\hline
		\end{tabular}
	\end{center}
	\end{footnotesize}
	\label{tab:related_work}
\end{table} 


\subsection{Online Learning}
Achieving horizon-fairness  in our setup requires technical extensions to the theory of OCO~\cite{Hazanoco2016}. The basic template of OCO-learning (in terms of resource allocation) considers that a decision maker selects repeatedly a vector $\x_t$ from a convex set $\X$, before having access to the $t$-th slot scalar utility function $u_t(\x)$, with the goal to maximize the aggregate utility $\sum^T_{t=1} u_t(\x_t)$. The  decision maker aims to have vanishing time-averaged regret, i.e., the time-averaged distance of the aggregate utility $\sum^T_{t=1} u_t(\x_t)$ from the aggregate utility of the optimal-in-hindsight allocation $\max_{\x \in \X}\sum_{t=1}^T u_t(\x)$ for some time horizon $T$. OCO models are robust, expressive, and can be tackled with several well-studied learning algorithms~\cite{Hazanoco2016,ShalevOnlineLearning, mcmahan2017survey}. 
However, none of those is suitable for the fairness problem at hand, as we need to optimize a global function $F_\alpha(\spacedcdot)$ of the time-averaged vector-valued utilities. 
This subtle change creates additional technical complications. Indeed, optimizing functions of time-averaged utility/cost functions in learning is an open and challenging problem. In particular, Even-Dar et al.~\cite{evan2009colt} introduce the concept of global functions in online learning, and devise a policy with vanishing regret using the theory of approachability~\cite{blackwell1956analog}. However, their approach can handle only norms as global functions, and this limitation is not easy to overcome: the authors themselves stress that characterizing when  a global function enables a vanishing regret is an open problem (see~\cite[Section~7]{evan2009colt}). Rakhlin et al.~\cite{rakhlin11} extend this work to non-additive global functions. However, the $\alpha$-fairness function considered in our work is not supported by their framework.  To generalize the results to $\alpha$-fairness global functions, we employ a convex conjugate approach  conceptually similar  to the approach taken in the work of Agrawal and Evanur~\cite{agrawal2014bandits} to obtain a regret guarantee  with a concave global function under a stationary setting and linear utilities. In this work, we consider an adversarial setting (i.e., utilities are picked by an adversary after we select an allocation) that encompasses general concave utilities, and this requires learning over the primal space as well as the dual (conjugate) space.

\section{Online Fairness: Definitions and Background}
\label{s:fairness}
\subsection{Static Fairness}
\label{s:fairness+static}
Consider a system $\S$ that serves a set of agents $\I$ by selecting allocations from the set of eligible allocations $\X$.\footnote{\new{ Appendix~\ref{appdendix:departingarriving} discusses the setting in which the set of agents $\I$ is unknown and agents can depart and arrive to the system.}} 
In the general case, this set is defined as the Cartesian product of agent-specific eligible allocations' set $\X_i$, i.e., $\X \triangleq \bigtimes_{i \in \I} \X_i$. We assume that each set $\X_i$ is convex. The utility  of each agent $i \in \I$ is a concave function $ u_i: \X \to \reals_{\geq 0}$, and depends, possibly, not only on $\x_i \in \X_i$, but on the entire vector $\x\in \X$.\footnote{For example, in TCP congestion control, the performance of each end-node depends not only on the rate that is directly allocated to that node, but also, through the induced congestion in shared links, by the rate allocated to other nodes~\cite{kellyORS98}. Similar couplings arise in wireless transmissions over shared channels~\cite{tassiulasFnT06}.}
The vector $\vec u(\x) \triangleq \parentheses{u_{i} (\x)}_{i \in \I} \in \U$ is the vectorized form of the agents' utilities, where  $\U$ is the set of possible utility functions.  The joint allocation $\x_{\star} \in \X$ is an $\alpha$-fair allocation for some $\alpha \in \reals_{\geq 0}$ if it solves the following convex problem:
\begin{align}
    \max_{\x \in \X} F_{\alpha}\parentheses{\vec u (\x)},\label{e:prob2}
\end{align}
where $F_{\alpha}$ is the $\alpha$-fairness criterion the system employs (e.g., when $\alpha = 1$, problem~\eqref{e:prob2} corresponds to an Eisenberg-Gale convex problem~\cite{eisenberg1959consensus}). The $\alpha$-fairness function is defined as follows~\cite{ATKINSON1970244}:
\begin{definition}
An $\alpha$-fairness  function $F_\alpha:\U\to \reals$ is parameterized by the inequality aversion parameter $\alpha \in \reals_{\geq 0}$, and it is given by \begin{align}
     F_\alpha(\u) &\triangleq \sum_{i \in \I} f_\alpha(u_i), &\text{where}\qquad f_\alpha (u) \triangleq \begin{cases}
    \frac{u^{1-\alpha} - 1}{1-\alpha},&  \text{ for $\alpha \in \reals_{\geq 0} \setminus \set{1}$},\\
    \log(u),&  \text{ for $\alpha = 1$},
    \end{cases}\label{e:alpha-fair}
\end{align}
for every $\u \in \U$. Note that $\U \subset \reals^\I_{\geq 0}$ for $\alpha <1$, and $\U \subset \reals^\I_{>0}$ for $\alpha \geq 1$.
\end{definition}
Note that we use the most general version of utility-based fairness where the fairness rule is defined w.r.t. to accrued utilities (as opposed to allocated resource, only), i.e., in our system $\S$, the utility vector $\u \in \U$ can be a function of the selected allocations in~$\X$.   The $\alpha$-fairness function is concave and component-wise increasing, and thus exhibits diminishing returns~\cite{bertsimas2012efficiency}. An increase in utility to a player with a low utility results in a higher $\alpha$-fairness objective. Thus, such an increase is desirable to the system controller. Moreover,  the rate at which 
the marginal increase diminishes 
is controlled by  $\alpha$, which is then called the
\emph{inequality aversion parameter}. 
\new{An allocation which maximizes the $\alpha$-fairness objective is always Pareto efficient~\cite{bertsimas2012efficiency}.}
\subsection{Online Fairness}
We consider 
the performance of the system~$\S$ is tracked over a time horizon spanning $T\in\naturals$ timeslots. At the beginning of each timeslot $t \in \T \triangleq \set{1,2, \dots, T}$, a policy selects an allocation $\x_t \in \X$ \emph{before} $\u_{t}: \X \to \reals^\I$ is revealed to the policy. The goal is to approach the performance of a properly-selected fair allocation benchmark. We consider the following two cases:


\paragraph{Slot-Fairness.} An offline benchmark in hindsight, with access to the utilities revealed at every timeslot $t \in \T$, can ensure fairness at every timeslot satisfying a \emph{slot-fairness} (SF) objective~\cite{jalota2022online,Sinclair2022, Talebi2018}. Formally, the benchmark selects the joint allocation $\x_{\star} \in \X$ satisfying 
\begin{align}
 \mathrm{SF}:\qquad   \x_\star \in \arg\max_{\x \in \X}\frac{1}{T} \sum_{t \in \T} F_{\alpha}\parentheses{\vec u_{t}(\x)}.\label{eq:sf_objective}
\end{align}
\paragraph{Horizon-Fairness.}  Enforcing fairness at every timeslot can be quite restrictive, and this is especially evident for large time horizons.  An alternative formulation is to consider that the agents can accept  a momentary violation of fairness  at a given timeslot $t \in \T$ as long as in the long run fairness over the total incurred utilities is achieved. Therefore, it is more natural (see Example~\ref{e:example}) to ensure a horizon-fairness criterion over the entire period $\T$.
Formally, the benchmark selects the allocation $\x_{\star} \in \X$ satisfying
\begin{align}
   \mathrm{HF}:\qquad   \x_{\star} \in \arg\max_{\x \in \X} F_{\alpha}\parentheses{\frac{1}{T}\sum_{t \in \T}\vec u_{t}(\x)}.\label{eq:hf_objective}
\end{align}
\paragraph{Price of fairness.} Bertsimas et al.~\cite{bertsimas2012efficiency} defined the \emph{price of fairness} (PoF) metric to quantify the efficiency loss due to fairness as  the   difference   between   the maximum system efficiency and the efficiency  under  the  fair  scheme. In the case of $\alpha$-fairness, it is defined for some utility set $\U$ as 
\begin{align}
   \mathrm{PoF} (\U; \alpha) \triangleq \frac{ \max_{\u \in \U}  F_{0} (\u) -   F_{0} \parentheses{\u_{\max, \alpha}}}{ \max_{\u \in \U}  F_{0} (\u) }, 
\end{align}
where $\u_{\max, \alpha} \in  \arg\max_{\u \in \U } F_{\alpha} (\u)$ and $ F_0(\u)=\sum_{i \in \I} u_i$ measures  the achieved social welfare. Note that by definition the utilitarian objective achieves maximum efficiency, i.e.,  $\mathrm{PoF} (\U; 0)=0$.  Naturally, in our online setting, the metric is extended as follows
\begin{align}
       \mathrm{PoF} (\X; \T; \alpha) \triangleq \frac{ \max_{\x\in \X} \sum_{t \in \T} F_{0} \parentheses{\u_t (\x)}-   \sum_{t \in \T} F_{0} \parentheses{\u_t (\x_\star)}}{  \max_{\x\in \X} \sum_{t \in \T} F_{0} \parentheses{\u_t (\x)}},
\end{align}
where $\x_\star$ is obtained through either SF~\eqref{eq:sf_objective} or HF~\eqref{eq:hf_objective}. We provide the following example to further motivate our choice of horizon-fairness as a performance objective. A similar argument is provided in~\cite[Example 7]{lodi2021fairness}.

\begin{example}
\label{e:example}
 Consider a system with two agents $\I = \set{1,2}$, an allocation set $\X = [0, x_{\max}]$ with $x_{\max} > 1$, $\alpha$-fairness criterion with $\alpha =1$, even $T \in \naturals$, and the following sequence of utilities $\{\u_{t} (x)\}^T_{t=1} = \set{\parentheses{1+x, 1-x}, \parentheses{1+x, 1+x},\dots}$. It can easily be verified that $\mathrm{PoF} = 0$ for HF objective~\eqref{eq:hf_objective} because the HF optimal allocation is $x_{\max}$ which matches the optimal allocation under the utilitarian objective. However, under the SF objective~\eqref{eq:sf_objective} we have $\mathrm{PoF} = \frac{x_{\max} - 0.5}{ x_{\max} + 2} \approx 1$ when $x_{\max}$ is large. Remark that the two objectives have different domains of definitions; in particular, the allocations in the set $[1, x_{\max}] \subset \X$ are unachievable by the SF objective because they would lead to $u_{t,2}(x) \leq 0$. The HF  objective achieves lower PoF (hence, larger aggregate utility), and it allows a much larger set of eligible allocations (in particular all the allocations in the set $\X$), as shown in Fig.~\ref{e:example_motivating}. Indeed, when the controller has the freedom to achieve fairness over a time horizon, there is an  opportunity for more efficient allocations during the system operation. This example provides intuition on the robustness and practical importance of the horizon-fairness objective. 
\begin{figure}[t]
    \centering
    \includegraphics[width =240pt]{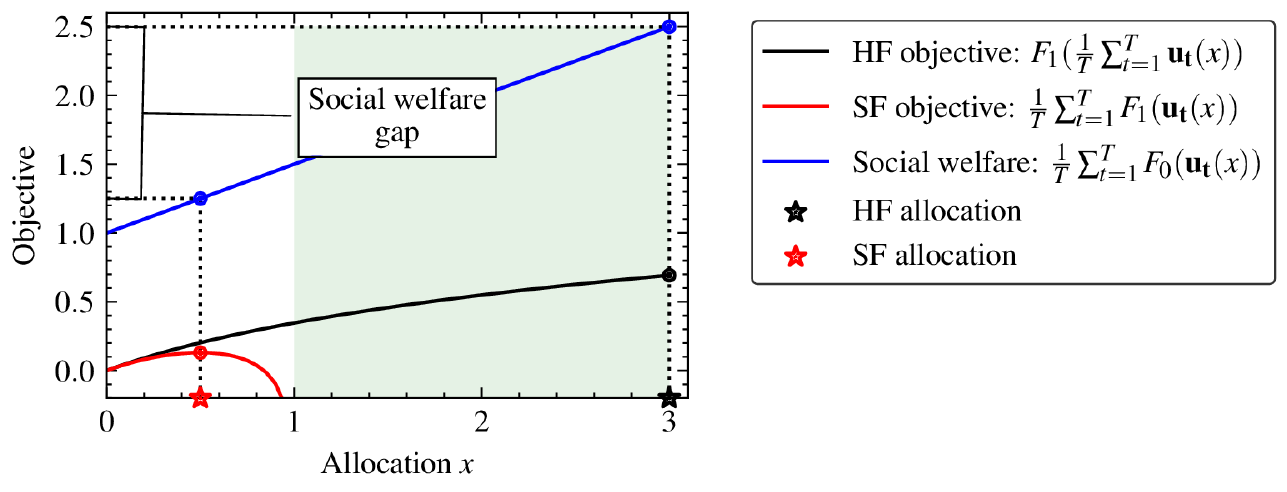}
        \caption{\protect\rule{0ex}{-10ex}Price of Fairness under HF and SF objectives for Example~\ref{e:example} for $x_{\max} =3$. The green shaded area provides the set of allocation unachievable by the SF objective but achievable by the HF objective.}
    \label{e:example_motivating}
\end{figure}
\end{example}

In the following section, we provide the description of an online learning model and our performance metric of interest under the HF objective.  
\subsection{Online Policies and Performance Metric}
\label{s:system_model}
The agents' allocations are determined by an online policy $\vec\A = \set{\A_1, \A_2, \dots, \A_{T}}$, i.e., a sequence of mappings. For every timeslot $t \in \T$, $\A_t: \X^t \times \U^t \to \X$ maps the sequence of past allocations~$\set{\x_s}^{t}_{s=1} \in \X^{t}$ and utility functions $\set{\u_s}^t_{s=1} \in \U^t$  to the next allocation $\x_{t+1} \in \X$. We assume the initial decision $\x_1$ is feasible (i.e., $\x_1 \in \X$). We measure the performance of policy $\vec\A$ in terms of the \emph{fairness regret}~\eqref{e:b_regret}, i.e., the difference between the fairness objective experienced by $\vec\A$ at the time horizon $T$ and that of the best static decision $\x_{\star} \in \X$ in hindsight.
We restate the regret metric here to streamline the presentation:
\begin{align}
    \regret \triangleq \sup_{ \set{\u_t}^T_{t=1} \in {{\U^T}}} \set{ F_{\alpha}\left({\frac{1}{T}\sum_{t \in \T}\vec u_{t}(\x_{\star})}\right) -F_{\alpha}\left({\frac{1}{T}\sum_{t \in \T}\vec u_{t}(\x_t)}\right)}.
    \label{e:b_regret}
\end{align}
where $\x_{\star}$ is the HF~\eqref{eq:hf_objective} allocation. 
If the fairness regret becomes negligible for large $T$, then  $\vec\A$ attains the same fairness objective as the optimal static decision with hindsight.
Note that under the utilitarian objective ($\alpha=0$), this fairness regret  coincides with the classic time-averaged regret in OCO~\cite{Hazanoco2016}. However, for general values of $\alpha \neq 0$, the metric is completely different, as we aim to compare $\alpha$-fair functions evaluated at time-averaged vector-valued utilities.
\section{Online Horizon-Fair (\algoname) Policy}
\label{s:OHF}
We first present in Section~\ref{s:adversarial_model}, the adversarial model considered in this work and provide a result on the  impossibility of guaranteeing vanishing fairness regret~\eqref{e:b_regret} under general  adversarial perturbations. We also provide a powerful family of adversarial perturbations for which a vanishing fairness regret guarantee is attainable. Secondly, we present the \algoname{} policy in Section~\ref{s:algorithm_description}  and provide its performance guarantee. Finally, we provide in Section~\ref{s:examples} a set of adversarial examples captured by our fairness framework.

\subsection{Adversarial Model and Impossibility Result}
\label{s:adversarial_model}
We begin by introducing formally the adversarial model that characterizes the utility perturbations. In particular, we consider $\vec\delta_t(\x) \triangleq { \left( \frac{1}{T}\sum_{s \in \T}\u_s(\x) \right) - \u_t(\x)}$ to quantify how much the adversary \emph{perturbs} the average utility  by selecting a utility function $\u_t$ at timeslot $t \in \T$. Recall that  $\x_{\star} \in \X$ denotes  the optimal allocation under HF objective~\eqref{eq:hf_objective}. We denote by $\Xi(\T)$ the set of all possible decompositions of $\T$ into sets of contiguous timeslots, i.e., for every $\set{\T_1, \T_2, \dots, \T_K} \in \Xi(\T)$ it holds $\T= \dot \bigcup_{k \in \set{1,2,\dots, K}} \T_k$ and  $\max \T_k < \min \T_{k+1}$ for $k \in \set{1,2, \dots, K-1}$. We define two types of adversarial perturbations:
 \begin{align}
     &\!\!\!\!\!\text{Budgeted-severity: }&& \!\!\!\!\VT\!\triangleq\!\!\!\!\!\!  \sup_{ \set{\u_t}^T_{t=1} \in {{\U^T}}} \set{\sum_{t \in \T} \sum_{i \in \I} \abs{\delta_{t,i} (\x_{\star})} },\label{e:adv1} \\
       &\!\!\!\!\!\text{Partitioned-severity: }&& \!\!\!\!\WT \! \triangleq\!\!\!\!\!\! \sup_{ \set{\u_t}^T_{t=1} \in {{\U^T}}} \! \set{ \inf_{\substack{\set{\T_1,\T_2,\dots,\T_K}\\ \in\, \Xi(\T)}}\! \set{ \sum^K_{k=1} \sum_{i \in \I} \abs{\sum_{t \in \T_k}\! \!  \delta_{t,i}(\x_{\star})}\! +\! \sum^K_{k=1}\! \frac{\card{\T_k}^2}{\sum_{k' < k } \card{\T_k}+1}}}\label{e:adv2}.
 \end{align}
Our result in Theorem~\ref{th:maintheorem} implies that when either  $\VT$ or $\WT$ grows sublinearly in the time horizon (i.e., the perturbations satisfy at least one of these two conditions),  the regret of \algoname{} policy in Algorithm~\ref{alg:primal_dual_ogaogd} vanishes over time.  \new{We provide a detailed description of conditions~\eqref{e:adv1} and \eqref{e:adv2} below.}

The \emph{budgeted-severity} $\VT$ in Eq.~\eqref{e:adv1} bounds the total amount of perturbations of the time-averaged utility. When  $\VT = 0$ the adversary is only able to select a fixed function, otherwise the adversary is able to select time-varying utilities, while  keeping the total deviation no more than $\VT$. Moreover, the adversary is allowed to pick opportunely the timeslots to maximize performance degradation for the controller. This model is similar to the \emph{adversarial corruption} setting considered in~\cite{Liao2022Feb,balseiro2022best}, and it captures realistic scenarios where the utilities incurred at different timeslots are predictable, but can be perturbed for some fraction of the timeslots.
For instance, Internet traffic may experience spikes due to breaking news or other unpredictable events~\cite{esfandiari2015online}. 

The \emph{partitioned-severity} $\WT$ \new{in Eq.~\eqref{e:adv2}} may at first be less easy to understand than budgeted-severity condition~\eqref{e:adv1}, but is equally important from a practical point of view. For simplicity, consider a uniform decomposition of the timeslots, i.e., $\T_k = M$ for every $k \in \set{1,2,\dots, T/M}$ assuming w.l.g. $M$ divides $T$. Then the r.h.s.  term in Eq.~\eqref{e:adv2} can be bounded as follows:
\begin{align}
\sum^{T/M}_{k=1} \frac{\card{\T_k}^2}{\sum_{k' < k } \card{\T_k}+1}  = \sum^{T/M}_{k=1} \frac{M^2}{M (k-1)+1} =\BigO{M^2 + M \log(T / M)}.\label{e:uniform-partitioned-severity}
\end{align}
Hence, when $M = o(\sqrt{T})$  it holds $\sum^{T/M}_{k=1} \frac{\card{\T_k}^2}{\sum_{k' < k } \card{\T_k}+1} = o(T)$. Since this term grows sublinearly in time, it remains to characterize the growth of the l.h.s. term $ \sum^K_{k=1} \sum_{i \in \I} \abs{\sum_{t \in \T_k}  \delta_{t,i} (\x_{\star})}$  in Eq.~\eqref{e:adv2}. This term is related to the perturbations selected by the adversary, however the absolute value is only evaluated at the end of each contiguous subperiod $\T_k$, i.e., the positive and negative deviations from the average utilities can cancel out. For example, a periodic selection of utilities from some set with cardinality $M$ would have zero deviation for this term. This type of adversary is similar to the periodic adversary considered in~\cite{duchi2012ergodic,balseiro2022best}, but also includes adversarial selection of utilities from some finite set (see  Example~\ref{example:periodic} in Section~\ref{s:examples}). The partitioned-severity adversary can model  real-life applications that exhibit seasonal properties, e.g., the traffic may be completely different throughout the day, but daily traffic is self-similar~\cite{zhou2019robust}. This condition also unlocks the possibility to obtain high probability guarantees under stochastic utilities (see Corollary~\ref{corollary:stochastic}).


We formally make the following assumptions:
\begin{enumerate}[label=(A\arabic*)]
    \item  \label{a:1}  The allocation set $\X$ is convex with diameter  $\diam{\X} < \infty$.
    \item The utilities are bounded, i.e., $\u_{t}(\x)  \in \brackets{\umin, \umax}^{\I} \subset \reals^{\I}$ for every $t \in \T$. 
    \item The supergradients of the utilities are bounded over $\X$, i.e., it holds $\norm{\vec g}_2 \leq L_{\X} < \infty$ for any $\vec g \in \partial_{\x} u_{t,i}(\x)$ and  $\x \in \X$.  
    \item\label{a:4} The average utility of the optimal allocation~\eqref{eq:hf_objective} is bounded such that $\frac{1}{T} \sum_{t \in \T}\u_t(\x_{\star}) \in \brackets{u_{\star, \min}, u_{\star, \max}}^{\I} \subset \reals^{\I}_{>0}$.
    \item\label{a:5} The adversary is restricted to select utilities such that 
\begin{align}
    \min\set{\VT, \WT} = o(T).
\end{align}
\end{enumerate}
We first show that an adversary solely satisfying the mild assumptions~\ref{a:1}--\ref{a:4} can arbitrarily degrade the performance of any policy $\vec\A$. Formally, we have the following negative result:
\begin{theorem}
When Assumptions~\ref{a:1}--\ref{a:4} are satisfied, there is no online policy $\vec\A$ attaining $\regret  = \SmallO{1}$  for $\card{\I} > 1$ and $\alpha > 0$. Moreover, there exists an adversary where Assumption~\ref{a:5} is necessary for $\regret=o(1)$.
\label{theorem:impossibility}
\end{theorem}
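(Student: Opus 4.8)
The plan is to prove both claims by a constructive lower bound. It suffices to treat $\card{\I}=2$ (for $\card{\I}>1$ I would freeze the utilities of the remaining agents at a constant balanced value, so they contribute an additive constant to $F_{\alpha}$), take a scalar allocation set $\X=[0,1]$ that splits one unit of service between the two agents, and fix any $\alpha>0$. The engine of the argument is the \emph{strict} concavity of $F_{\alpha}$: for $\alpha>0$ the restriction of $F_{\alpha}$ to a segment $\set{\u:\,u_1+u_2=s}$ is maximized only at the balanced point $u_1=u_2$, and stays below that maximum by a fixed positive amount whenever $\abs{u_1-u_2}$ exceeds a constant. Hence it is enough to exhibit a utility sequence for which the hindsight benchmark $\x_{\star}$ attains a \emph{balanced} time-averaged utility while every online policy is forced into a time-average that is imbalanced by an amount not vanishing with $T$. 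This is also exactly where the case hypotheses enter: for $\alpha=0$ the objective $F_{0}$ is linear and the metric reduces to standard OCO (so $\regret=\SmallO{1}$ is achievable), and for $\card{\I}=1$ the scalar, monotone $f_{\alpha}$ inherits the OCO guarantee; only the coupled, strictly concave regime can sustain a constant gap.

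I would then specify the adversary. Because $\u_t$ is revealed only \emph{after} $\x_t$ is committed, and because $\regret$ is a supremum over utility sequences evaluated against a fixed (hence, for a deterministic policy, fully predictable) policy, the worst sequence may be selected as a response to the $\x_t$ the policy will play. I would draw the responses from a two-element family of affine utilities — asymmetric in how steeply they couple the allocation to each agent, e.g. $\u^{+}(x)=(x,\,1-x)$ versus $\u^{-}(x)=(x/2,\,1-x/2)$ — chosen so that (i) the realized vector $\u_t(\x_t)$ is imbalanced in a direction the adversary can steer, and (ii) for \emph{every} realized pattern of $\pm$ responses there is a single fixed allocation (depending on the realized counts, hence available only in hindsight) whose induced time-average is exactly $(1/2,1/2)$. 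Evaluating the benchmark there places $F_{\alpha}$ at the balanced maximum, while a case analysis on the response rule shows the online time-averaged imbalance cannot be driven below a constant $c>0$; combined with the concavity estimate this yields $\regret\ge c'>0$ uniformly in $T$, ruling out $\regret=\SmallO{1}$. For the ``moreover'' clause I would reuse the same instance scaled so that the induced perturbations $\vec\delta_t(\x_{\star})$ satisfy $\min\set{\VT,\WT}=\Theta(T)$ on precisely this sequence, exhibiting an adversary for which dropping \ref{a:5} is fatal; since Theorem~\ref{th:maintheorem} restores $\regret=\SmallO{1}$ once \ref{a:5} holds, this shows \ref{a:5} is necessary.

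I expect the main obstacle to be ruling out a \emph{safe} allocation and handling \emph{adaptivity}. In naive symmetric constructions (pure reflections $\u^{+}(x)=(x,1-x)$, $\u^{-}(x)=(1-x,x)$) the central allocation $x=1/2$ balances both responses simultaneously, so the policy matches the benchmark and the regret collapses to zero; moreover any continuous response rule $x\mapsto m(x)$ admits a fixed point by Brouwer, giving the learner a robust action. The asymmetric family is meant to break this, forcing the per-response optimal allocations to be disjoint so that no fixed action is robust, while still leaving a balanced aggregate for the benchmark. The genuine tension — the responses must stay unpredictable to the causal learner yet cancel into a balanced hindsight average — is what the nonlinearity of $F_{\alpha}$ makes exploitable, but certifying that an \emph{adaptive} policy (which may use the revealed past utilities to track and to cancel early imbalance against later slots) still cannot approach the balanced average is the delicate step; I would establish it by a potential/invariant argument on the running per-agent aggregates, and pass to randomized policies by replacing the deterministic reaction with a distribution over sequences and a Yao-type averaging.
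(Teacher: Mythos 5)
There is a genuine gap at exactly the point you flag as ``the delicate step,'' and for your specific construction it resolves in the policy's favor, not the adversary's. Take your family $\u^{+}(x)=(x,1-x)$, $\u^{-}(x)=(x/2,1-x/2)$ and let $S_t=\sum_{s\le t}c_sx_s$ with $c_s\in\set{1/2,1}$ the adversary's per-slot choice, so the policy's time-averaged first-coordinate utility is $S_T/T$ and (since $u_1+u_2\equiv 1$ for both responses) the regret is controlled by $\abs{S_T/T-1/2}$. Consider the bang-bang tracking policy with $D_t\triangleq S_t-t/2$: play $x_t=1$ if $D_{t-1}\le 0$ (the increment $c_t-1/2$ is then in $\set{0,1/2}$, so $D$ cannot decrease and increases by at most $1/2$) and $x_t=0$ if $D_{t-1}>0$ (the increment is $-1/2$ regardless of $c_t$). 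By induction $\abs{D_T}\le 1/2$ for \emph{every} adaptive response sequence, hence $\bar u_1=1/2+\BigO{1/T}$ and the fairness regret vanishes. The underlying problem is structural, not a matter of tuning the two affine responses: the quantity the policy must control is a \emph{running aggregate}, and aggregates are easy to steer by feedback even when no single allocation is robust per slot; ``the per-response optimal allocations are disjoint'' does not imply the adaptive aggregate-balancing game is lost. Your proposed ``potential/invariant argument'' is precisely the device that certifies the policy wins here.

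The paper's proof uses a different mechanism that sidesteps this: the adversary is \emph{non-reactive on a long common prefix} (utilities $(1+x,2-x)$ for the first $\gamma_T T$ slots under both strategies) and only forks afterwards into one of two \emph{constant} tails, $(1,1)$ or $(2,0)$, on which the policy has no influence. The only statistic that matters, $\psi_T=\frac{1}{T}\sum_{t\le\gamma_T T}x_t$, is already committed when the fork is revealed, and the two tails require well-separated values of $\psi_T$ ($\gamma_T-0.5$ apart would be needed to balance both); averaging the regret over the two strategies and minimizing over $\psi_T$ gives the constant lower bound $2f_\alpha(1+0.5\gamma_T)-f_\alpha(1.5)-f_\alpha(0.5+\gamma_T)>0$ for $\gamma_T\le 1-\epsilon$ by strict concavity of $f_\alpha$ (your correct observation about where $\alpha>0$ and $\card{\I}>1$ enter). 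This two-point oblivious argument also handles randomized policies without Yao. Your ``moreover'' step inherits the gap since it reuses the same instance; the paper instead verifies directly that its construction has $\VT=\Omega(T)$ and $\WT=\Omega(T)$ whenever $\gamma_T\le1-\epsilon$, and that Assumption~(A5) holds exactly in the regime $\gamma_T=1-o(1)$ where vanishing regret is recoverable. To repair your proof you would need to relocate the adversary's distinguishing information to a portion of the horizon where the policy can no longer affect the relevant aggregate --- which is essentially the paper's construction.
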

The proof can be found in Appendix~\ref{proof:impossibility}. We design an adversary with a choice over two sequences of utilities against two agents. We show that no policy can have vanishing fairness regret w.r.t. the time horizon under both sequences.

\begin{algorithm}[t]
\caption{\algoname{} policy}
\label{alg:primal_dual_ogaogd}
\begin{algorithmic}[1]
	\begin{footnotesize}
  \Statex \textbf{Require: } $\X$, $\alpha \in \reals_{\geq 0}$, $ \brackets{u_{\star, \min}, u_{\star, \max}}$
\State $\Theta \gets\brackets{-1/{u^\alpha_{\star, \min}},- 1/{u^\alpha_{\star, \max}}}^\I$ \Comment{Initialize the dual (conjugate) subspace}
    \State $\x_1 \in \X$; $\dv_1\in \Theta$; \Comment{Initialize allocation $\x_1$ and dual decision $\dv_1$}
\For{$t \in \T$} 
\State Reveal $\Psi_{t,\alpha}(\dv_t, \x_t) = \parentheses{-F_{\alpha}}^\star(\dv_t) - \dv_t \cdot \vec u_t (\x_t)$ \Comment{Incur reward $\Psi_{t,\alpha}(\dv_t, \underline{\x_t})$ and loss $\Psi_{t,\alpha}(\underline{\dv_t}, \x_t)$}

\State$\vec g_{\X, t} \in \partial_{\x} \Psi_{t,\alpha}(\dv_t, \x_t) = \sum_{i \in \I} \theta_{t,i} \partial_{\x} u_{t,i}$ \Comment{Compute supergradient  $\vec g_{\X, t}$ at $\x_t$ of reward $\Psi_{t,\alpha}(\dv_t, \spacedcdot)$}
\State $\vec g_{\Theta, t} =\nabla_{\vec \theta} \Psi_{t,\alpha}(\dv_t, \x_t) = \parentheses{\parentheses{-\theta_{t,i}}^{-1/\alpha} - \vec u_t (\x_t)}_{i \in \I}$\Comment{Compute gradient  $\vec g_{\Theta, t}$ at $\dv_t$ of loss $\Psi_{t,\alpha}(\spacedcdot, \x_t)$} 
\State $\eta_{\X, t} = {\diam{\X}} / {\sqrt{\sum^t_{s=1} \norm{\vec g_{\X, s}}^2_2}}$; $\eta_{\Theta,t} = {\alpha \umin^{-1-1/\alpha}}/{t}$ \Comment{Compute adaptive learning rates}
\State $\x_{t+1} = \Pi_{\X}\parentheses{\x_t  + \eta_{\X, t} \vec g_{\X, t}}$; $\dv_{t+1} =\Pi_{\Theta} \parentheses{\dv_t  - \eta_{\Theta, t} \vec g_{\Theta, t}}$ \Comment{Compute a new  allocation and dual decision}
\EndFor
\end{footnotesize}
\end{algorithmic}
\end{algorithm}

\subsection{\algoname{} Policy}
\label{s:algorithm_description}
Our policy employs a convex-concave function, composed of a convex conjugate term that tracks the global fairness metric in a dual (conjugate) space, and a weighted sum of utilities term that tracks the appropriate allocations in the primal space. This function is used by the policy to compute a gradient and a supergradient to adapt its internal state. In detail, we define the function $\Psi_\alpha: \Theta \times \X \to \reals $  given by  \begin{align}
    \Psi_{t,\alpha} (\dv, \x) \triangleq \parentheses{-F_\alpha}^\star(\vec \theta) - \vec \theta \cdot \vec u_t(\x),
\end{align} 
where $\Theta= \brackets{-1/{u^\alpha_{\star, \min}},- 1/{u^\alpha_{\star, \max}}}^\I \subset \reals_{<0}^\I$ is a subspace of the dual (conjugate) space, and $\parentheses{-F_\alpha}^\star$ is the \emph{convex conjugate} (see Definition~\ref{def:conjugate} in Appendix) of $-F_{\alpha}$ given by for any $\dv \in \Theta$
\begin{align}
   \parentheses{-F_{\alpha}}^\star(\vec \theta) &= \begin{cases}
    \sum_{i \in \I} \frac{\alpha(-\theta_i)^{1-1/\alpha} - 1}{1-\alpha} &  \text{ for $\alpha \in \reals_{\geq 0} \setminus \set{1}$}, \\ 
      \sum_{i \in \I}  - \log(-\theta_i) - 1   &\text{ for $\alpha  = 1$}.
    \end{cases}
\end{align}
The policy is summarized in Algorithm~\ref{alg:primal_dual_ogaogd}.  The algorithm only requires as input: the set of eligible allocations $\X$, the $\alpha$-fairness parameter in $\reals^\I_{\geq 0}$, and the range $\brackets{u_{\star, \min}, u_{\star, \max}}$ of  values of the  average utility obtained by the optimal allocation~\eqref{eq:hf_objective}, i.e., $\frac{1}{T} \sum_{t \in \T}\u_t(\x_{\star}) \in \brackets{u_{\star, \min}, u_{\star, \max}}^{\I} \subset \reals^{\I}_{>0}$. We stress that the target time horizon $T$ is \emph{not} an input to the policy. 
\new{The utility bounds $u^\alpha_{\star, \min}$ and $u^\alpha_{\star, \max}$ depend on the specific application. For example, 
for the virtualized caching system considered in Section~\ref{s:experiments}, one could simply pick a small enough $\epsilon >0$ as $u^\alpha_{\star, \min}$, and the maximum batch size weighted by the largest retrieval cost in the network as  $u^\alpha_{\star, \max}$  (see Eq.~\eqref{e:utility_expression}). However, if prior information is available to tighten this range, the performance of the algorithm is ameliorated, as reflected in the regret bound in Eq.~\eqref{e:th:u1}.}   

The policy uses its input to initialize the dual (conjugate) subspace $\Theta = \brackets{-1/{u^\alpha_{\star, \min}},- 1/{u^\alpha_{\star, \max}}}^\I$, an allocation $\x_1 \in \X$, and a dual decision $\dv_1 \in \Theta$ (lines~1--2 in Algorithm~\ref{alg:primal_dual_ogaogd}). At a given timeslot $t \in \T$, the allocation $\x_t$ is selected; then a vector-valued utility $\vec u_t (\spacedcdot)$ is revealed and in turn  $\Psi_{t, \alpha} (\spacedcdot, \spacedcdot)$ is  revealed to the policy (line 4 in Algorithm~\ref{alg:primal_dual_ogaogd}). The supergradient $\vec g_{\X,t}$ of $\Psi_{t, \alpha} (\dv_t, \spacedcdot)$ at point $\x_t \in \X$, and the gradient $\vec g_{\Theta, t}$  of $\Psi_{t, \alpha} (\spacedcdot, \x_t)$ at point $\dv_t \in \Theta$ are computed (lines~5--6 in Algorithm~\ref{alg:primal_dual_ogaogd}). The policy then finally performs an adaptation of its state variables $(\x_t, \dv_t)$ through a descent step in the dual space and an ascent step in the primal space through online gradient descent (OGD) and online gradient ascent (OGA) policies,\footnote{Note that a different OCO policy can be used  as long as it has a no-regret guarantee, e.g., online mirror descent (OMD), follow the regularized leader (FTRL), or follow the perturbed leader (FTPL)~\cite{Hazanoco2016, mcmahan2017survey}; moreover, one could even incorporate optimistic versions of such policies~\cite{rakhlin2013online}, to improve the regret rates when the controller has access to accurate predictions.} respectively (line 8 in Algorithm~\ref{alg:primal_dual_ogaogd}).  The learning rates (step size) used are  ``self-confident''~\cite{AUER200248} as they depend on the experienced gradients. Such a learning rate schedule is compelling because it can adapt to the adversary and provides tighter regret guarantees for ``easy'' utility sequences; moreover, it allows attaining an \emph{anytime} regret guarantee, i.e., a guarantee holding for any time horizon $T$.  
In particular, \algoname{} policy in Algorithm~\ref{alg:primal_dual_ogaogd} enjoys the following fairness regret guarantee.
\begin{theorem}
Under assumptions~\ref{a:1}--\ref{a:5}, \algoname{} policy in Algorithm~\ref{alg:primal_dual_ogaogd} attains the following fairness regret guarantee:
\begin{align}
   \regret &\le  \sup_{ \set{\u_t}^T_{t=1} \in {{\U^T}}} \set{{\frac{1.5\diam{\X}}{T}\sqrt{\sum_{t \in \T}\! \norm{\vec g_{\X, t}}^2_2}}\!  +\! \sum^T_{t=1}\!\frac{\alpha  \norm{\vec g_{\Theta, t}}^2_2}{2 u_{\star,\min}^{1 + \frac{1}{\alpha }}  T t}}+\mathcal{O}\parentheses{ \frac{\min\set{\VT, \WT}}{T}} \label{e:th:u1}\\
      &\leq   \frac{1.5\diam{\X} L_{\X}}{{u^{\alpha}_{\star, \min}} \sqrt{T}} + \frac{\alpha L^2_{\Theta} (\log(T) + 1)}{u_{\star,\min}^{1 + \frac{1}{\alpha }} T} +\BigO{\frac{\min{\set{\VT, \WT}}}{T} } \label{e:th:u2}\\
      &=    \BigO{\frac{1}{\sqrt{T}} + \frac{\min\set{\VT, \WT}}{T}} = o(1).\label{e:th:u3}
    \end{align}
\label{th:maintheorem}
\end{theorem}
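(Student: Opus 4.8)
The plan is to convert the fairness regret into the regret of an online convex--concave saddle problem via Fenchel duality, and then to split it into a primal term, a dual term, and an adversarial-drift term. First I would invoke the conjugate representation (Lemmas~\ref{lemma:convex_conjugate} and~\ref{l:recover_f}): since $-F_\alpha$ is closed and convex, $F_\alpha(\vec z)=\min_{\dv}\set{(-F_\alpha)^\star(\dv)-\dv\cdot\vec z}$, and Assumption~\ref{a:4} guarantees that the minimiser $\nabla(-F_\alpha)(\vec z)$ lies inside the box $\Theta$ for the averages of interest. Writing $\bar{\vec u}_\star\triangleq\frac1T\sum_t\vec u_t(\x_\star)$ and $\bar{\vec u}\triangleq\frac1T\sum_t\vec u_t(\x_t)$, this yields the \emph{exact} identity
\begin{align}
\regret=\frac1T\sum_{t\in\T}\Psi_{t,\alpha}(\tilde\dv,\x_\star)-\frac1T\sum_{t\in\T}\Psi_{t,\alpha}(\dv^\sharp,\x_t),
\end{align}
where $\tilde\dv=\nabla(-F_\alpha)(\bar{\vec u}_\star)$ and $\dv^\sharp=\nabla(-F_\alpha)(\bar{\vec u})$ are the (in-$\Theta$) dual minimisers associated with the benchmark and the played trajectory. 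The crucial point is that \emph{both} terms are exact: no approximation is incurred in passing to the dual space, and this is what ultimately makes the bound tight.

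Next I would telescope through the running iterates $(\dv_t,\x_t)$ produced by Algorithm~\ref{alg:primal_dual_ogaogd}, splitting the right-hand side into three sums that chain back to the identity above. The middle sum $\frac1T\sum_t\parentheses{\Psi_{t,\alpha}(\dv_t,\x_\star)-\Psi_{t,\alpha}(\dv_t,\x_t)}$ is exactly the regret of the primal ascent on the concave rewards $\Psi_{t,\alpha}(\dv_t,\spacedcdot)$ against the fixed comparator $\x_\star$; the self-confident OGA step size gives $\frac{1.5\diam{\X}}{T}\sqrt{\sum_t\norm{\vec g_{\X,t}}_2^2}$, and since the dual weights live in $\Theta$ one has $\norm{\vec g_{\X,t}}_2=\BigO{L_{\X}\,u^{-\alpha}_{\star,\min}}$, so this term is $\BigO{1/\sqrt T}$. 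The last sum $\frac1T\sum_t\parentheses{\Psi_{t,\alpha}(\dv_t,\x_t)-\Psi_{t,\alpha}(\dv^\sharp,\x_t)}$ is the regret of the dual descent on the losses $\Psi_{t,\alpha}(\spacedcdot,\x_t)$ against $\dv^\sharp$; here the decisive structural fact is that $(-F_\alpha)^\star$ is strongly convex on the compact box $\Theta$ (its Hessian is diagonal with entries bounded below on $\Theta$), so OGD with the $1/t$ rate $\eta_{\Theta,t}$ yields the logarithmic bound $\frac1T\sum_t\frac{\alpha\norm{\vec g_{\Theta,t}}_2^2}{2u_{\star,\min}^{1+1/\alpha}\,t}=\BigO{\log T/T}$.

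The remaining sum $\frac1T\sum_t\parentheses{\Psi_{t,\alpha}(\tilde\dv,\x_\star)-\Psi_{t,\alpha}(\dv_t,\x_\star)}$ carries all the adversarial difficulty and is where I expect the real work. Expanding $\Psi$ and substituting $\vec u_t(\x_\star)=\bar{\vec u}_\star-\vec\delta_t(\x_\star)$, each summand equals $\parentheses{\phi(\tilde\dv)-\phi(\dv_t)}+(\tilde\dv-\dv_t)\cdot\vec\delta_t(\x_\star)$, where $\phi(\dv)\triangleq(-F_\alpha)^\star(\dv)-\dv\cdot\bar{\vec u}_\star$ is the benchmark's dual objective. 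Because $\tilde\dv$ \emph{minimises} $\phi$, the first bracket is non-positive and drops out, leaving $\frac1T\sum_t(\tilde\dv-\dv_t)\cdot\vec\delta_t(\x_\star)$. I would then bound this in two ways and take the minimum: directly, by $\frac{\diam{\Theta}}{T}\sum_t\sum_i\abs{\delta_{t,i}(\x_\star)}\le\BigO{\VT/T}$; and, for the partitioned case, by fixing a decomposition $\set{\T_k}\in\Xi(\T)$, pulling the slowly drifting $\dv_t$ out of each block so that the perturbations cancel as $\abs{\sum_{t\in\T_k}\delta_{t,i}(\x_\star)}$, at the cost of a drift penalty. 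Since the $1/t$ dual rate bounds the in-block drift by $\BigO{\card{\T_k}/(\sum_{k'<k}\card{\T_{k'}}+1)}$, the penalty accrues as $\sum_k\card{\T_k}^2/(\sum_{k'<k}\card{\T_{k'}}+1)$; taking the infimum over $\Xi(\T)$ recovers exactly $\BigO{\WT/T}$. Combining the three pieces gives~\eqref{e:th:u1}; bounding the gradient norms by $L_{\X},L_{\Theta}$ and using $\sum_{t\le T}1/t\le1+\log T$ gives~\eqref{e:th:u2}; and Assumption~\ref{a:5}, $\min\set{\VT,\WT}=\SmallO{T}$, turns the last term into $\SmallO{1}$, yielding~\eqref{e:th:u3}.

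The hardest step is this last one---tying the online saddle dynamics to the offline horizon benchmark under an adversary that perturbs the \emph{time average} rather than the per-slot loss. The elegant cancellation $\phi(\tilde\dv)-\phi(\dv_t)\le0$, which hinges on comparing against the benchmark minimiser $\tilde\dv$ (not the played minimiser $\dv^\sharp$) in the drift sum, is what prevents a spurious second-order term from appearing; and the block/drift trade-off behind the $\WT$ bound---where the quadratic-over-elapsed-time term in~\eqref{e:adv2} originates---is the most delicate accounting, since it must simultaneously exploit within-block cancellation and the vanishing per-step motion of the strongly-convex dual update.
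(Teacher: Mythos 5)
Your proposal is correct and follows essentially the same route as the paper's proof: a Fenchel-conjugate saddle decomposition into a primal OGA regret, a strongly-convex dual OGD regret, and a residual drift term $-\frac{1}{T}\sum_{t}\dv_t\cdot\vec\delta_t(\x_\star)$ that is bounded two ways to yield $\BigO{\VT/T}$ and $\BigO{\WT/T}$. The only cosmetic difference is that you anchor the drift term at the benchmark's dual minimiser $\tilde\dv$ and drop $\phi(\tilde\dv)-\phi(\dv_t)\le 0$, whereas the paper anchors at the average iterate $\bar\dv$ via Jensen's inequality on $(-F_\alpha)^\star$; since $\sum_t\vec\delta_t(\x_\star)=\vec 0$, both reduce to the identical quantity.
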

The proof is provided in Appendix~\ref{proof:t:maintheorem}. We prove that the fairness regret can be upper bounded with the time-averaged regrets of the primal policy operating over the set $\X$ and the dual policy operating over the set $\Theta$, combined with an extra term that is upper bounded with $\min\set{\VT, \WT}$. Note that the fairness regret upper bound in Eq.~\eqref{e:th:u1} can be much tighter than the one in Eq.~\eqref{e:th:u2}, because the gradients' norms can be smaller than their upper bound at a given timeslot $t \in \T$. 
Thanks to its ``self-confident'' learning schedule~\cite{AUER200248}, which dynamically adapts to the observed utilities, our Algorithm~\ref{alg:primal_dual_ogaogd} enjoys an
\emph{any-time} regret guarantee, i.e., it does not require  the knowledge of the target time horizon $T$.

The result in Theorem~\ref{th:maintheorem} is tight, in the sense that 
no policy can have a fairness regret~\eqref{e:b_regret} with better dependency on the time horizon $T$.
Formally, 
\begin{theorem}
\label{theorem:lowerbound}
Any policy $\vec\A$ incurs $ \regret = \Omega\parentheses{\frac{1}{\sqrt{T}}}$ fairness regret~\eqref{e:b_regret} for~$\alpha \geq 0$. 
\end{theorem}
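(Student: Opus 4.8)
The plan is to prove the lower bound by the probabilistic method against a randomized oblivious adversary, reducing the problem to the classical $\BigOmega{\sqrt{T}}$ minimax barrier of online linear optimization and then transporting it through the nonlinear $\alpha$-fairness map. Since the statement imposes no constraint on $\card{\I}$, I would work with a single active agent (any remaining agents receive a constant utility, contributing only an additive constant to $F_\alpha$ that cancels in the regret). I would fix $\X=[-1,1]$ and a center $c$ with $0<\umin\le c-\tfrac12$ and $c+\tfrac12\le\umax$, let the adversary draw i.i.d.\ Rademacher signs $g_1,\dots,g_T\in\set{-1,+1}$, and reveal the affine utilities $u_t(x)=c+\tfrac12 g_t x$. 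These are concave, lie in $[\umin,\umax]\subset\reals_{>0}$, and have supergradients bounded by $\tfrac12$, so Assumptions \ref{a:1}--\ref{a:4} hold; I would check \ref{a:5} at the end.

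For the core estimate, write $S_T=\sum_{t\in\T}g_t$ and $\bar g=S_T/T$, so $\tfrac1T\sum_t u_t(x)=c+\tfrac12\bar g\,x$. The benchmark picks $x_\star=\operatorname{sign}(\bar g)$, yielding averaged value $a\triangleq c+\tfrac12\abs{\bar g}$. On the online side, each $x_t$ is chosen before $g_t$ is revealed, hence $\mathbb{E}[g_t x_t\mid g_1,\dots,g_{t-1}]=x_t\,\mathbb{E}[g_t]=0$, so the online averaged utility $b\triangleq\tfrac1T\sum_t u_t(x_t)$ satisfies $\mathbb{E}[b]=c$. Both $a,b$ lie in $[c-\tfrac12,c+\tfrac12]$, where $f_\alpha$ is $C^1$ with $f_\alpha'(u)=u^{-\alpha}\ge(c+\tfrac12)^{-\alpha}=:m>0$. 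Since $a\ge c$ and $f_\alpha'$ is nonincreasing, $f_\alpha(a)-f_\alpha(c)\ge m(a-c)=\tfrac{m}{2}\abs{\bar g}$, while concavity and $\mathbb{E}[b]=c$ give $\mathbb{E}[f_\alpha(b)]\le f_\alpha(\mathbb{E}[b])=f_\alpha(c)$ by Jensen. Combining,
\begin{align}
\mathbb{E}\brackets{f_\alpha(a)-f_\alpha(b)}=\mathbb{E}\brackets{f_\alpha(a)-f_\alpha(c)}+\parentheses{f_\alpha(c)-\mathbb{E}[f_\alpha(b)]}\ge\tfrac{m}{2}\,\mathbb{E}\abs{\bar g}.
\end{align}
The anti-concentration estimate $\mathbb{E}\abs{S_T}\ge(\mathbb{E}[S_T^2])^{3/2}/(\mathbb{E}[S_T^4])^{1/2}=\BigOmega{\sqrt T}$ (Hölder/Khintchine, using $\mathbb{E}[S_T^2]=T$ and $\mathbb{E}[S_T^4]\le 3T^2$) gives $\mathbb{E}\abs{\bar g}=\BigOmega{1/\sqrt T}$. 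As the fairness regret is a supremum over admissible utility sequences, it dominates the expected regret of this randomized adversary, so $\regret=\BigOmega{1/\sqrt T}$. The argument is uniform in $\alpha\ge0$: for $\alpha=0$ it collapses to $f_0(u)=u-1$ and the standard online-linear-optimization bound.

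I expect the main obstacle to be \ref{a:5}-compliance rather than the combinatorial core. First, the transfer through $F_\alpha$ must keep all averaged utilities inside a compact subset of $\reals_{>0}$ on which $f_\alpha'$ is bounded away from zero; this is exactly why the utilities are centered at $c$ with small slope. Second, and more delicately, the realized hard sequence must satisfy Assumption~\ref{a:5}. I would bound its \emph{partitioned}-severity $\WT$ using a uniform partition into contiguous blocks of size $M$: since $\delta_t(x_\star)=\tfrac12(\bar g-g_t)x_\star$ is affine in $g_t$, the within-block aggregate $\sum_{t\in\T_k}\delta_t(x_\star)$ reduces to $\tfrac12 x_\star$ times the deviation of the block sign-sum from its average, and these absolute deviations total $\BigO{T/\sqrt M}$, while the penalty term is $\BigO{M^2+M\log(T/M)}$ by Eq.~\eqref{e:uniform-partitioned-severity}; taking $M=\Theta(T^{2/5})$ makes $\WT=\SmallO{T}$, so \ref{a:5} holds.

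Finally, to realize ``large regret'' and ``\ref{a:5} holds'' on a single deterministic instance, I would invoke concentration: $\WT$ is $\SmallO{T}$ with high probability and the expected regret is $\BigOmega{1/\sqrt T}$ with bounded range, so a standard Markov/reverse-Markov argument extracts one admissible sequence that both satisfies \ref{a:5} and forces $\BigOmega{1/\sqrt T}$ regret on the given policy, completing the proof.
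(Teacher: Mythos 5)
Your proposal is correct and follows essentially the same route as the paper's proof: a single-agent instance with affine utilities driven by i.i.d.\ Rademacher signs, Jensen's inequality to show the policy's expected averaged utility collapses to the center (since $x_t$ is chosen before $g_t$ is revealed), a concavity/derivative lower bound to push the linear gap through $f_\alpha$ uniformly in $\alpha\ge 0$, and a Khintchine-type anti-concentration bound $\mathbb{E}\abs{S_T}=\BigOmega{\sqrt T}$. The only substantive addition is your verification that the hard instance can be taken to satisfy Assumption~\ref{a:5} (via the block-wise bound on $\WT$ and a reverse-Markov extraction), which the paper's proof omits; this is a legitimate strengthening that makes the tightness claim against the upper bound of Theorem~\ref{th:maintheorem} more meaningful, but it is not a different proof strategy.
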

The proof can be found in Appendix~\ref{proof:lowerbound}. We show that the lower bound on regret in online convex optimization~\cite{Hazanoco2016} can be transferred to the fairness regret. 

\new{We discuss in Appendix~\ref{appdendix:time_complexity}, the time-complexity of Algorithm~\ref{alg:primal_dual_ogaogd}  in the context of virtualized caching system application, presented in Section~\ref{s:experiments}.}



\subsection{Adversarial Examples}
\label{s:examples}
In this section, we provide examples of adversaries satisfying Assumptions~\ref{a:1}--\ref{a:5}, with either $\VT = o(T)$ or $\WT = o(T)$, and of stochastic adversaries.

\begin{example} (Adversaries satisfying $\VT = o(T)$)
Consider an adversary selecting utilities such that \begin{align}
\u_t(\x)  =  \u(\x)  +\vec \gamma_t\odot\vec p_t(\x),\end{align} 
where $\u: \X \to \reals^{\I}$ is a fixed utility, the time-dependent function  $\vec p_t : \X \to \reals^{\I}$ is an adversarially selected perturbation with  $\norm{\vec p_t}_{\infty} < \infty $,  $\vec \gamma_t \in \reals^{\I}$ quantifies the severity of the perturbations, and $\vec \gamma_t \odot \vec p_t(\x) = \parentheses{\gamma_{t,i} p_{t,i}(\x)}_{i \in \I}$ is the Hadamard product.  The severity of the perturbations grows sublinearly in time $T$, i.e., $\sum^T_{t=1} \gamma_{t,i}  = o(T)$ for every $i \in \I$. It is easy to check that, in this setting, it holds $\VT = o (T)$.

We provide a simple-yet-illustrative example of such an adversary. We take $\X = [0,1] \subset \reals$, two agents $\I = \set{1,2}$,  fixed utilities $\u (x) = \parentheses{1 - x^2, 1+x}$, adversarial perturbations $\vec p_t(x) =\parentheses{a_{i,t}\cdot x}_{i \in \I}$ where $\vec a_t$ is selected uniformly at random from $[-1,1]^{\I}$ for every $t \in \T$. The perturbations' severity is selected as $\gamma_{\xi_{t,i},i} = t^{-s}$ where $\vec \xi_i: \T \to \T$ is a random permutation of the elements of $\T$ for $i \in \I$.  The performance of Algorithm~\ref{alg:primal_dual_ogaogd} is provided in Fig.~\ref{fig:example1}.  We observe that for larger values of $s$, corresponding to lower perturbation's severity, the policy provides faster the same utilities as  the HF benchmark~\eqref{eq:hf_objective}.
\end{example}

\begin{figure}[t]
    \centering
    \subcaptionbox{$s =\frac{1}{100} $}{\includegraphics[width=.22\linewidth]{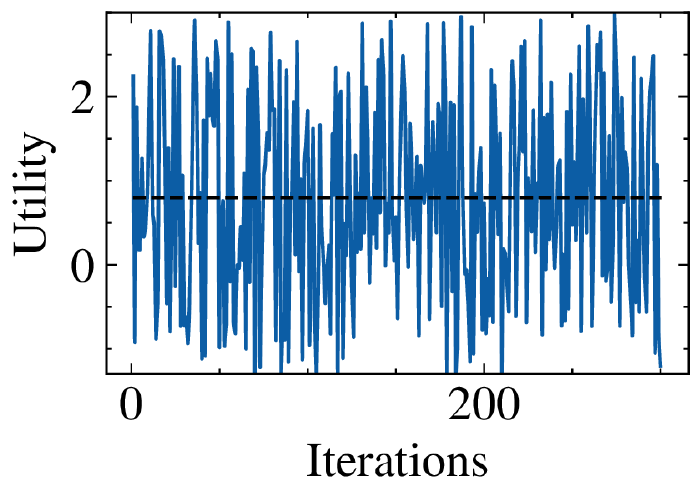}}
    \subcaptionbox{$s = \frac{1}{10}$}{\includegraphics[width=.22\linewidth]{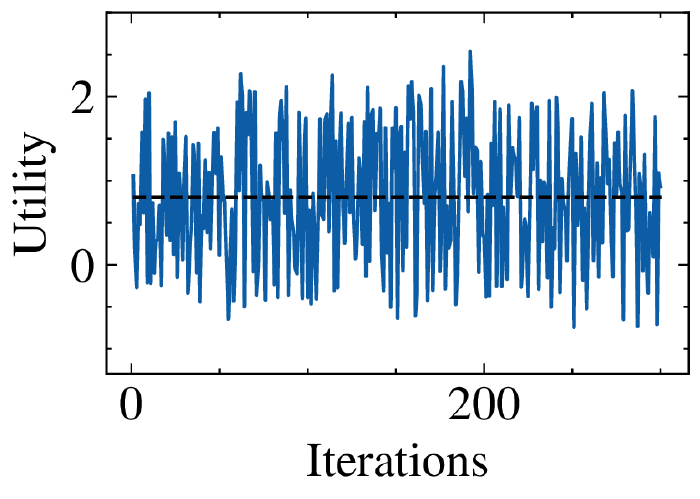}}
    \subcaptionbox{$s = \frac{1}{2}$}{\includegraphics[width=.22\linewidth]{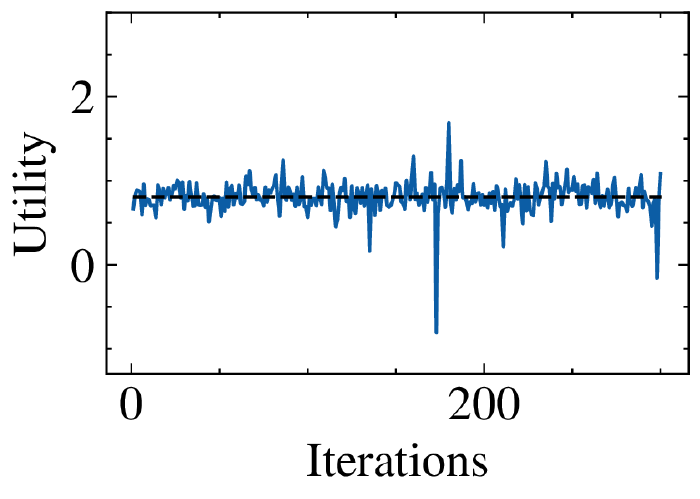}}    
    \subcaptionbox{Time-averaged utility}{\includegraphics[width=.32\linewidth]{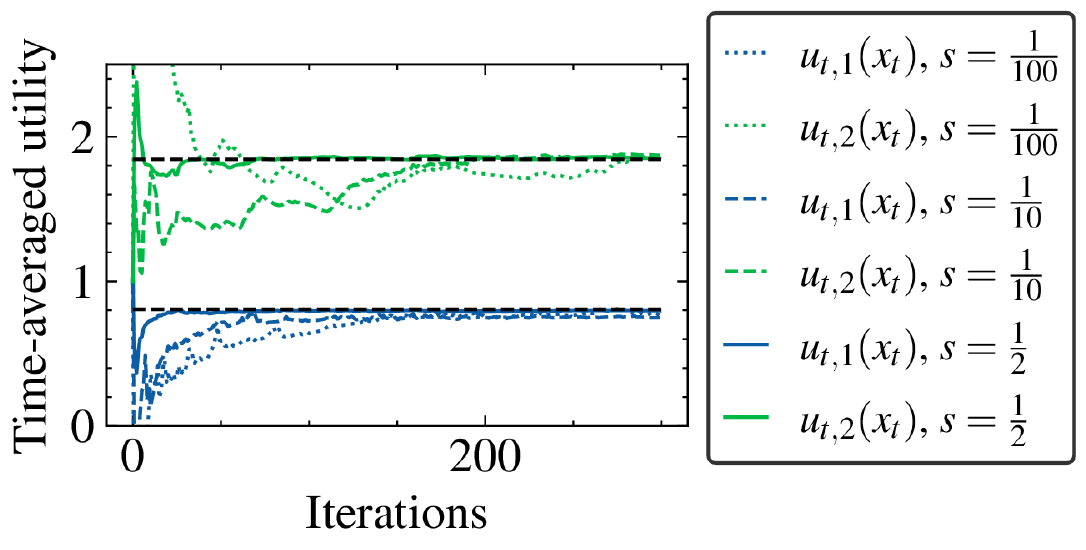}}
    \caption{Subfigures~(a)--(c) provide the  utilities of agent 2 for different values of perturbations' severity parameter $s \in \set{\frac{1}{100}, \frac{1}{10}, \frac{1}{2}}$ under the benchmark's allocation $x_{\star}$. Subfigure~(d) provides the time-averaged utility of two agents. The dark dashed lines represent the utilities obtained by HF objective~\eqref{eq:hf_objective}.}
    \label{fig:example1}
\end{figure}

\begin{example} (Adversaries satisfying $\WT = o(T)$)
\label{example:periodic}
Consider a multiset $\M_t$ of utilities and an adversary that selects a utility $\vec u_t:\X\to \reals^{\I}$ from it. The multiset is updated as follows: if $\M_t \setminus \set{\u_t}\neq \emptyset$, $\M_{t+1} = \M_t \setminus \set{\u_t}$, otherwise, $\M_t = \M_1$. In words, the adversary selects irrevocably elements (utilities) from the set $\M_1$, and, when all the elements are selected, the replenished $\M_1$ is offered  again to the adversary. Consider, without loss of generality, a time horizon $T$  divisible by $\card{\M_1}$ and  the following decomposition for the period $\T$:  $\set{1, 2, \dots, \card{\M_1}}\cup \set{\card{\M_1}+1, \card{\M_1}+2, \dots, 2 \card{\M_1}}\cup \dots = \T_1\cup \T_2\cup\dots\cup \T_{{T}/{\card{\M_1}}}$.  By construction, it holds for every $\x \in \X$
\begin{align}
     \sum_{i \in \I}   \abs{\sum_{t \in \T_k} \delta_{t,i}(\x)} =  0, \;\; \forall k \in \set{1,2,\dots, {T}/{\card{\M_1}}},\label{e:adv_eg_a1}
\end{align}
because when the multiset is fully consumed by the adversary, the average experienced utility is a fixed function.  When $\card{\M_1} = \Theta\parentheses{T^{\epsilon}}$ for $\epsilon  \in [0,1/2)$ it holds  $\sum^{T/\card{\M_1}}_{k=1} \frac{\card{\T_k}^2}{\sum_{k' < k } \card{\T_k}+1} = \BigO{T^{2\epsilon}}$ (see Eq.~\eqref{e:uniform-partitioned-severity}); thus, combined with Eq.~\eqref{e:adv_eg_a1} it holds $\WT = o(T)$. We provide a simple example of such an adversary. Consider $\X = [-1,1]$,  two agents $\I = \set{1,2}$, and the initial  
multiset 
\begin{align}
    \M_1 = \{\vGroup{(1\!-\!x, 1\!-\!(1\!-\!x)^2)}{\text{repeated $10$ times}}, \vGroup{({ 1\!-\!(1\!-\!x)^2, 1\!-\!4x})}{\text{repeated $20$ times}}, \vGroup{({ 1, -2x})}{\text{repeated $10$ times}}\}.\label{e:m1}
\end{align}
We have $\card{\M_1} =40$ and hence $\WT = o(T)$. The performance of Algorithm~\ref{alg:primal_dual_ogaogd} is provided in Fig.~\ref{fig:example2} under different choice patterns over $\M_1$. We observe that the cyclic choice of utilities is more harmful than the u.a.r. one as it leads to slower convergence. Nonetheless, under both settings,  the policy asymptotically yields  the same utilities as the HF benchmark~\eqref{eq:hf_objective}.
\end{example}
\begin{figure}[t]
    \centering
               \subcaptionbox{Allocations (cyclic)}{\includegraphics[width=.24\linewidth]{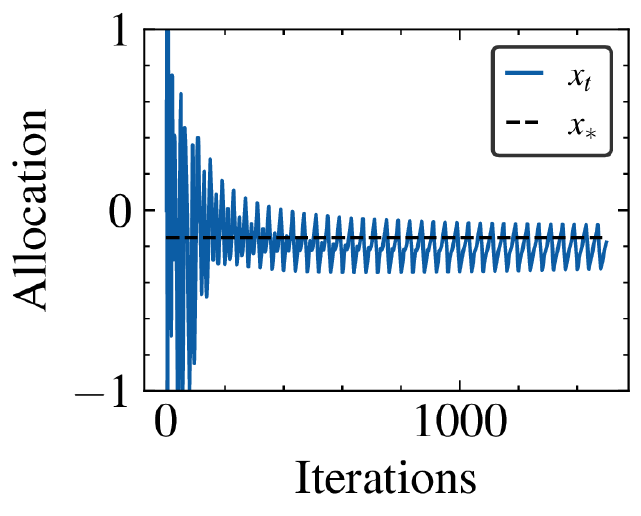}}
       \subcaptionbox{Allocations (u.a.r.)}{\includegraphics[width=.24\linewidth]{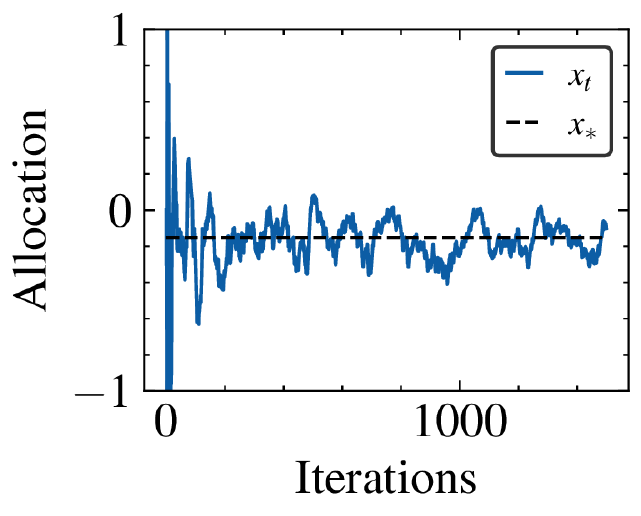}}
\subcaptionbox{Time-averaged utilities (cyclic)}{\includegraphics[width=.23\linewidth]{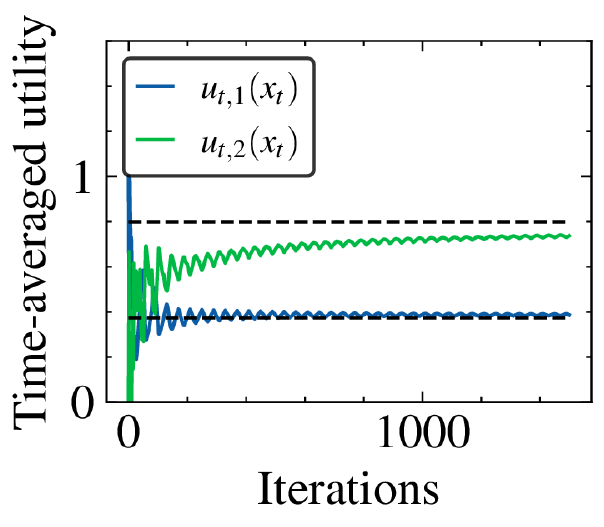}}
\subcaptionbox{Time-averaged utilities (u.a.r.)}{\includegraphics[width=.23\linewidth]{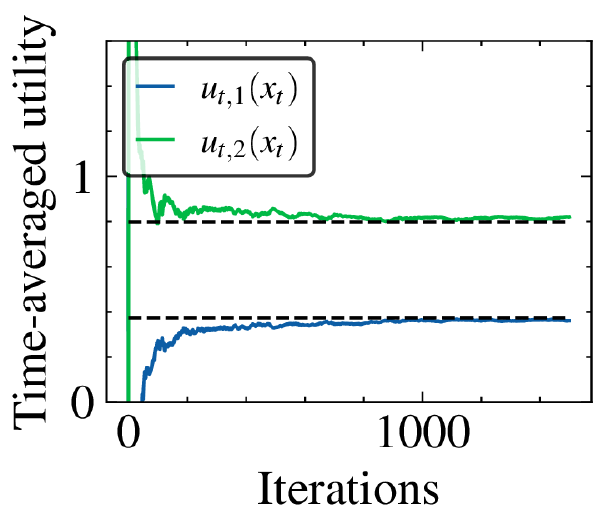}}
    \caption{Subfigures~(a)--(b) provide the allocations of different agents of cyclic and u.a.r. choice of utilities over the set $\M_1$, respectively. Subfigures~(c)--(d) provide the time-averaged utility of cyclic and u.a.r. choice of utilities over the set $\M_1$, respectively. }
    \label{fig:example2}
\end{figure}
\begin{example} (Stochastic Adversary)
\label{e:example3}
Consider a scenario where  $u_{t,i}:\X \to \reals$ are drawn i.i.d. from an unknown distribution $\mathcal{D}_i$. Formally, the following corollary is obtained from Theorem~\ref{th:maintheorem}.

\begin{corollary}
When the utilities $ u_{t,i}:\X \to \reals$ are drawn i.i.d. from an unknown distribution~$\mathcal{D}_i$ satisfying Assumptions~\ref{a:1}--\ref{a:4}, the policy \algoname{} in Algorithm~\ref{alg:primal_dual_ogaogd}  attains the following expected fairness  regret guarantee:
\begin{align}
   \barregret \triangleq\sup_{ {\mathcal{D}_i}, \,i \in \I} \set{ \underset{\substack{u_{t,i} \sim \mathcal{D}_i \\ i \in \I, \,t \in \T}}{\mathbb E} \brackets{\max_{\x \in \X}  F_{\alpha}\left({\frac{1}{T}\sum_{t \in \T}\vec u_{t}(\x_t)}\right) -F_{\alpha}\left({\frac{1}{T}\sum_{t \in \T}\vec u_{t}(\x_t)}\right)}} = \BigO{\frac{1}{\sqrt{T}}}.
\end{align}
Moreover, it holds with probability one: $\regret  \leq  0$ for $T \to \infty$.
\label{corollary:stochastic}
\end{corollary}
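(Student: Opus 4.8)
The plan is to start from the per-sequence bound of Theorem~\ref{th:maintheorem} and isolate the only quantity that genuinely depends on the random draw. In Eq.~\eqref{e:th:u2} the primal and dual regret terms $\frac{1.5\diam{\X}L_\X}{u^\alpha_{\star,\min}\sqrt T}$ and $\frac{\alpha L_\Theta^2(\log T+1)}{u_{\star,\min}^{1+1/\alpha}T}$ are deterministic and already $\BigO{1/\sqrt T}$, so they survive taking expectations unchanged. Hence $\barregret\le\BigO{1/\sqrt T}+\frac1T\,\mathbb{E}\!\left[R_{\mathrm{extra}}\right]$, where $R_{\mathrm{extra}}$ is the coupling term that the proof of Theorem~\ref{th:maintheorem} upper bounds by $\min\set{\VT,\WT}$. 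The first subtlety is that the worst-case bound $\WT$ is too crude here: for i.i.d. inputs a uniform block decomposition of size $M$ produces a first summand of order $T/\sqrt M$ and a second of order $M^2$, which balance at $M$ of order $T^{2/5}$ and give only $\WT$ of order $T^{4/5}$, i.e. a $T^{-1/5}$ rate. To recover the sharp $1/\sqrt T$ rate I would therefore bound $\mathbb{E}[R_{\mathrm{extra}}]$ directly rather than through $\WT$.

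The key step is a martingale argument. Writing $\vec\mu(\x)\triangleq\mathbb{E}_{\mathcal D}[\vec u_t(\x)]$ and $\x^\ast\triangleq\arg\max_{\x\in\X}F_\alpha(\vec\mu(\x))$ for the deterministic population optimum, the extra term is, up to the primal/dual residuals, of the form $\sum_{t\in\T}\dv_t\cdot\big(\vec u_t(\x_\star)-\tfrac1T\sum_{s\in\T}\vec u_s(\x_\star)\big)$. Because the dual iterate $\dv_t$ is measurable with respect to $\F_{t-1}=\sigma(\vec u_1,\dots,\vec u_{t-1})$ and $\mathbb{E}[\vec u_t(\x^\ast)-\vec\mu(\x^\ast)\mid\F_{t-1}]=\vec 0$, the sum $M_T\triangleq\sum_{t\in\T}\dv_t\cdot(\vec u_t(\x^\ast)-\vec\mu(\x^\ast))$ is a martingale whose increments are bounded by $\norm{\dv_t}$ (finite on $\Theta$) times the utility range (Assumption~\ref{a:2}); hence $\mathbb{E}\abs{M_T}\le\sqrt{\mathbb{E}[M_T^2]}=\BigO{\sqrt T}$, and dividing by $T$ yields the desired $\BigO{1/\sqrt T}$ contribution.

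What remains, and what I expect to be the main obstacle, is to discharge the coupling between the data-dependent benchmark allocation $\x_\star$ and the iterates $\dv_t$, i.e. to pass from $M_T$ (evaluated at the deterministic $\x^\ast$ with population mean $\vec\mu$) to $R_{\mathrm{extra}}$ (evaluated at the random $\x_\star$ with its empirical mean). I would control this by a uniform concentration argument: under Assumptions~\ref{a:1}--\ref{a:3} the set $\X$ is compact and each $u_{t,i}$ is concave with supergradients bounded by $L_\X$, so the utility class is Lipschitz and admits a finite covering of $\X$; a Glivenko--Cantelli / covering bound then gives $\mathbb{E}\big[\sup_{\x\in\X}\norm{\tfrac1T\sum_{t\in\T}\vec u_t(\x)-\vec\mu(\x)}\big]=\BigO{1/\sqrt T}$, which simultaneously moves $\tfrac1T\sum_t\vec u_t(\cdot)\to\vec\mu(\cdot)$ and, with the Lipschitzness of $F_\alpha$ on $[\umin,\umax]^\I$, controls the benchmark gap $\max_\x F_\alpha(\tfrac1T\sum_t\vec u_t(\x))-F_\alpha(\vec\mu(\x^\ast))$ at the same rate. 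Combining these pieces gives $\barregret=\BigO{1/\sqrt T}$.

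For the almost-sure statement I would invoke a strong uniform law of large numbers: the same compactness-plus-Lipschitz structure upgrades the pointwise SLLN $\tfrac1T\sum_{t\in\T}\vec u_t(\x)\to\vec\mu(\x)$ to $\sup_{\x\in\X}\norm{\tfrac1T\sum_t\vec u_t(\x)-\vec\mu(\x)}\to0$ almost surely, so by continuity of $F_\alpha$ the benchmark converges a.s. to $F_\alpha(\vec\mu(\x^\ast))$. Together with the vanishing expected regret (transferred to the averaged quantities via their boundedness and a Borel--Cantelli argument), the policy's averaged utility $\tfrac1T\sum_t\vec u_t(\x_t)$ cannot asymptotically outperform the population optimum, yielding $\limsup_{T\to\infty}\regret\le0$ almost surely.
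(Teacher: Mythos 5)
Your proposal is sound in outline and, for the expected-regret half, takes a genuinely different and in one respect more careful route than the paper. The paper's own proof is much shorter: it takes the expectation of the pathwise bound in Eq.~\eqref{e:primal_dual_with_extra_term} and kills the extra term $\frac{1}{T}\sum_{t}(\dv_t-\bar\dv)\cdot\u_t(\x_\star)$ in one line by noting that $\dv_t$ is determined by $\u_1,\dots,\u_{t-1}$ and is therefore independent of $\u_t$, so the expectation is exactly zero; it does not pass through $\WT$ at all. Your observation that routing through $\WT$ only yields a $T^{-1/5}$ rate is correct, and it is precisely why the expectation of the coupling term must be bounded directly. Your martingale second-moment bound $\mathbb{E}\abs{M_T}\le\sqrt{\mathbb{E}[M_T^2]}=\BigO{\sqrt T}$ is a slightly weaker but equally valid version of the paper's independence argument. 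Where you go beyond the paper is in worrying about the coupling between the data-dependent benchmark $\x_\star$ and the iterates $\dv_t$: the paper's one-line computation implicitly treats $\u_t(\x_\star)$ as independent of $\dv_t$ even though $\x_\star$ is the empirical argmax and hence depends on the whole sample, and your covering/uniform-concentration step is exactly what is needed to make that step airtight, at the cost of extra machinery and a dimension-dependent constant.

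For the almost-sure half the two arguments diverge more, and here your sketch has a genuine gap. The paper does not use a uniform SLLN; it partitions $\T$ into $T^{2/3}$ blocks of length $T^{1/3}$, applies Hoeffding's inequality to each block sum of $\delta_{t,i}(\x_\star)$, union-bounds to obtain $\WT=\BigO{T^{5/6+\beta}}$ with failure probability $2\card{\I}T^{2/3}\exp(-2T^{2\beta})$, which is summable, and then concludes from the deterministic bound of Theorem~\ref{th:maintheorem}. Your plan instead appeals to ``vanishing expected regret \dots transferred \dots via a Borel--Cantelli argument,'' but convergence of the expected regret to zero gives only $L^1$ (hence in-probability) control of the policy's averaged utility and cannot by itself be upgraded to an almost-sure statement; Borel--Cantelli requires a concrete deviation bound with summable failure probabilities, which you never exhibit. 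The fix is available inside your own framework --- apply Azuma--Hoeffding to the martingale $M_T$, together with a union bound over a finite cover of $\X$ to handle the random $\x_\star$, to get $\abs{M_T}=\BigO{\sqrt{T\log T}}$ with probability $1-\BigO{T^{-2}}$ --- but as written this step is missing, and without it the almost-sure claim does not follow.
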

The proof is 
in Appendix~\ref{proof:stochastic}. The expected fairness regret guarantee follows from Theorem~\ref{th:maintheorem} and observing that $\mathbb E \brackets{\vec \delta_t (\x)} = \vec 0$ 
for any $t \in \T$ and $\x \in \X$. The high probability fairness regret guarantee for large~$T$
is obtained through Hoeffding's inequality paired with Eq.~\eqref{e:adv2}.
\end{example}

Note that we provide additional examples of adversaries, in the context of the application of our policy to a virtualized caching system, in Section~\ref{s:experiments}.

\section{Extensions}
\label{s:extensions}

In this section, we first show that our algorithmic framework extends to cooperative bargaining settings, in particular Nash bargaining~\cite{Nash1950}. Secondly, we show that our framework also extends to the weighted $\alpha$-fairness criterion.
\subsection{Nash Bargaining}

Nash bargaining solution (NBS), proposed in the seminal paper~\cite{Nash1950}, is a fairness criterion for dispersing to a set of  agents the utility of their cooperation. 
The solution guarantees that, whenever the agents cooperate,
each agent achieves an individual  performance that exceeds its performance when operating independently.
This latter is also known as the disagreement point. NBS comes from the area of cooperative game theory, and it is self enforcing, i.e., the agents will agree to apply this solution without the need for an external authority to enforce compliance. NBS has been extensively applied in communication networks, e.g., to transmission power control~\cite{Boche2011},  mobile Internet sharing among wireless users~\cite{Iosifidis2017},  content delivery in ISP-CDN partnerships~\cite{Wenjie2009}, and  cooperative caching in information-centric networks~\cite{Liang2017}. 

Nash bargaining can be incorporated through our fairness framework when $\alpha = 1$, and utilities as redefined for every $t \in \T$ as follows $ \u_t' (\x) = \u_t(\x) - \u^d_t$ where $u^d_i$ is the disagreement point of agent $i \in \I$.   In particular, \algoname{} provides the same guarantees. We  also note that the dynamic model generalizes the NBS solution by allowing both the utilities and the disagreement points to change over time, while the benchmark is defined using~\eqref{eq:hf_objective} and $\alpha=1$.  Hence, the proposed \algoname{} allows the agents to collaborate without knowing in advance the benefits of their cooperation nor their disagreement points, in a way that guarantees they will achieve the commonly agreed NBS at the end of the horizon T (asymptotically).
\subsection{The $(\vec w, \alpha)$-Fairness}
The weighted $\alpha$-fairness or simply $(\vec w, \alpha)$-fairness with  $\alpha \geq 0$  and $\vec w \in \Delta_{\I}\subset \reals_{\geq 0}$, where $\Delta_{\I}$ is the probability simplex with support $\I$,  is defined as~\cite{mo2000fair}:
\begin{definition}
A $(\vec w, \alpha)$-fairness  function $F_{\vec w, \alpha}:\U\to \reals$ is parameterized by the inequality aversion parameter $\alpha \in \reals_{\geq 0}$, weights $\vec w \in \Delta_{\I}$ and it is given by  $ F_{\vec w, \alpha}(\u) \triangleq \sum_{i \in \I} w_i f_{\alpha}(u_i)$ for every $\u \in \U$. Note that $\U \subset \reals^\I_{\geq 0}$ for $\alpha <1$, and $\U \subset \reals^\I_{>0}$ for $\alpha \geq 1$.
\end{definition}
It is easy to check that our $\alpha$-fairness framework captures the $(\vec w, \alpha)$-fairness by simply redefining the utilities incurred at time $t \in \I$ for agent $i \in \I$ as follows: $u'_{t,i}(\x) = w^{\frac{1}{1-\alpha}}_i u_{t,i}(\x)$ for $\alpha \in \reals_{\geq 0} \setminus \set{1}$, otherwise $u'_{t,i}(\x) = \parentheses{u_{t,i}(\x)}^{w_i}$.
Note that for $\alpha = 1$ and uniform weights, we recover the Nash bargaining setting discussed previously; otherwise, we recover asymmetric Nash bargaining in which the different weights correspond to the bargaining powers of players~\cite{harsanyi1972generalized}.

\section{Application}
\label{s:experiments}

In order to demonstrate the applicability of the proposed fairness framework, we target a  representative resource management problem in virtualized caching systems where different caches cooperate by serving jointly the received content requests. This problem has been studied extensively in its static version, where the request rates for each content file are a priori known and the goal is to decide which files to store at each cache to maximize a fairness metric of cache hits across different caches, see for instance \cite{Liang2017, LIU2020102138}. We study the more realistic version of the problem where the request patterns are unknown. This online caching model has been recently studied as a learning problem in a series of papers \cite{paschos2019learning, sisalem2021no, mhaisen2022online, paria2021texttt,bura2021learning,Li2021}, yet none of them handles fairness metrics.

\subsection{Multi-Agent Cache Networks}
\paragraph{Cache network.} 
We assume that time is slotted and the set of timeslots is denoted by $\T \triangleq \set{1, 2, \dots, T}$. We consider a catalog of equally-sized files $\F \triangleq\set{1, 2, \dots, F}$.\footnote{Note that we assume equally-sized files to streamline the presentation. Our model supports unequally-sized files by replacing the cardinality constraint in Eq.~\eqref{e:constraint} with a knapsack constraint and the set $\X_c$ (defined in~\eqref{e:constraint}) remains convex.} We model a cache network at timeslot $t \in \T$ as an undirected weighted graph $G_t(\C, \E)$, where $\C \triangleq\set{1,2,\dots ,C}$ is the set of caches, and $(c, c') \in \E$ denotes the link connecting cache $c$ to $c'$ with associated weight $w_{t, (c,c')} \in \reals_{>0}$.  Let $\P_{t, (c,c')} =\set{c_1, c_2,\dots, c_{\card{\P_{t, (c,c')}}}}  \in \C ^{\card{\P_{t, (c,c')}}}$ be the shortest path at timeslot $t \in \T$ from cache $c$ to cache $c'$ with associated weight  $    w^{\mathrm{sp}}_{t, (c,c')} \triangleq \sum^{\card{\P_{t, (c,c')}}-1}_{k =1} w_{t, (c_k, c_{k+1})}$.

We assume for each file $f \in\F$ is permanently stored at a set $\Lambda_f (\C) \subset \C$ of designated repository  servers.
\new{Moreover, each cache can store fractions of the file and fractions of the same file at different caches can be additively combined.}\footnote{\new{
This is a common assumption~\cite{shanmugam2013femtocaching,paschos2019learning}, which models situations where each file can be split in a large number of small chunks and each cache can store random linear combinations of such chunks.
Guarantees for this fractional setting can be readily transferred to an integral setting through randomized rounding techniques~\cite{paria2021texttt,geocaching2015, Ioannidis16, sisalem21arxiv}.}} 
We denote by $x_{t,c,f} \in [0,1]$ the fraction
of file $f \in\F$ stored at cache $c \in \C$ at timeslot $t \in \T$. The state of cache $c \in \C$ is given by $\x_{t,c}$  drawn from the set
\begin{align}
    \X_c \triangleq \set{\x \in \brackets{0, 1}^
    \F: \sum_{f \in\F} x_f \leq  k_c, x_f \geq \mathds{1}{\parentheses{c \in \Lambda_f (\C)}}, \forall f \in\F},\label{e:constraint}
\end{align}
where  $k_c \in \naturals$ is the capacity of cache $c \in \C$, and $\mathds{1}{\parentheses{\chi}} \in \set{0,1}$ is the indicator function set to 1 when condition $\chi$ is true. Thus, the state of the cache network belongs to  $\X \triangleq \bigtimes_{c \in \C} \X_c$.  The system model is summarized in Fig.~\ref{fig:system_model}, and it is aligned with many recent papers focusing on learning for caching~\cite{Ioannidis16,paschos2019learning,paria2021texttt,sisalem21arxiv}.
\begin{figure}[t]
    \centering
    \includegraphics[width=.7\linewidth]{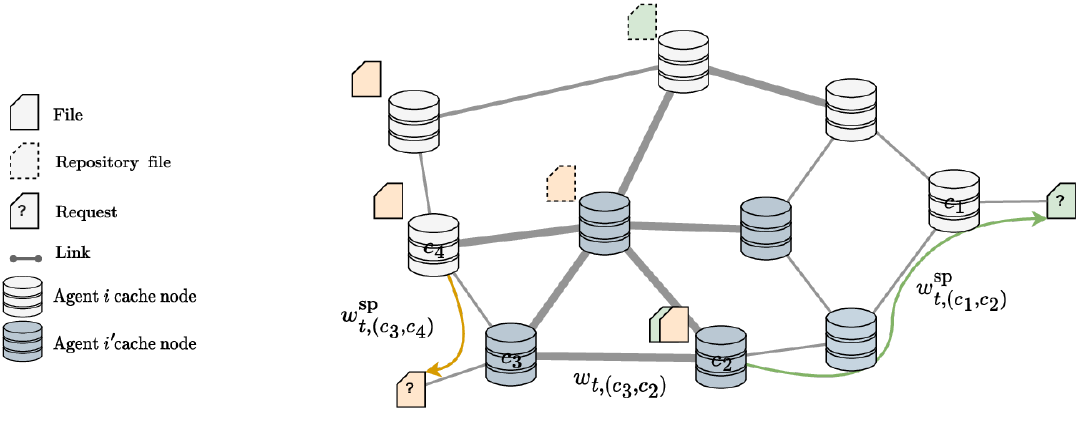}
    \caption{System model: a network comprised of a set of caching nodes $\C$. A request arrives at a cache node $c \in \C$, it can be partially served locally, and if needed, forwarded along the shortest retrieval path to another node to retrieve the remaining part of the file; a utility is incurred by the cache owner $i \in \I$. A set of permanently allocated files are spread across the network guaranteeing requests can always be served.
    }
    \label{fig:system_model}
\end{figure}
\paragraph{Requests.}
We denote by $r_{t,c,f} \in \naturals \cup \set{0}$ the number of requests for file $f \in \F$ submitted by users  associated to  cache $c \in \C$, during slot $t \in \T$. The request batch arriving at timeslot $t \in \T$ is denoted by $\vec r_{t} = \parentheses{r_{t,c,f}}_{(c,f) \in \C \times \F}$ and belongs to the set $$    \mathcal{R}_t \triangleq \set{\vec r \in \parentheses{\naturals \cup \set{0}}^{ \C \times \F}: \sum_{c \in \C}\sum_{f \in\F} r_{c,f} \leq R_t},$$ where $R_t \in \naturals$ is the total number of requests (potentially) arriving at the system at timeslot $t \in \T$. 
\paragraph{Caching gain.}
We consider an agent $i \in \I$  holds a set of caches $ \Gamma_i (\C) \subset \C$, and $\dot\bigcup_{i \in \I}  \Gamma_i (\C)  =  \C$. Hence, the allocation set of agent $i$ is given by $\X_i = \bigtimes_{c \in \Gamma_i(\C)} \X_c$. Requests arriving at cache $c\in \C$ can be partially served locally, and if needed,  forwarded along the shortest path to a nearby cache $c'\in \C$ storing the file, incurring a retrieval cost $w^{\mathrm{sp}}_{t, (c,c')}$. Let $\phi_{t,i, c} \triangleq \arg\min_{c' \in \Lambda_i (\C)} \set{w^{\mathrm{sp}}_{t, (c, c')}}$ and $\Phi_{t,i, c}: \set{1, 2, \dots, \phi_{t, i, c}} \subset \C \to\C$ be a map providing a retrieval cost ordering for every $c \in \set{1,2, \dots, \phi_{t,i, c}}$, $t \in \T$, and $i \in \I$,   i.e.,
\begin{align}
 w^{\mathrm{sp}}_{t, (c, \Phi_{t, i,c}(\phi_{t,i, c}))} = \min\set{w^{\mathrm{sp}}_{t, (c,c')}: c' \in \Lambda_f(\C)} \geq  \dots \geq    w^{\mathrm{sp}}_{t, (c, \Phi_{t, i,c}(2)))} \geq  w^{\mathrm{sp}}_{t, (c, \Phi_{t, i,c}(1)))} = 0. 
\end{align} 
When a request batch $\vec r_t \in \mathcal{R}_t$ arrives at timeslot $t \in \T$, agent $i \in \I$ incurs the following cost:
\begin{align*}
    \mathrm{cost}_{t, i}(\x) \triangleq\!\!\!\! \sum_{c \in \Gamma_i(\C)} \sum_{f \in\F} r_{t,c,f}\sum^{\phi_{t,i, c}-1}_{k=1} \parentheses{w^{\mathrm{sp}}_{t, (c, \Phi_{t,i, c} (k+1))} -   w^{\mathrm{sp}}_{t, (c, \Phi_{t,i, c} (k))}} \parentheses{1 - \min\set{1, \sum^k_{k'=1} x_{\Phi_{t,i, c}(k'), f}}}.
\end{align*}
This can be interpreted as a QoS cost paid by a user for the additional delay to retrieve  part of the file from another cache, or it can represent the load on the network to provide the missing file. Note that by construction, the maximum cost is achieved for a network state, where all the caches are empty except for the repository allocations; formally, such state is given by $\x_0 \triangleq \parentheses{\mathds{1}\parentheses{c \in \Lambda_f(\C)}}_{(c,f) \in \C \times \F} \in \X$, and the cost of the agent at this state is given by
\begin{align}
   \mathrm{cost}_{t, i}(\x_0) &=  \sum_{c \in \Gamma_i(\C)} \sum_{f \in\F} r_{t,c,f} \min\set{w^{\mathrm{sp}}_{t, (c, c')}: c' \in \Lambda_f(\C)} \\&= \sum_{c \in \Gamma_i(\C)} \sum_{f \in\F} r_{t,c,f}\sum^{\phi_{t,i, c}-1}_{k=1} \parentheses{w^{\mathrm{sp}}_{t, (c, \Phi_{t,i, c} (k+1))} -   w^{\mathrm{sp}}_{t, (c, \Phi_{t,i, c} (k))}},
\end{align}
We can define the caching utility at timeslot $t\in \T$ as the cost reduction due to caching as:
\begin{align}
    u_{t, i}(\x) &\triangleq \mathrm{cost}_{t,i}(\x_0) -  \mathrm{cost}_{t,i}(\x) \\
    &=  \sum_{c \in \Gamma_i(\C)} \sum_{f \in\F} r_{t,c,f}\sum^{\phi_{t,i, c}-1}_{k=1} \parentheses{w^{\mathrm{sp}}_{t, (c, \Phi_{t,i, c} (k+1))} -   w^{\mathrm{sp}}_{t, (c, \Phi_{t,i, c} (k))}} \min\set{1, \sum^k_{k'=1} x_{\Phi_{t,i, c}\parentheses{k'} , f}}.\label{e:utility_expression}
\end{align}
The caching utility is a weighted sum  of concave functions with positive weights, and thus concave in $\x \in \X$. 
It is straightforward to check that this problem always satisfies Assumptions~\ref{a:1}--\ref{a:4}. 
The request batches and the time-varying retrieval costs determine whether Assumption~\ref{a:5} holds. For example, this is the case when request batches are drawn i.i.d. from a fixed unknown distribution (see Example~\ref{e:example3}).

\subsection{Results}



Below we describe the experimental setup\footnote{\new{Our code is publicly available at https://github.com/tareq-si-salem/Online-Multi-Agent-Cache-Networks}} of the multi-agent cache networks problem, the request traces, and competing policies. Our results are summarized as follows: 
\begin{enumerate}
    \item Under stationary requests and small batch sizes (leading to large utility deviations from one timeslot to another),    \algoname{} achieves the same time-averaged utilities as the offline benchmark, whereas \slotalgoname, a counterpart policy to \algoname{} targeting slot-fairness~\eqref{eq:sf_objective}, diverges and is unable to reach the Pareto front.
    \item In the Nash bargaining scenario,  \algoname{} achieves the NBS 
    in all cases, while \slotalgoname{} fails when the disagreement points are exigent, i.e.,  an agent can guarantee itself a high utility.
    \item Widely used \lfu{} and \lru{}
    might perform arbitrarily bad w.r.t. fairness, and not even achieve any point in the Pareto front (hence, they are not only unfair, but also inefficient).
    \item Fairness comes at a higher price when $\alpha$ is increased or the number of agents is increased.  This observation on the price of fairness provides experimental evidence for previous work~\cite{bertsimas2012efficiency}.
      \item \algoname{} is robust to different network topologies and is able to obtain time-averaged utilities that match the offline benchmark.  
    \item Under non-stationary requests, \horizonalgoname{} policy achieves the same time-averaged utilities as the offline benchmark, whereas \slotalgoname{} can perform arbitrarily bad providing allocations that are both unfair and inefficient
\end{enumerate}

\paragraph{General Setup.}
We consider three synthetic network topologies (\SC, \BT, and \Grid), and two real network topologies (\Abilene{} and \GEANT). A visualization of the network topologies is provided in Figure~\ref{fig:topologies}. The specifications of the network topologies used across the experiments  are provided in Table~\ref{table:topologies} in the Appendix.  A repository node permanently stores the entire catalog of files. The retrieval costs along the edges are sampled u.a.r. from $\set{1,2, \dots, 5}$, except for edges directly connected to a repository node which are sampled u.a.r. from $\set{6, 7, \dots, 10}$. All the retrieval costs remain fixed for every $t \in \T$. The capacity of each cache is sampled u.a.r. from $\set{1,2, \dots, 5}$, but for the  \SC{} topology in which each cache has capacity 5. An agent $i \in \I$ has a set of query nodes denoted by $\mathcal{Q}_i \subset \Gamma_i(\C)$, and a query node can generate  a batch of requests from a catalog with $\card{\F} = 20$ files.  Unless otherwise said, we consider $u_{\star, \min}= 0.1$ and $u_{\star, \max} = 1.0$. The fairness benchmark refers to the maximizer of the HF objective~\eqref{eq:hf_objective}, and the utilitarian benchmark refers to the  maximizer of HF objective~\eqref{eq:hf_objective} for $\alpha = 0$. 

\begin{figure}[t]
	\centering
	\subcaptionbox*{}{\includegraphics[trim={0 5.3cm 0 0},clip,width=0.7\linewidth]{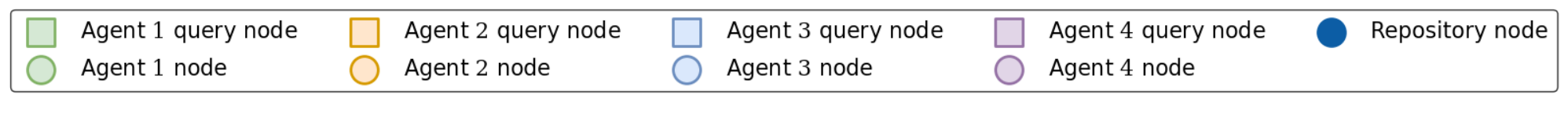}}\vspace{-2em}\\
	\subcaptionbox{\SC\label{sfig:1}}{\includegraphics[width=.13\linewidth]{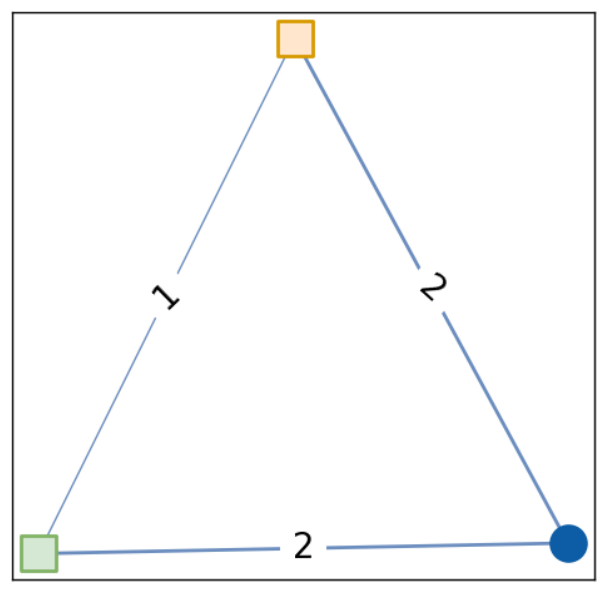}}
	\subcaptionbox{\BT-1\label{sfig:2}}{\includegraphics[width=0.13\linewidth]{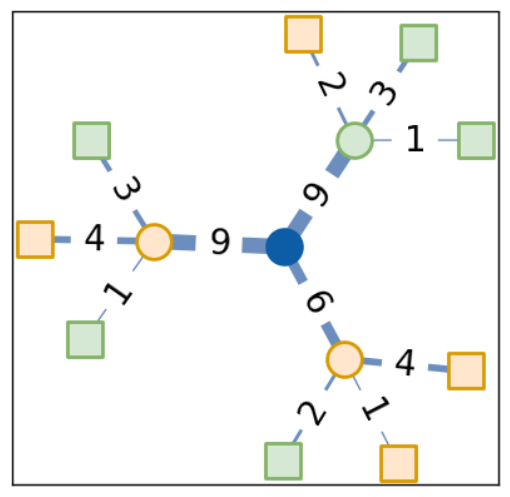}}
	\subcaptionbox{\BT-2\label{sfig:3}}{\includegraphics[width=0.13\linewidth]{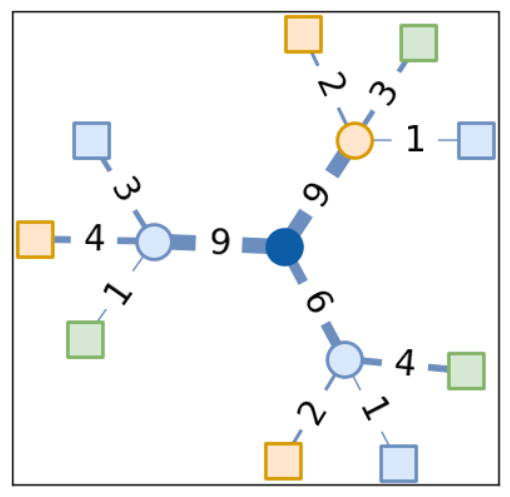}}
	\subcaptionbox{\BT-3\label{sfig:4}}{\includegraphics[width=0.13\linewidth]{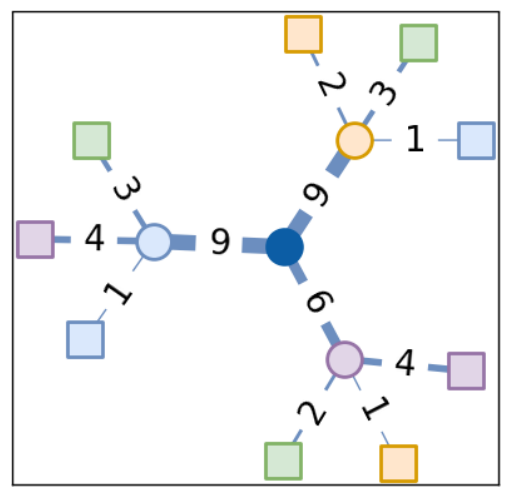}}
	\subcaptionbox{\label{sfig:5}\Grid}{\includegraphics[width=.13\linewidth]{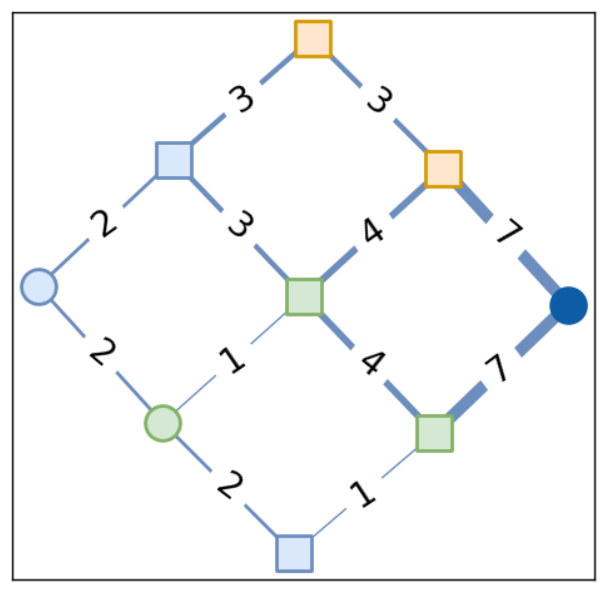}}
	\subcaptionbox{\label{sfig:6}\Abilene}{\includegraphics[width=.13\linewidth]{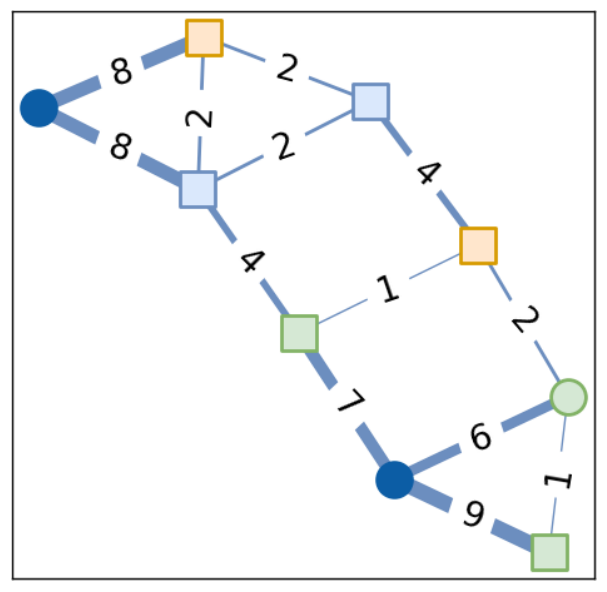}} \subcaptionbox{\label{sfig:7}\GEANT}{\includegraphics[width=.13\linewidth]{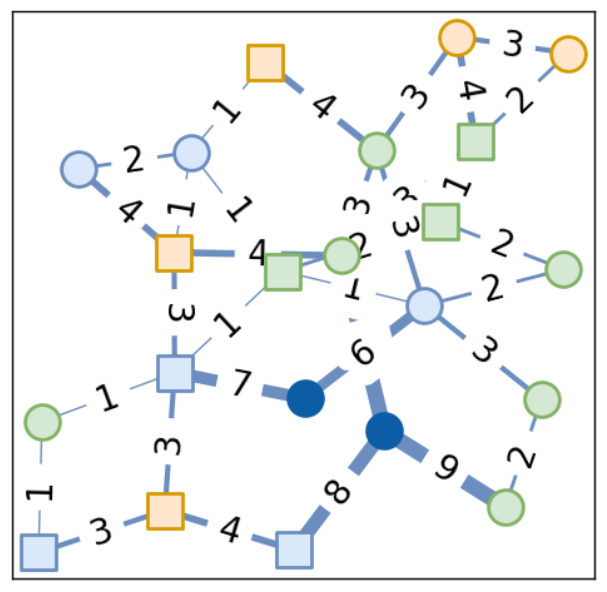}}
	\caption{Network topologies used in experiments.}
	\label{fig:topologies}
\end{figure}

\paragraph{Traces.} Each query node generates requests according to the following:
\begin{itemize}
    \item {Stationary trace (parameters: $\sigma, R, T, F$).} Requests are sampled i.i.d. from a Zipf distribution with exponent $\sigma \in \reals_{\geq 0}$ from  a catalog of files of size $F$. The requests are grouped into  batches of  size $\card{\R_{t}} = R, \forall t \in \T$. 
    \item {Non-Stationary trace (parameters: $\sigma, R, T, F, D$).} Similarly, requests are sampled i.i.d. from a catalog of $F$ files according to a Zipf distribution with exponent $\sigma \in \reals_{\geq 0}$. Every $D$ requests, the popularity distribution is modified in the following fashion:  file $f \in \F = \set{1,2, \dots, F}$ assumes the popularity of file $f' = (f +F/2) \mod F$ ($F$ is even).  The requests are grouped into batches of  size $ \card{\R_{t}} = R, \forall t\in \T$. 
\end{itemize}
\new{The stationary  trace corresponds to the stochastic adversary in Example~\ref{e:example3}, and the non-stationary trace corresponds to a stochastic adversary with perturbations satisfying the partitioned-severity condition in Eq.~\eqref{e:adv2}. } Two sampled traces are depicted in Figure~\ref{fig:traces} in the Appendix. Unless otherwise said, query nodes generate \emph{Stationary} traces
and $\sigma = 1.2$, $T = 10^4$, $R = 50$, and $D = 50$.

\paragraph{Policies.}
We implement the following policies and use them as comparison benchmarks for \algoname.
\begin{itemize}
	\item The classic \lru{} and \lfu{} policies. A request is routed to the cache with minimal retrieval cost among those that store the requested file and this cache provides the content and updates its state corresponding to a hit.
	Moreover, all caches  with a lower retrieval cost update their state as if a miss occurred locally. 
	This corresponds to the popular path replication algorithm~\cite{pathreplication,Ioannidis16}, equipped with \lru{} or \lfu{},  adapted to our setting.
	\item Online slot-fairness (\slotalgoname) policy.
	This policy is the slot-fairness~\eqref{eq:sf_objective} counterpart of \algoname. It is obtained by configuring Algorithm~\ref{alg:primal_dual_ogaogd} with dual (conjugate) subspace $\Theta = \set{(-1)_{i \in \I}}$ 
	(i.e., taking $\alpha \to 0$), which makes ineffective the dual policy in Algorithm~\ref{alg:primal_dual_ogaogd}. The revealed utilities at timeslot $t \in \T$ are the $\alpha$-fairness transformed utilities $\u'_t(\spacedcdot) = (f_{\alpha} \parentheses{u_{t,i} (\spacedcdot)})_{i \in \I}$. The primal allocations are still determined by the same self-confident learning rates' schedule as \algoname{} for a fair comparison.
	The resulting policy is a no-regret policy (see Lemma~\ref{lemma:ogd_regret} in Appendix) w.r.t. the slot-fairness benchmark~\eqref{eq:sf_objective} for some $\alpha \in \reals_{\geq 0}$. 
\end{itemize}

\paragraph{Static analysis of symmetry-breaking parameters.} We start with a numerical investigation of the potential caching gains, and how these are affected by the fairness parameter~$\alpha$. In Figure~\ref{fig:static_exploration},  we consider the \SC{} topology and different values of $\alpha \in [0,2]$. We show the impact  on the fairness benchmark of varying the request patterns ($\sigma \in \set{0.6, 0.8, 1.0, 1.2}$)  for agent~2 under the \emph{Stationary} trace in Fig.~\ref{fig:static_exploration}~(a), and of   varying the retrieval costs between  agent 1's cache and the repository ($w_{(1,3)} \in [2.5,4]$).  
In Figure~\ref{fig:static_exploration}~(a), we observe decreasing the skewness of the popularity distribution decreases the utility of agent~2 as reflected by the downward shift of the Pareto front. We note that, as far as the file popularity distribution at agent 2 is close to the one at agent~1 ($\sigma = 1.2$), different values of alpha still provide similar utilities.  However, in highly asymmetric scenarios, different values of $\alpha$ lead to   clearly distinct utilities for each agent. We also note  that higher values of $\alpha$ guarantees higher fairness by  that increasing the utility of agent~2. Similarly, in Figure~\ref{fig:static_exploration}~(b), we observe increasing the retrieval cost for agent~1 decreases the utility achieved  by the same agent, as reflected by the leftward shift of the Pareto front; moreover, increasing the retrieval costs (higher asymmetry) highlights the difference between different values of~$\alpha$.
\begin{figure}[t]
	\centering
	\subcaptionbox{\label{sfig:static_exploration2}}{\includegraphics[width=0.36\linewidth]{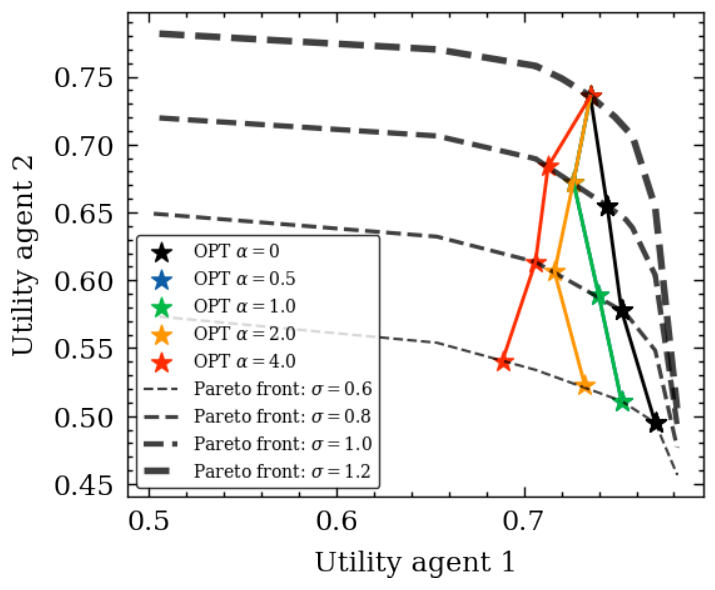}}
	\subcaptionbox{\label{sfig:static_exploration1}}{\includegraphics[width=0.36\linewidth]{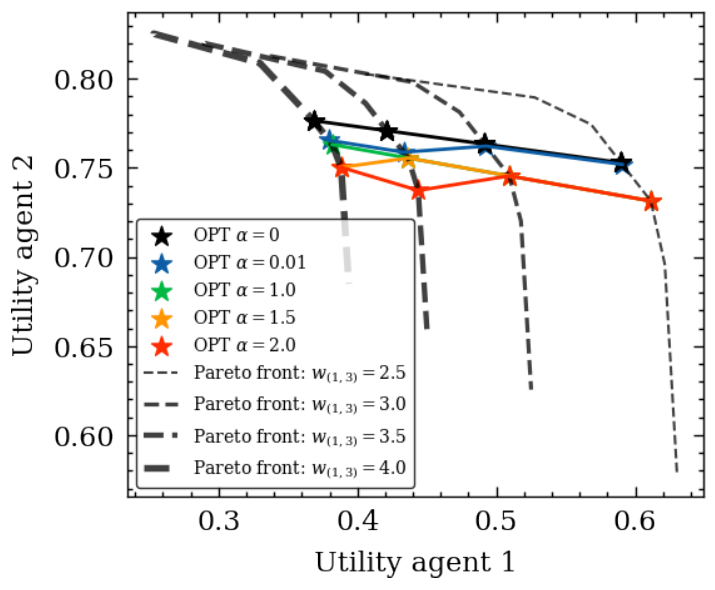}}
	\caption{Pareto front and fairness benchmark's utilities for different values of $\alpha \in [0,2]$ under different request patterns~(a) ($\sigma \in \set{0.6, 0.8, 1.0, 1.2}$) for agent~2, and different retrieval costs~(b) between agent~1’s cache and the repository ($w_{(1,3)} \in [2.5,4.0]$). }
	\label{fig:static_exploration}
\end{figure}

\paragraph{Online analysis of symmetry-breaking parameters.} In Figure~\ref{fig:online_setting1}, we consider the \SC{} topology, and different values of $\alpha \in \set{0,1,2}$. In Figure~\ref{fig:online_setting1}~(a)--(b) we consider the retrieval cost  $w_{(1,3)}= 3.5$ between agent~1's cache node and the repository node. In Figure~\ref{fig:online_setting1}~(c)--(d)  query node of agent~1 generates \emph{Stationary} trace ($\sigma = 1.2$) and query node of agent~2 generates \emph{Stationary} trace ($\sigma = 0.6$). We consider two fixed request batch sizes $R \in \set{1, 50}$. 

In Figures~\ref{fig:online_setting1}~(a) and~(c) (for batch size $R=1$) \horizonalgoname{} approaches the fairness benchmark's utilities for  different values of $\alpha$, but \slotalgoname{} diverges for values of $\alpha \neq 0$. For increased request batch size $R = 50$,   \horizonalgoname{} and \slotalgoname{} exhibit  similar behavior. This is expected under stationary utilities; increasing the batch size reduces the variability in the incurred utilities at every timeslot, and the horizon-fairness and slot-fairness objectives become closer  yielding  similar allocations. Note that this observation implies that \slotalgoname{} is only capable to converge for utilities with low variability, which is far from realistic scenarios. \lfu{} policy outperforms \lru{} and both policies do not approach the Pareto front; thus, the allocations selected by such policies are inefficient and unfair.
\begin{figure}[t]
	\centering
	\subcaptionbox*{}{\includegraphics[trim={0 7.4cm 0 0},clip,width=0.7\linewidth]{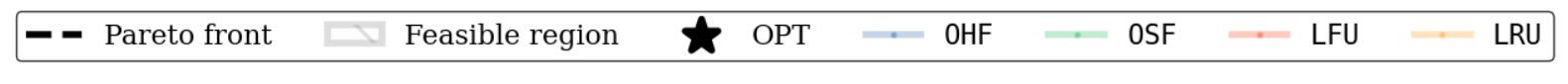}}\vspace{-2em}\\
	\subcaptionbox{$R=1$ \label{sfig:online_setting11}}{\includegraphics[width=0.3\linewidth]{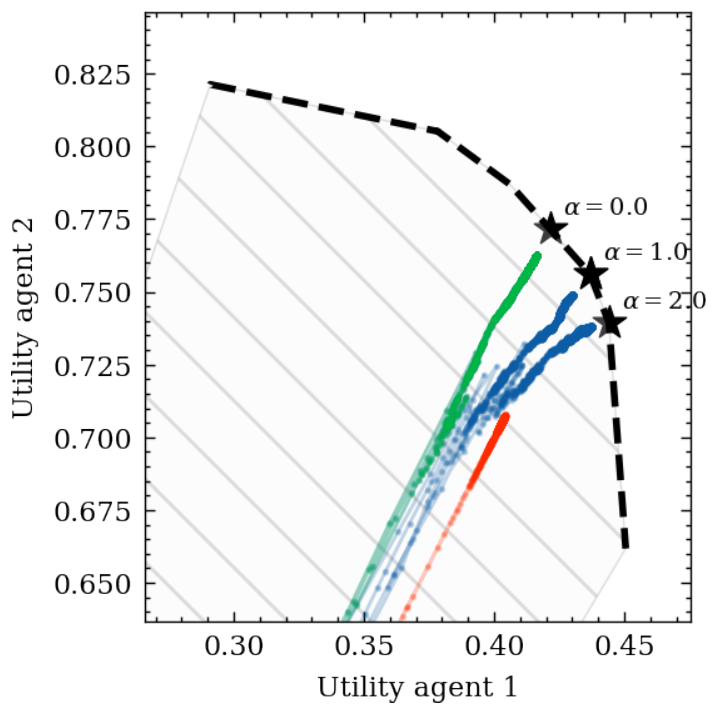}}
	\subcaptionbox{$R=50$\label{sfig:online_setting12}}{\includegraphics[width=0.3\linewidth]{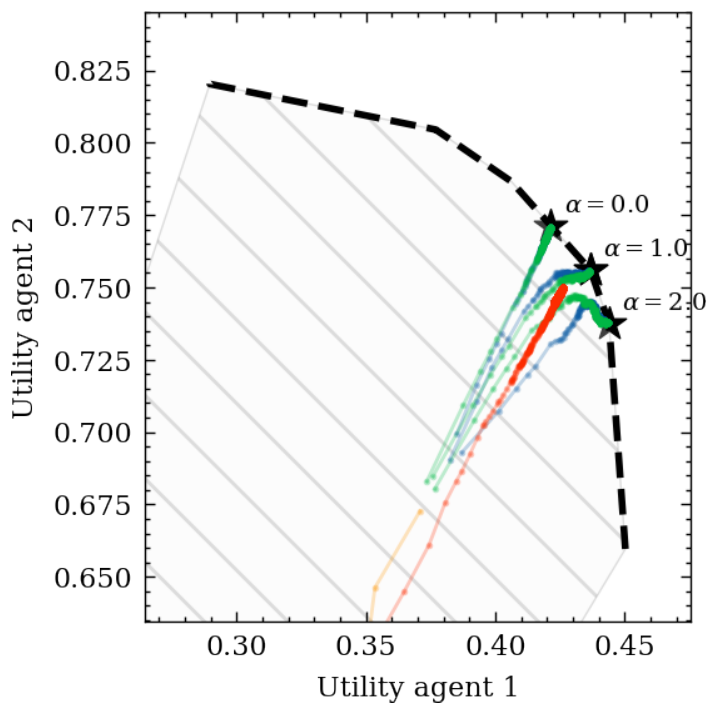}}

	\subcaptionbox{$R=1$\label{sfig:online_setting21}}{\includegraphics[width=0.3\linewidth]{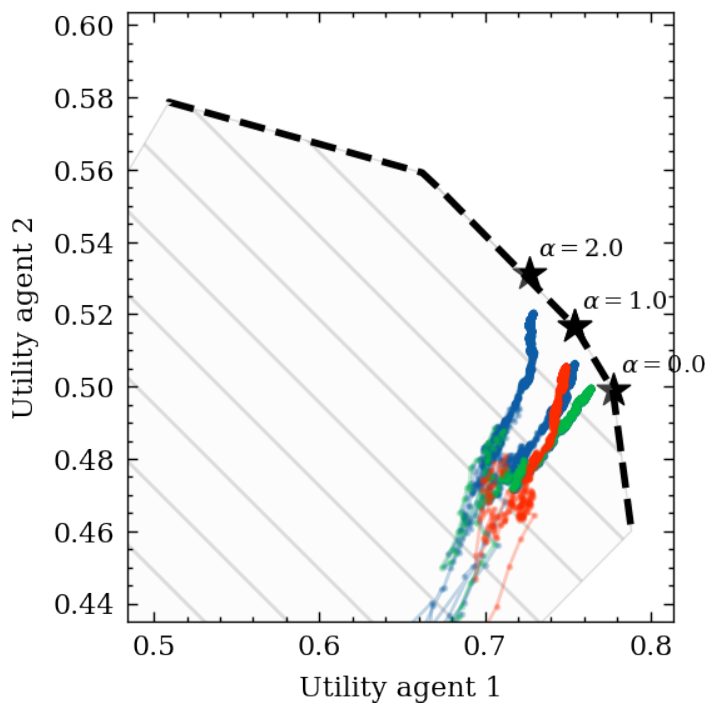}}
	\subcaptionbox{$R=50$\label{sfig:online_setting22}}{\includegraphics[width=0.3\linewidth]{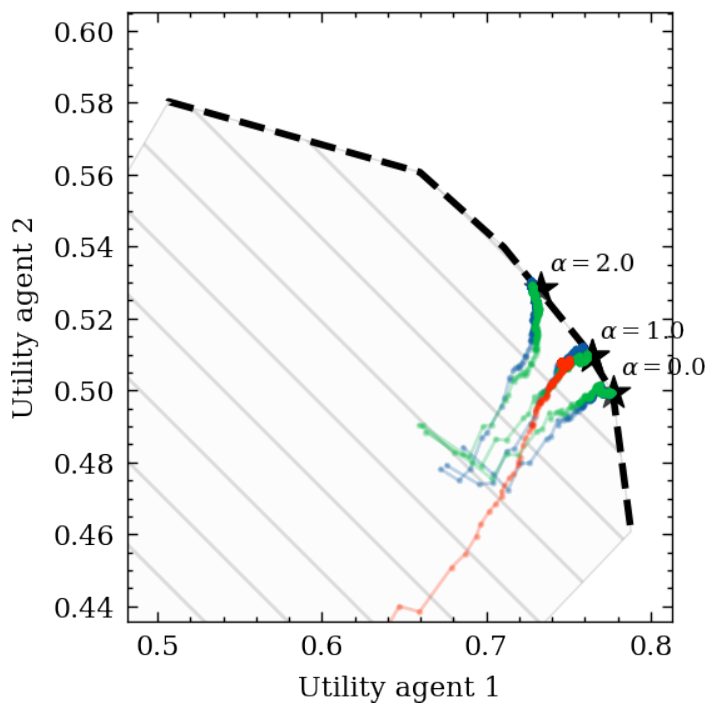}}
	\caption{Time-averaged utilities of policies \horizonalgoname, \slotalgoname, \lru, and \lfu{} under \SC{} topology. Subfigures~(a)--(b) are obtained under  retrieval cost $w_{(1,3)} = 3.5$ for agent 1's query node. Subfigures~(c)--(d) are obtained when agent 2’s query node generates \emph{Stationary trace} ($\sigma = 0.6$). Markers correspond to  iterations in~$\set{100, 200, \dots, 10^4}$. 
	}
	\label{fig:online_setting1}
\end{figure}
\paragraph{Nash bargaining.} 
In Figure~\ref{fig:nash_bargainig}, we consider the \SC{} topology and $\alpha=1$. 
We select different disagreement utilities for agent~2 in $\set{0.0,0.5,0.7,0.75}$, i.e., different utility values agent~2 expects to guarantee itself even in the absence of cooperation. Note how higher values of disagreement utilities 
lead to higher utilities for agent~2 at the fairness benchmark. We select $u_{\star,\min} = 0.01$.

For a small batch size ($R = 1$), \horizonalgoname{} approaches the same utilities achieved by the fairness benchmark for different disagreement points, whereas \slotalgoname{} fails to approach the Pareto front. Similarly, for a larger batch size $R = 50$,    \horizonalgoname{} approaches the fairness benchmark for different disagreement points, but the Pareto front is reached faster than with a batch size $R=1$. \slotalgoname{} diverges for non-zero disagreement points 
when $R=50$, because 
the allocation selected 
for some agent~$i\in\I$ can be smaller than its disagreement utility (i.e., $u_{t,i}(\x_t) - u_{t,i}< 0$), while the $\alpha$-fairness function is only defined for positive arguments.

\begin{figure}[t]
	\centering
	 	\subcaptionbox*{}{\includegraphics[trim={0 7.4cm 0 0},clip,width=0.7\linewidth]{figures_paper/two_players_b_50-legend.pdf}}\vspace{-2em}\\
	\subcaptionbox{$R=1$}{\includegraphics[width=.3\linewidth]{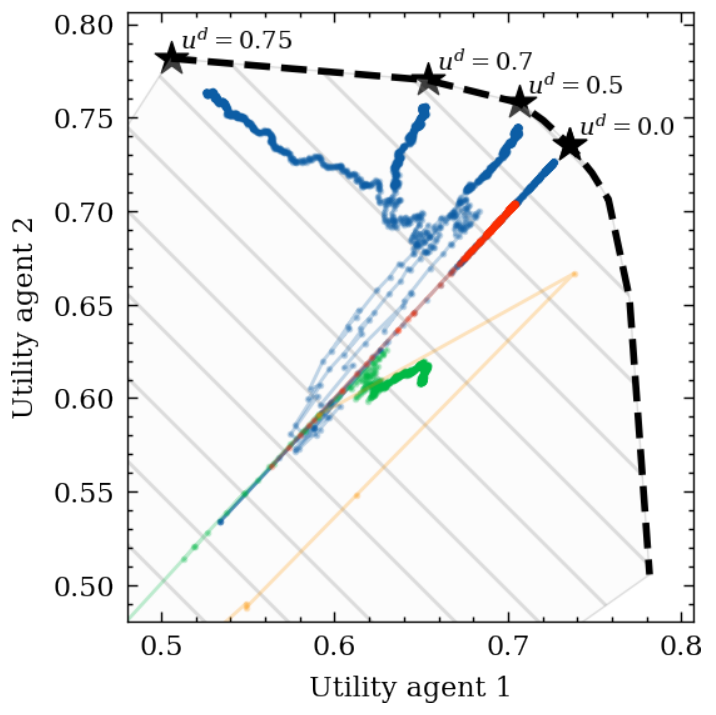}}
	\subcaptionbox{$R=50$}{\includegraphics[width=.3\linewidth]{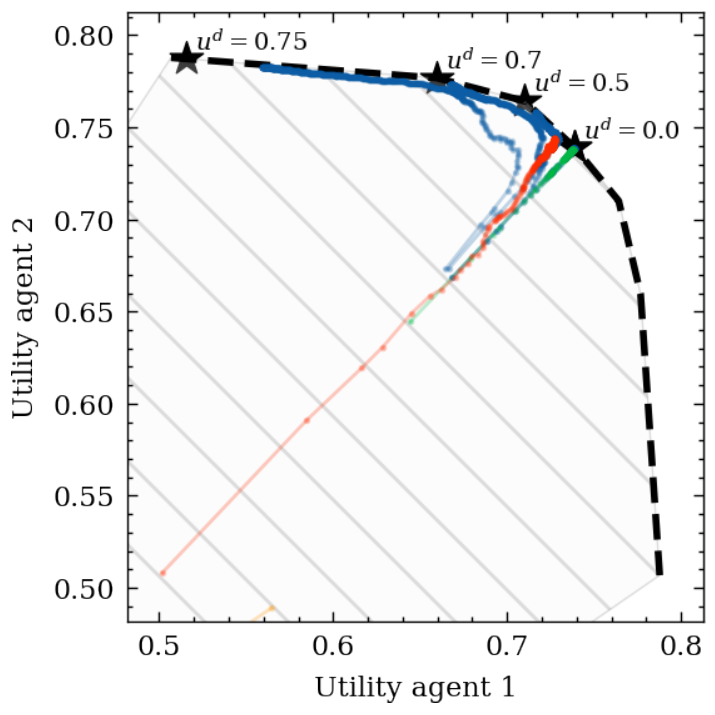}}
	\caption{Time-averaged utilities obtained for policies \horizonalgoname, \slotalgoname, \lru, and \lfu{} for batch sizes (a)~$R = 1$ and (b)~$R=100$, under \SC{} network topology.
	 Markers correspond to  iterations in~$\set{100, 200, \dots, 10^4}$. 
	}
	\label{fig:nash_bargainig}
\end{figure}

	        
	  


\begin{figure}[t]
	\centering
		\subcaptionbox*{}{\includegraphics[trim={0 7.6cm 0 0},clip,width=0.5\linewidth]{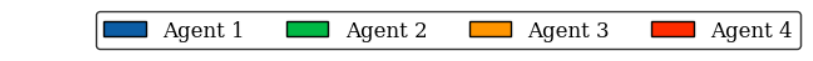}}\vspace{-2em}\\
	\subcaptionbox{$\alpha=1$\label{sfig:multiple_player11}}{\includegraphics[width=.3\linewidth]{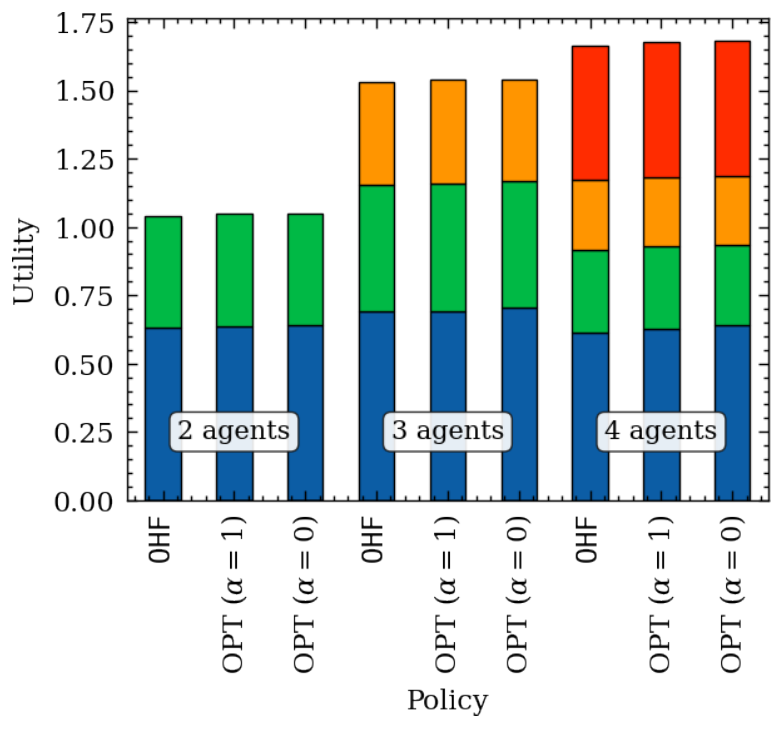}}
	\subcaptionbox{$\alpha=2$\label{sfig:multiple_player12}}{\includegraphics[width=.3\linewidth]{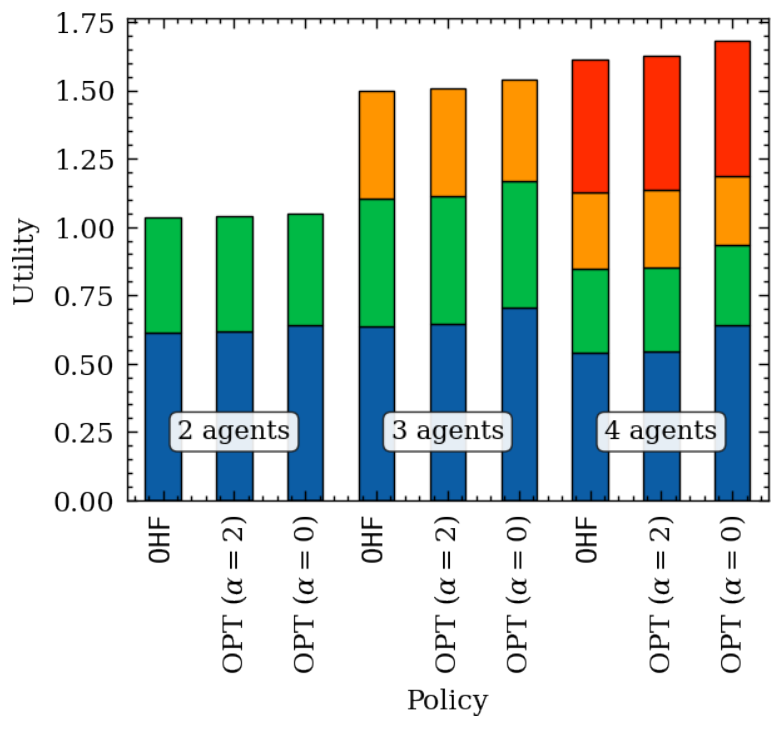}}
	\subcaptionbox{$\alpha=3$\label{sfig:multiple_player13}}{\includegraphics[width=.3\linewidth]{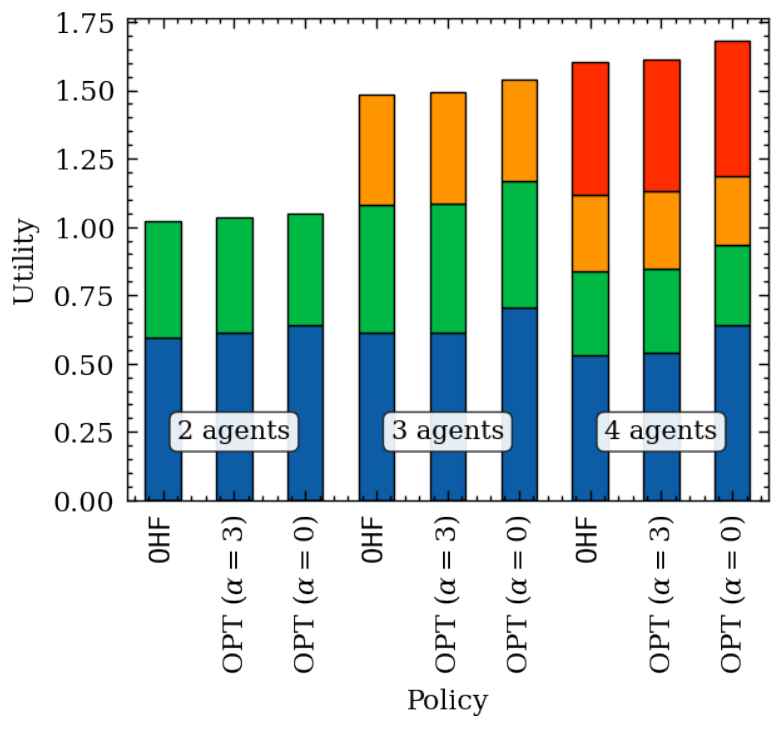}}
	
	\subcaptionbox{PoF\label{sfig:multiple_player14}}{\includegraphics[width=.3\linewidth]{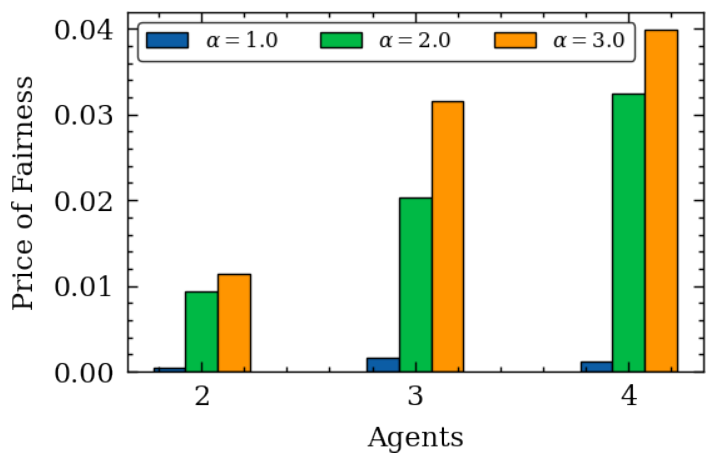}}
	\caption{Subfigures~(a)--(c) provide the average utility for different agents obtained by \horizonalgoname{}, fairness benchmark~(OPT for $\alpha \neq 0$), and  utilitarian benchmark (OPT for $\alpha =0$); and Subfigure~(d) provides the PoF for 
	$\alpha \in \set{0, 1, 2, 3}$ under an increasing number of agents in~$\set{2,3,4}$ and \BT~1--3 network topology.}
	\label{fig:multiple_player1}
	
\end{figure}
\begin{figure}[t]
	\centering
	\subcaptionbox*{}{\includegraphics[trim={0 7.4cm 0 0},clip,width=0.7\linewidth]{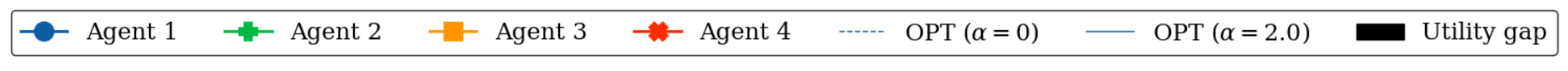}}\vspace{-2em}\\
	\subcaptionbox{}{\includegraphics[width=0.3\linewidth]{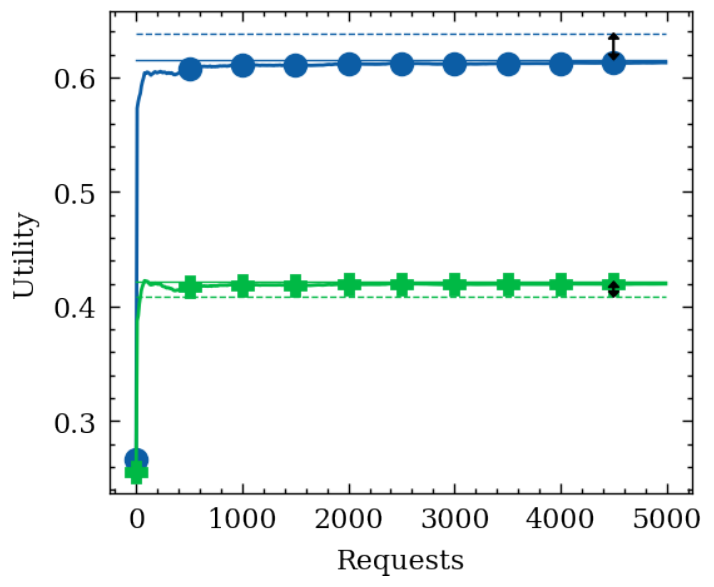}}
	\subcaptionbox{}{\includegraphics[width=0.3\linewidth]{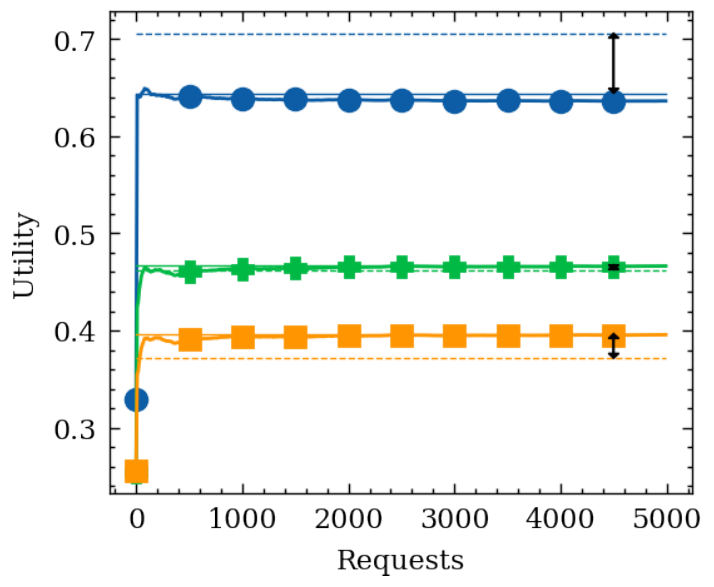}}  \subcaptionbox{}{\includegraphics[width=0.3\linewidth]{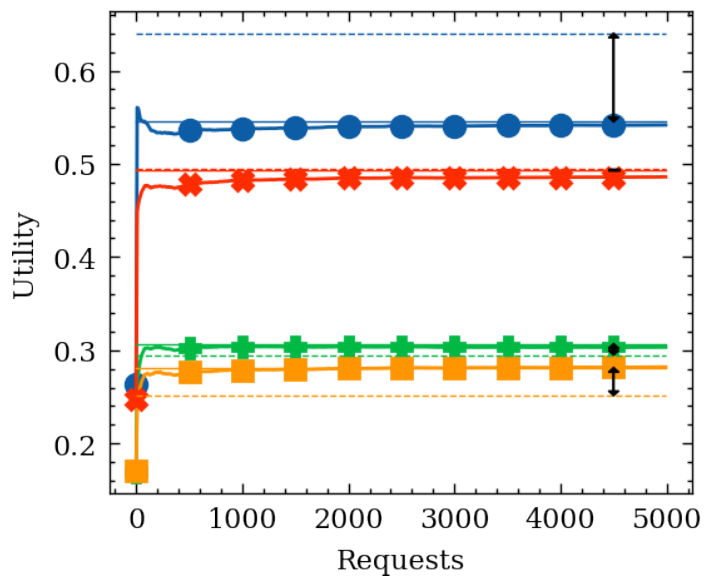}} 
	\caption{Subfigures (a)--(c) provide the time-averaged utility across different agents obtained by \horizonalgoname{} policy  and \OPT{} for $\alpha = 2$ under an increasing number of agents in $\set{2,3,4}$ and \BT~1--3 network topology.}
	\label{fig:multiple_player2}
	
\end{figure}

\begin{figure}
    \centering
	\subcaptionbox{\label{sfig:multiple_topologies}}{\includegraphics[width=0.3\linewidth]{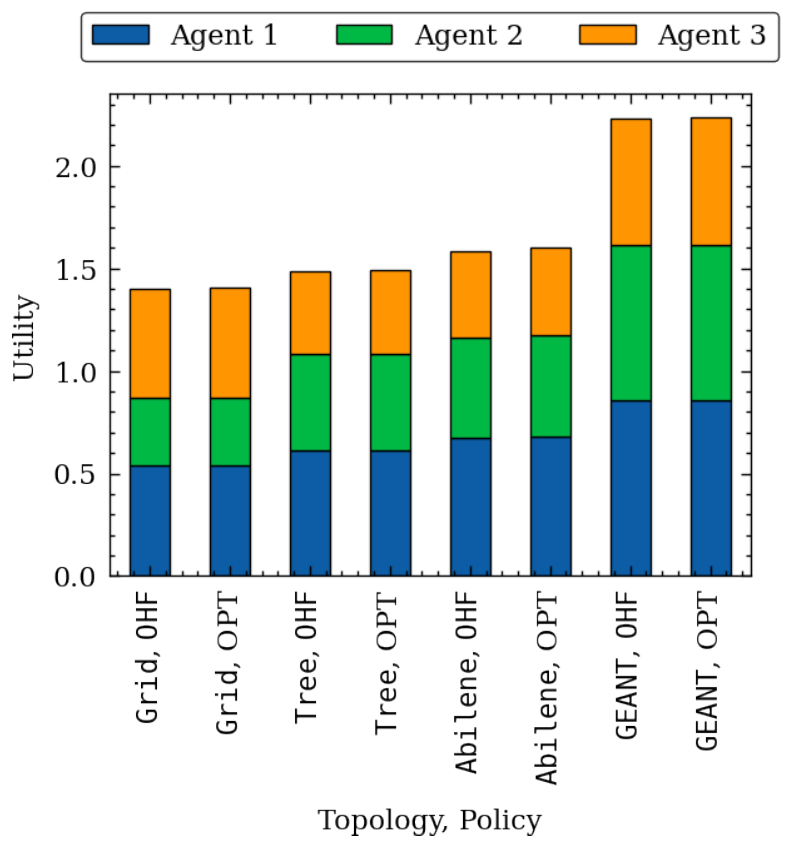}}
    \subcaptionbox{\label{sfig:adversarial_trace}}{\includegraphics[width =0.35\linewidth]{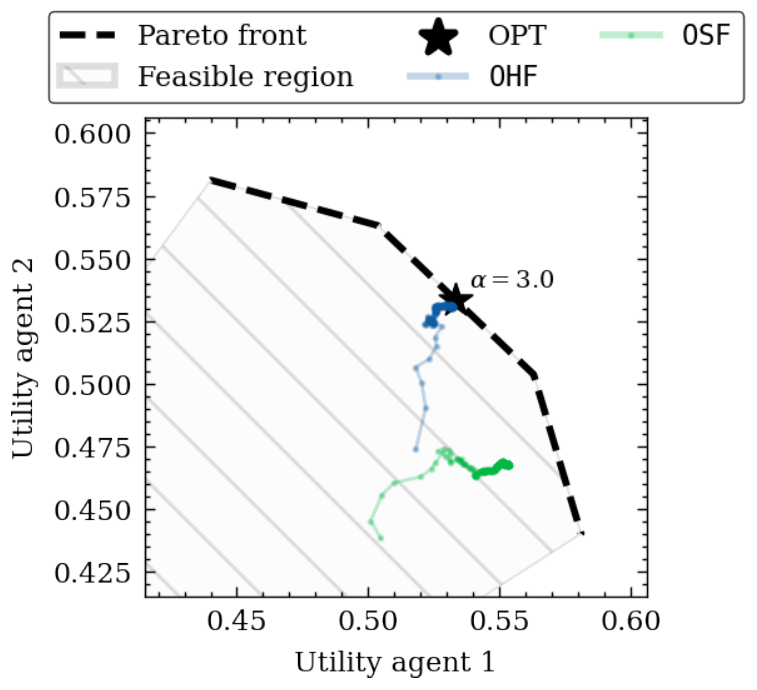}}  
    
    \caption{Subfigure~(a) provides the average utility of \horizonalgoname{} and fairness benchmark  under network topologies \BT, \Grid, \Abilene, \GEANT, \emph{Stationary} trace ($\sigma \in \set{0.6, 0.8, 1.2}$), and $\alpha =3$. Subfigure~(b) provides the time-averaged utilities obtained for \horizonalgoname, \slotalgoname{} and batch size $R = 50, t \in \T$, under network topology Tree~(a) and \emph{Non-stationary} trace. The markers represent the iterations in the set $\set{100, 200, \dots, 10^4}$.}
    \label{fig:multiple_topologies_adversarial_trace}
\end{figure}

\paragraph{Impact of  agents on the price of fairness.} In Figures~\ref{fig:multiple_player1} and~\ref{fig:multiple_player2}, we consider the \BT{}~1--3 topology,  $\alpha \in \set{1, 2, 3}$, and $|\mathcal I | \in \set{2, 3, 4}$. Agents' query nodes generate \emph{Stationary} trace~($\sigma \in \set{1.2,0.8,0.6}$).


In Figures~\ref{fig:multiple_player1}~(a)--(c), we observe for increasing the number of agents, the division of  utilities differs between the fairness benchmark and utilitarian benchmark; moreover, this difference is more evident for larger values of $\alpha$. Figure~\ref{fig:multiple_player1}~(d) provides the price of fairness, and we observe the price of fairness increases with the number of agents and $\alpha$. Nonetheless, under the different settings the price of fairness remains below $4\%$, i.e., we experience at most a $4\%$ drop in the social welfare to provide fair utility distribution across the different agents. Figure~\ref{fig:multiple_player2} gives the time-averaged utilities obtained by running \horizonalgoname{} for $\alpha =2$. We observe the utilities obtained by \horizonalgoname{} quickly converge to the same utilities obtained by the fairness benchmark. In this figure, we also highlight the difference between the utilities achieved by the fairness benchmark and utilitarian benchmark, is reflected by the increasing utility gap for a higher number of participating agents.

\paragraph{Different network topologies.} In Figure~\ref{fig:multiple_topologies_adversarial_trace}~(a), we consider the network topologies \BT, \Grid, \Abilene, \GEANT{} under \emph{Stationary} trace ($\sigma \in \set{0.6, 1.0, 1.2}$) and $\alpha =3$. \horizonalgoname{}  achieves the same utilities as the fairness benchmark across the different topologies. Note that for larger network topologies agents achieve a higher utility because there are more resources available.
\paragraph{Impact of non-stationarity.} In Figure~\ref{fig:multiple_topologies_adversarial_trace}~(b), we consider the \SC{} topology and $\alpha = 3$. The query node of agent~1 generates \emph{Non-Stationary} trace, while the query node of agent~2 generates a shuffled  \emph{Non-Stationary} trace, i.e., we remove the non-stationarity from the trace for agent~2 while preserving the overall popularity of the requests. Therefore, on average the agents are symmetric and experience the same utilities. We observe in Figure~\ref{fig:multiple_topologies_adversarial_trace}~(b) that indeed this is the case for \horizonalgoname{} policy; however, because \slotalgoname{} aims to insure fairness across the different timeslots the agents are not considered symmetric and the average utilities deviate from the Pareto front (not efficient). \slotalgoname{} favors agent~1 by increasing  its utility by $20\%$ compared to the utility of agent~1.

\section{Conclusion and Future Work}
\label{s:conclusion}

In this work, we proposed a novel  \algoname{} policy that achieves horizon-fairness in dynamic resource allocation problems. 
 We demonstrated the applicability  of this policy in virtualized caching systems where different agents can cooperate to increase their caching gain.  Our work paves the road for several interesting next steps. A future research direction is to consider decentralized versions of the policy under which each agent selects an allocation with limited information exchange across agents. For the application to virtualized caching systems, the message exchange techniques in~\cite{Ioannidis16,Li2021} to estimate subgradients can be exploited. 
Another important future research direction is to bridge the horizon-fairness and slot-fairness criteria to target applications where the agents are interested in ensuring fairness within a target time window. We observed that \algoname{} can encapsulate the two criteria, however,  it remains an open question whether a policy can smoothly transition between them. A final interesting research direction is to consider a limited feedback scenario where only part of the utility is revealed to the agents \new{(e.g., bandit feedback)}. Our policy could be extended to this setting through gradient estimation techniques~\cite{Hazanoco2016}.

\section{Acknowledgments}
This publication has emanated from research conducted with the financial support of the European Commission through Grant No. 101017109 (DAEMON). This research was supported in part by the French Government through the ``Plan de Relance'' and ``Programme d’investissements d’avenir''.

\clearpage

\bibliographystyle{ACM-Reference-Format}
\bibliography{Bibliography,Related,george}
\clearpage
\appendix
\section{Technical Lemmas and Definitions}
\subsection{Convex Conjugate}
\begin{definition}
\label{def:conjugate}
Let $F: \U \subset \reals^\I \to \reals \cup \{-\infty,+\infty\}$ be a function. Define ${F}^\star: \reals^\I \to \reals \cup \{-\infty,+\infty\}$  by
\begin{align}
   {F}^\star (\vec \theta ) = \sup_{\u \in \U} \left\{\u \cdot \dv - F(\u)\right\},
\end{align}
for $\dv \in \reals^\I$. This is  the  \emph{convex conjugate}  of $F$.
\end{definition}
\subsection{Convex Conjugate  of $\alpha$-Fairness Function}
\begin{lemma}
\label{lemma:convex_conjugate}
Let $\U = \brackets{u_{\star, \min}, u_{\star, \max}}^\I \subset \reals^\I_{> 0}$, $\Theta = \brackets{-1/u^\alpha_{\star, \min},- 1/u^\alpha_{\star, \max}}^\I \subset \reals^\I_{<0}$, and $F_\alpha: \U  \to \reals$ be an $\alpha$-fairness function~\eqref{e:alpha-fair}. The convex conjugate of $-F_\alpha$ is given by
\begin{align}
    \parentheses{-F_{\alpha}}^\star(\vec \theta) &= \begin{cases}\displaystyle
    \sum_{i \in \I} \frac{\alpha(-\theta_i)^{1-1/\alpha} - 1}{1-\alpha} &  \text{ for $\alpha \in \reals_{\geq 0} \setminus \set{1}$}, \\ 
     \displaystyle \sum_{i \in \I}  - \log(-\theta_i) - 1   &\text{ for $\alpha  = 1$},
    \end{cases}
\end{align}
where $\dv \in  \Theta$.
\end{lemma}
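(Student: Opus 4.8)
The plan is to unfold the definition of the convex conjugate and exploit that both $-F_\alpha$ and its domain $\U$ are separable across coordinates. By Definition~\ref{def:conjugate} applied to $-F_\alpha$, we have $\parentheses{-F_\alpha}^\star(\dv) = \sup_{\u \in \U}\set{\dv \cdot \u + F_\alpha(\u)}$. Since $F_\alpha(\u) = \sum_{i \in \I} f_\alpha(u_i)$ and $\U = \brackets{u_{\star,\min}, u_{\star,\max}}^\I$ is a product of intervals, the objective separates and the supremum distributes over coordinates, so that $\parentheses{-F_\alpha}^\star(\dv) = \sum_{i \in \I} \sup_{u \in \brackets{u_{\star,\min},u_{\star,\max}}} \set{\theta_i u + f_\alpha(u)}$. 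It therefore suffices to solve, for each coordinate, the scalar problem $\max_u h(u)$ with $h(u) \triangleq \theta_i u + f_\alpha(u)$.

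For the scalar problem I would proceed by first-order conditions. Differentiating gives $h'(u) = \theta_i + u^{-\alpha}$ for $\alpha \neq 1$ and $h'(u) = \theta_i + 1/u$ for $\alpha = 1$; since $\theta_i < 0$, setting $h'(u)=0$ yields the interior stationary point $u^\star = (-\theta_i)^{-1/\alpha}$ (respectively $u^\star = -1/\theta_i$), which is strictly positive. Concavity of $h$ on $u>0$ (from $h''(u) = -\alpha u^{-\alpha-1} < 0$, resp. $-1/u^2 < 0$) certifies that this stationary point is the global maximizer of the unconstrained problem.

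The key step, and the one requiring care, is to verify that this unconstrained maximizer is feasible, i.e. lies in $\brackets{u_{\star,\min}, u_{\star,\max}}$, exactly when $\theta_i \in \brackets{-1/u_{\star,\min}^\alpha, -1/u_{\star,\max}^\alpha}$. This is precisely why the dual box $\Theta$ is defined as it is: the map $\theta_i \mapsto (-\theta_i)^{-1/\alpha}$ is continuous and monotone and carries the endpoint $-1/u_{\star,\min}^\alpha$ to $u_{\star,\min}$ and the endpoint $-1/u_{\star,\max}^\alpha$ to $u_{\star,\max}$ (and likewise $\theta_i \mapsto -1/\theta_i$ for $\alpha=1$). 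Hence for every $\dv \in \Theta$ the box constraint is inactive and the supremum equals $h(u^\star)$; were $\theta_i$ to fall outside $\Theta$, the maximizer would hit a boundary and the expression would acquire extra piecewise terms. I expect this feasibility/alignment argument to be the main obstacle, since everything else reduces to a routine one-variable computation.

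Finally I would substitute $u^\star$ back into $h$ and simplify. For $\alpha \neq 1$, writing $v = -\theta_i > 0$, one has $\theta_i u^\star = -v^{1-1/\alpha}$ while $f_\alpha(u^\star) = \parentheses{v^{1-1/\alpha}-1}/(1-\alpha)$; placing both terms over the common denominator $1-\alpha$ collapses them to $\parentheses{\alpha v^{1-1/\alpha}-1}/(1-\alpha)$, i.e. $\parentheses{\alpha(-\theta_i)^{1-1/\alpha}-1}/(1-\alpha)$. For $\alpha = 1$, $\theta_i u^\star = -1$ and $f_1(u^\star) = \log(-1/\theta_i) = -\log(-\theta_i)$, giving $-\log(-\theta_i)-1$. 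Summing the per-coordinate values over $i \in \I$ reproduces the two cases in the statement, completing the proof.
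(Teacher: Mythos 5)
Your proposal is correct and follows essentially the same route as the paper's proof: reduce to the scalar conjugate of $-f_\alpha$ by separability, locate the interior maximizer $u^\star=(-\theta_i)^{-1/\alpha}$ via the first-order condition, and substitute back. If anything, you are slightly more explicit than the paper about why the box constraint is inactive for $\dv\in\Theta$ (the paper simply notes the derivative's sign change and asserts the range of validity), but the substance is identical.
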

\begin{proof}
The convex conjugate of $-f_{\alpha}(u) \triangleq -\frac{u^{1-\alpha} - 1}{1-\alpha}$ for $u \in \brackets{u_{\star,\min}, u_{\star,\max}}$ and $\alpha \in \reals_{\geq 0}\setminus \set{1}$ is given by \begin{align}
\parentheses{-f_{\alpha}}^\star(\theta) = \max_{u \in \brackets{u_{\star,\min}, u_{\star,\max}}} \set{u \theta +\frac{u^{1-\alpha} - 1}{1-\alpha}}.\label{e:sup-fenchel}
\end{align}
We evaluate the derivative to characterize the maxima of r.h.s. term in the above equation
\begin{align}
    \frac{\partial}{\partial u} \parentheses{u \theta +\frac{u^{1-\alpha} - 1}{1-\alpha}} = \theta + \frac{1}{u^\alpha}.
\end{align}
The function $\theta + \frac{1}{u^\alpha}$ is a decreasing function in $u$; thus $\theta + \frac{1}{u^\alpha} \geq 0$ when  $u \leq \parentheses{-\frac{1}{\theta}}^{\frac{1}{\alpha}}$, and $\theta + \frac{1}{u^\alpha} <  0$ otherwise. The maximum is achieved at $u = \parentheses{-\frac{1}{\theta}}^{\frac{1}{\alpha}}$. It holds through Eq.~\eqref{e:sup-fenchel}
\begin{align}
    \parentheses{-f_{\alpha}}^\star(\theta) &= \frac{\alpha(-\theta)^{1-1/\alpha} - 1}{1-\alpha} &\text{for}\qquad \theta \in \brackets{-1/u^\alpha_{\star, \min},-1/u^\alpha_{\star, \max}}.\label{e:fenchel-conjugate1}
\end{align}
Moreover, it is easy to check that the same argument holds for $f_{1} (u) = \log(u)$ and we have
\begin{align}
    \parentheses{-f_{1}}^\star(\theta) &= -1 - \log(-\theta)&\text{for}\qquad \theta \in \brackets{-{1}/{u_{\star, \min}},-{1}/{u_{\star, \max}}}.\label{e:fenchel-conjugate2}
\end{align}
The convex conjugate of $-F_\alpha(\u) = \sum_{i \in \I} f_{\alpha} (u_i)$ for $\u \in \U $, using Eq.~\eqref{e:fenchel-conjugate1} and Eq.~\eqref{e:fenchel-conjugate2}, is given by
\begin{align}
    \parentheses{-F_{\alpha}}^\star(\vec \theta) =  \sum_{i \in \I} \parentheses{-f_{\alpha}}^\star( \theta_i)  &= \begin{cases}\displaystyle
    \sum_{i \in \I} \frac{\alpha(-\theta_i)^{1-1/\alpha} - 1}{1-\alpha} &  \text{ for $\alpha \in \reals_{\geq 0} \setminus \set{1}$}, \\ 
     \displaystyle \sum_{i \in \I}  - \log(-\theta_i) - 1   &\text{ for $\alpha  = 1$},
    \end{cases}
\end{align}
for $\dv \in \Theta$, because $F_\alpha(\u)$ is separable in $\u \in \U$. 

\end{proof}

\subsection{Convex Biconjugate of $\alpha$-Fairness Functions}
The following Lemma provides  a stronger condition on $\vec \theta$ compared to \cite[Lemma 2.2]{agrawal2014bandits}, i.e., we restrict  $\dv \in \Theta$ instead of $\norm{\dv}_{\star} \leq L$ where $L \geq \norm{ \nabla_{\u} F_\alpha(\u)}_{\star}$ for all $\u \in \U$ and $\norm{\spacedcdot}_\star$ is the dual norm of $\norm{\spacedcdot}$.
\begin{lemma}
\label{l:recover_f}
Let $\U = \brackets{u_{\star,\min}, u_{\star,\max}}^\I \subset \reals^\I_{> 0}$,  $\Theta = \brackets{-1/u_{\star, \min}^\alpha,- 1/u_{\star, \max}^\alpha}^\I\subset \reals^\I_{<0}$, and $F_\alpha: \U \to \reals$ be an $\alpha$-fairness function~\eqref{e:alpha-fair}. The function $F_{\alpha}$ can be always be recovered from the convex conjugate $\parentheses{-F_\alpha}^{\star}$, i.e.,
\begin{align}
    F_{\alpha}(\u) = \min_{\dv \in \Theta} \left\{  \parentheses{-F_{\alpha}}^\star(\vec \theta) - \vec \theta \cdot \u\right\},
\end{align}
for $\u \in \U$.
\end{lemma}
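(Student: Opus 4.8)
The plan is to exploit the separability of both $F_\alpha$ and its conjugate, reducing the claimed identity to a one-dimensional minimization in each coordinate, and to invoke the explicit form of $\parentheses{-F_\alpha}^\star$ already derived in Lemma~\ref{lemma:convex_conjugate}. Conceptually the statement is the Fenchel--Moreau biconjugacy $\parentheses{\parentheses{-F_\alpha}^\star}^\star = -F_\alpha$, valid because $-F_\alpha$ is proper, closed, and convex on $\U$; the only nonroutine content is that the minimizing dual variable already lies in the box $\Theta$, so that constraining $\dv$ to $\Theta$ leaves the optimal value unchanged.

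First I would write the objective coordinatewise. Since $\parentheses{-F_\alpha}^\star(\dv) = \sum_{i\in\I}\parentheses{-f_\alpha}^\star(\theta_i)$ and $\dv\cdot\u = \sum_{i\in\I}\theta_i u_i$, and since $\Theta$ is a product of intervals, we have
\begin{align}
\min_{\dv\in\Theta}\set{\parentheses{-F_\alpha}^\star(\dv)-\dv\cdot\u} = \sum_{i\in\I}\min_{\theta_i}\set{\parentheses{-f_\alpha}^\star(\theta_i)-\theta_i u_i},
\end{align}
with each scalar minimization over $\theta_i\in\brackets{-1/u_{\star,\min}^\alpha,-1/u_{\star,\max}^\alpha}$. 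It therefore suffices to show that for each $i\in\I$ the scalar minimum equals $f_\alpha(u_i)$.

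Next, for fixed $u_i\in\brackets{u_{\star,\min},u_{\star,\max}}$ I would differentiate $g(\theta)\triangleq\parentheses{-f_\alpha}^\star(\theta)-\theta u_i$. Using the explicit conjugate, a short computation gives $g'(\theta)=(-\theta)^{-1/\alpha}-u_i$ for $\alpha\neq1$ and $g'(\theta)=-1/\theta-u_i$ for $\alpha=1$; in both cases $g$ is convex, as its derivative is increasing in $\theta$ on $\reals_{<0}$, so the unique stationary point is the global minimizer. Solving $g'(\theta)=0$ yields $\theta^\star=-u_i^{-\alpha}$ (resp. $\theta^\star=-1/u_i$). The crucial check is that $\theta^\star\in\Theta$: since $u\mapsto -u^{-\alpha}$ is increasing and $u_i\in\brackets{u_{\star,\min},u_{\star,\max}}$, we obtain $\theta^\star\in\brackets{-1/u_{\star,\min}^\alpha,-1/u_{\star,\max}^\alpha}$, so the box constraint is inactive and the constrained and unconstrained minima coincide.

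Finally I would substitute $\theta^\star$ back. Using $(-\theta^\star)^{1-1/\alpha}=u_i^{1-\alpha}$ one computes, for $\alpha\neq1$,
\begin{align}
g(\theta^\star)=\frac{\alpha u_i^{1-\alpha}-1}{1-\alpha}+u_i^{1-\alpha}=\frac{u_i^{1-\alpha}-1}{1-\alpha}=f_\alpha(u_i),
\end{align}
and analogously $g(\theta^\star)=\log(u_i)=f_1(u_i)$ for $\alpha=1$. Summing over $i\in\I$ recovers $F_\alpha(\u)$ and establishes the identity, with the minimum attained at $\dv^\star=\parentheses{-u_i^{-\alpha}}_{i\in\I}\in\Theta$. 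I expect the main obstacle to be precisely the inclusion $\theta^\star\in\Theta$: this is what legitimizes minimizing over the restricted box $\Theta$ rather than over an unbounded dual-norm ball as in the cited prior work, and it hinges on the matching choice of the endpoints of $\U$ and $\Theta$.
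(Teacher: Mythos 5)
Your proof is correct, and it takes a genuinely different route from the paper's. You reduce the identity to a separable, coordinatewise scalar minimization, differentiate $g(\theta)=\parentheses{-f_\alpha}^\star(\theta)-\theta u_i$ using the explicit conjugate from Lemma~\ref{lemma:convex_conjugate}, verify convexity, locate the stationary point $\theta^\star=-u_i^{-\alpha}$, check it lies inside the box, and substitute back; the computations all check out. The paper instead never touches the closed form of the conjugate: it expands $\parentheses{-F_\alpha}^\star$ as a supremum over $\u'\in\U$, swaps $\min$ and $\max$ via a minimax theorem, and then uses concavity of $F_\alpha$ together with the observation that $-\nabla F_\alpha(\u)\in\Theta$ to pin the saddle value at $F_\alpha(\u)$ (attained at $\u'=\u$), mirroring the cited argument of Agrawal and Devanur. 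Both proofs ultimately rest on the same key fact --- that the maximizing gradient $\parentheses{-u_i^{-\alpha}}_{i\in\I}$ falls inside $\Theta$ because the endpoints of $\Theta$ are matched to those of $\U$ --- which you correctly single out as the nonroutine step. What your version buys is an elementary, self-contained argument that also exhibits the minimizer $\dv^\star$ explicitly (useful intuition for why the dual iterates of \algoname{} should track $-\nabla F_\alpha$ of the running average utility); what the paper's version buys is independence from the explicit conjugate formula and hence easier generalization to non-separable concave global functions, at the cost of invoking a min--max exchange whose hypotheses it leaves implicit. One small caveat for your write-up: the derivative formula $g'(\theta)=(-\theta)^{-1/\alpha}-u_i$ degenerates at $\alpha=0$, where $\Theta$ collapses to the single point $(-1)_{i\in\I}$ and the claim should be checked directly rather than by a stationarity argument; the paper's proof is equally silent on this edge case.
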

\begin{proof}
This proof follows the same lines of the proof of~\cite[Lemma 2.2]{agrawal2014bandits}. Since $\vec u \in \U$, therefore the gradient of $F_\alpha$ at point $\vec u$ is given as  $\nabla_{\u} F_\alpha(\vec u) = \brackets{1/{u^{\alpha}_i}}_{i \in \I} \in -\Theta =\brackets{1/u_{\star, \min}^\alpha, 1/u_{\star, \max}^\alpha}^\I$. Moreover, it holds 
\begin{align}
    \min_{\dv \in \Theta} \left\{ \parentheses{-F_\alpha}^\star(\vec\theta) - \vec \theta \cdot \u\right\} &= \min_{\dv \in \Theta} \left\{ \max_{\vec u' \in \U}\left\{ \vec \theta\cdot \u' + F_\alpha(\u) \right\}- \vec \theta \cdot \u\right\}\\
    &= \max_{\vec u' \in \U} \min_{\dv \in \Theta} \left\{  \vec \theta\cdot \u' + F_\alpha(\u') - \vec \theta \cdot \u\right\}. &\text{Minmax theorem}
    \label{e:minmax_eq}
\end{align}
We take 
\begin{align*}
    \min_{\dv \in \Theta} \left\{  \vec \theta\cdot \u' + F_\alpha(\u') - \vec \theta \cdot \u\right\} &=    \min_{\dv \in \Theta} \left\{F_\alpha(\u') + \vec \theta \cdot  (\u' - \u)\right\}\\
    &\leq F_\alpha(\u') -\nabla F_\alpha(\u) \cdot  (\u' - \u) &\text{Because $-\nabla F_\alpha(\u) \in  \Theta$}\\ 
    &\leq F_\alpha(\vec u). &\text{Use concavity of $F_\alpha$}
\end{align*}
The equality is achieved when $\u = \u'$ and the maximum value in \eqref{e:minmax_eq} is attained for this value. We conclude the proof.
\end{proof}
\subsection{Online Gradient Descent (OGD) with Self-Confident Learning Rates}
Lemma~\ref{lemma:ogd_regret} provides the regret guarantee of OGD oblivious to the time horizon $T$ and bound on subgradients' norm for any $t \in \T$. This adopts the idea of~\cite{AUER200248} which  denominate such learning schemes as self-confident. 
\begin{lemma}
\label{lemma:ogd_regret}
Consider a convex set $\X$, a sequence of $\sigma$-strongly convex functions $f_t: \X \to \reals$ with subgradient $\vec g_t \in \partial f_t(\x_t)$ at $\x_t$, and OGD update rule $\x_{t+1} = \Pi_{\X} \parentheses{\x_{t} -  \eta_t \vec g_t } = \arg\min_{\x \in \X} \norm{\x - \parentheses{\x_{t} - \eta_t\vec g_t}}_2$  initialized at $\x_1 \in \X$.  Let $\diam {\X} \triangleq \max\set{\norm{\x - \x'}_2: \x, \x' \in \X }$. Selecting the learning rates as $\vec \eta: \T \to \reals$ such that $\eta_{t} \leq  \eta_{t-1}$ for all $t > 1$ gives the following regret guarantee against a fixed decision $\x \in \X$:
\begin{align}
    \sum_{t \in \T} f_t(\x_t)-f_t(\x) \leq \mathrm{diam}^2(\X) \sum^T_{t=1} \parentheses{\frac{1}{\eta_t} - \frac{1}{\eta_{t-1}} - \sigma} +  \sum^T_{t=1}\eta_t \norm{\vec g_t}^2_2.
\end{align}
\begin{itemize}
    \item When $\sigma >0$, selecting the learning rate schedule $\eta_t = \frac{1}{\sigma t}$ for $t \in \T$ gives 
\begin{align}
    \sum_{t \in \T} f_t(\x_t)-f_t(\x) &\leq \sum^T_{t=1}\frac{\norm{\vec g_t}^2_2}{t\sigma}  = \BigO{\log(T)}.
\end{align}
\item When $\sigma = 0$, selecting the learning rate schedule $\eta_t = \frac{\diam{\X}}{\sqrt{ \sum^t_{s=1} \norm{\vec g_s}^2_2}}$ for $t \in \T$  gives
\begin{align}
        \sum_{t \in \T} f_t(\x_t)-f_t(\x) &\leq 1.5 \diam{\X} \sqrt{ \sum_{t\in \T} \norm{\vec g_s}^2_2} = \BigO{\sqrt{T}}.
\end{align}
\end{itemize}
\end{lemma}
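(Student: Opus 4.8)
The plan is to run the standard online-gradient-descent potential argument, combining the non-expansiveness of the Euclidean projection with $\sigma$-strong convexity, and then specialize the resulting telescoping sum to each of the two learning-rate schedules.

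\emph{One-step inequality.} First I would fix the comparator $\x \in \X$ and track the potential $\norm{\x_t - \x}_2^2$. Since $\x \in \X$ and Euclidean projection onto the convex set $\X$ is non-expansive, the update rule gives $\norm{\x_{t+1}-\x}_2^2 \le \norm{\x_t - \eta_t \vec g_t - \x}_2^2$. Expanding the right-hand side and rearranging isolates the linear term, $\langle \vec g_t, \x_t - \x\rangle \le \frac{1}{2\eta_t}\left(\norm{\x_t - \x}_2^2 - \norm{\x_{t+1}-\x}_2^2\right) + \frac{\eta_t}{2}\norm{\vec g_t}_2^2$. Invoking $\sigma$-strong convexity of $f_t$ at $\x_t$ with subgradient $\vec g_t$, namely $f_t(\x) \ge f_t(\x_t) + \langle \vec g_t, \x - \x_t\rangle + \frac{\sigma}{2}\norm{\x - \x_t}_2^2$, converts this into a per-round bound on $f_t(\x_t) - f_t(\x)$ involving the potential gap, the step $\frac{\eta_t}{2}\norm{\vec g_t}_2^2$, and the negative curvature term $-\frac{\sigma}{2}\norm{\x_t-\x}_2^2$.

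\emph{Summation and telescoping.} Next I would sum over $t \in \T$. Because $\eta_t$ varies, the squared-distance differences do not telescope directly, so I would apply Abel summation, collecting the coefficient of $\norm{\x_t - \x}_2^2$ as $\frac{1}{\eta_t} - \frac{1}{\eta_{t-1}}$ with the convention $1/\eta_0 := 0$. The monotonicity assumption $\eta_t \le \eta_{t-1}$ makes these coefficients nonnegative, which legitimizes replacing each $\norm{\x_t - \x}_2^2$ by $\mathrm{diam}^2(\X)$; the terminal term $\norm{\x_{T+1}-\x}_2^2/\eta_T$ is nonnegative and is simply discarded. Carrying the $-\frac{\sigma}{2}$ curvature terms through gives the tight inequality with explicit $\tfrac12$ factors, which in particular implies the general bound displayed in the statement.

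\emph{Specialization.} For $\sigma > 0$ with $\eta_t = 1/(\sigma t)$, the key observation is that $\frac{1}{\eta_t} - \frac{1}{\eta_{t-1}} = \sigma$, so the entire $\mathrm{diam}^2(\X)$ sum collapses to zero and only $\sum_{t} \eta_t\norm{\vec g_t}_2^2 = \sum_{t} \norm{\vec g_t}_2^2/(\sigma t)$ survives, which is $\BigO{\log T}$ by the harmonic series and boundedness of the subgradients. For $\sigma = 0$ with $\eta_t = \diam{\X}/\sqrt{\sum_{s=1}^t \norm{\vec g_s}_2^2}$, the first sum telescopes to $\mathrm{diam}^2(\X)/\eta_T = \diam{\X}\sqrt{\sum_{t\in\T}\norm{\vec g_t}_2^2}$, while the step sum is controlled by the elementary inequality $\sum_{t=1}^T a_t/\sqrt{\sum_{s=1}^t a_s} \le 2\sqrt{\sum_{t=1}^T a_t}$ applied with $a_t = \norm{\vec g_t}_2^2$; combining the two contributions with their $\tfrac12$ factors yields exactly the constant $1.5$ and hence the $\BigO{\sqrt{T}}$ rate. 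The routine parts are the projection expansion and the convexity substitution; the step requiring the most care is the Abel/telescoping manipulation with non-increasing, gradient-dependent learning rates — tracking the boundary conventions, checking coefficient signs to replace $\norm{\x_t-\x}_2^2$ by $\mathrm{diam}^2(\X)$, establishing the auxiliary square-root inequality (by induction or a simple integral comparison), and keeping the $\tfrac12$ factors consistent so that both schedules reproduce the stated constants.
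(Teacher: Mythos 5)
Your proposal is correct and follows essentially the same route as the paper: the non-expansiveness/Pythagorean step, the strong-convexity substitution, Abel summation with the convention $1/\eta_0 = 0$ and the diameter bound justified by $\eta_t \le \eta_{t-1}$, and the same specializations (cancellation of the diameter sum for $\eta_t = 1/(\sigma t)$, and the auxiliary inequality $\sum_t a_t/\sqrt{\sum_{s\le t} a_s} \le 2\sqrt{\sum_t a_t}$ yielding the constant $1.5$ for the self-confident schedule). No gaps.
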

\begin{proof}
This proof follows the same lines of the proof of~\cite{Hazanoco2016}. We do not assume  a bound on the gradients is known beforehand and the time horizon $T$.
Take a fixed $\x \in \X$. Applying the definition of $\sigma$-strong convexity to the pair of points $\x_t$ and $\x$, we have
\begin{align}
    2 \parentheses{f_t(\x_t) - f_t(\x) }\leq 2 \vec g_t \cdot (\x_t - \x) -\sigma \norm{\x_t - \x}^2_2.\label{eq:ogd1}
\end{align}
Pythagorean theorem implies 
\begin{align}
    \norm{\x_{t+1} - \x}_2^2 = \norm{\Pi_{\X}\parentheses{\x_t - \eta_t \vec g_t} - \x}_2^2 \leq \norm{\x_t - \eta_t - \x}_2^2,
\end{align}
Expanding the r.h.s. term gives
\begin{align}
    \norm{\x_{t+1} - \x}_2^2 &\leq \norm{\x_t - \x}_2^2 + \eta^2_t \norm{\vec g_t}^2_2 - 2\eta_t \vec g_t \cdot \parentheses{\x_t - \x}, \\
   2\vec g_t \cdot \parentheses{\x_t - \x}&\leq\frac{ \norm{\x_t - \x}_2^2 -  \norm{\x_{t+1} - \x}_2^2}{\eta_t} + \eta_t \norm{\vec g_t}^2_2.\label{eq:ogd2}
\end{align}
Combine Eq.~\eqref{eq:ogd1} and Eq.~\eqref{eq:ogd2} and for $t =1 $ to $t = T$:
\begin{align*}
    2\sum^T_{t =1} f_t(\x_t) - f_t(\x) &\leq \sum^T_{t=1} \frac{ \norm{\x_t - \x}_2^2 (1-\sigma \eta_t) -  \norm{\x_{t+1} - \x}_2^2}{\eta_t} + \sum^T_{t=1}\eta_t \norm{\vec g_t}^2_2 \\
    &\leq \sum^T_{t=1} \norm{\x_t - \x}_2^2 \parentheses{\frac{1}{\eta_t} - \frac{1}{\eta_{t-1}} - \sigma} +  \sum^T_{t=1}\eta_t \norm{\vec g_t}^2_2 &\text{$\frac{1}{\eta_0} \triangleq0$}\\
    &\leq   \mathrm{diam}^2(\X)  \parentheses{\frac{1}{\eta_T} -\sigma T } +  \sum^T_{t=1}\eta_t \norm{\vec g_t}^2_2.&\text{Telescoping series}
    \label{eq:generic_bound}
\end{align*}

\noindent When $\sigma > 0$ and $\eta_t = \frac{1}{\sigma t}$, from Eq.~\eqref{eq:generic_bound} we have
\begin{align}
    \sum^T_{t =1} f_t(\x_t) - f_t(\x) \leq 0 + \sum^T_{t=1} \frac{\norm{\vec g_t}^2_2}{2\sigma t} \leq \max_{t \in \T} \set{\norm{\vec g_t}^2_2} \sum^T_{t =1} \frac{1}{2\sigma} \leq  \frac{\max_{t \in \T} \set{\norm{\vec g_t}^2_2}}{2\sigma} \mathrm{H}_T = \BigO{\log(T)},
\end{align}
where $\mathrm{H}_T$ is the $T$-th harmonic number.

\noindent When $\sigma = 0$ and $\eta_t = \frac{\diam{\X}}{\sqrt{ \sum^t_{s=1} \norm{\vec g_s}^2_2}}$, from Eq.~\eqref{eq:generic_bound} we have
\begin{align}
    \sum^T_{t =1} f_t(\x_t) - f_t(\x) &\leq  \frac{\diam{\X}}{2}  {\sqrt{ \sum^T_{t=1} \norm{\vec g_s}^2_2}} +  \frac{ \diam{\X} }{2} \sum^T_{t=1} \frac{\norm{\vec g_t}^2_2}{\sqrt{ \sum^t_{s=1} \norm{\vec g_s}^2_2}}\\
    &\leq 1.5  \diam{\X}{\sqrt{ \sum^T_{t=1} \norm{\vec g_s}^2_2}} = \BigO{\sqrt{T}}.
\end{align}
Last inequality is obtained using \cite[Lemma~3.5]{AUER200248}, i.e., $\textstyle{\sum^T_{t=1} \frac{\abs{a_t}}{\sum^t_{s=1} \abs{a_s}}} \leq 2 \sqrt{\sum^T_{t=1} \abs{a_t}}$. This  concludes the proof. 
\end{proof}

\subsection{Saddle-Point Problem Formulation of $\alpha$-Fairness}
\begin{lemma}
\label{lemma:properties_saddle_function}
\label{l:saddle_problem}
Let $\X$ be a convex set,  $\U = \brackets{u_{\star,\min}, u_{\star,\max}}^\I \subset \reals^\I_{> 0}$, $u_i: \X \to \U$ be a  concave function for every $i \in \I$, $\Theta = \brackets{-1/u_{\star, \min}^\alpha,- 1/u_{\star, \max}^\alpha}^\I\subset \reals^\I_{<0}$, and  $\Psi_\alpha: \Theta \times \X \to \reals $ be a function given by
\begin{align}
    \Psi_{\alpha} (\dv, \x) \triangleq \parentheses{-F_\alpha}^\star(\vec \theta) - \vec \theta \cdot \vec u(\x).
\end{align}
The following holds:
\begin{itemize}
    \item The solution of the saddle-point problem formed by $\Psi_{\alpha}$ is a maximizer of the $\alpha$-fairness function
    \begin{align}
        \max_{\x \in \X} \min_{\dv \in \Theta}    \Psi_{\alpha} (\dv, \x) = \max_{\x \in \X} F_{\alpha} (\u(\x)).\label{e:sp-1}
    \end{align}
    \item The function $\Psi_{\alpha}: \Theta \times \X \to \reals $ is concave over $\X$.
    \item The function $\Psi_{\alpha}: \Theta \times \X \to \reals $ is $ \frac{u_{\star, \min}^{1+1/\alpha}}{\alpha }$-strongly convex over $\Theta$ w.r.t. $\norm{\spacedcdot}_2$ for $\alpha > 0$.
\end{itemize}
\end{lemma}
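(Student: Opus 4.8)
The plan is to settle the three assertions in turn, reusing the conjugate identities already established. For the saddle-point identity~\eqref{e:sp-1}, I would fix an arbitrary $\x \in \X$; since each $u_i$ maps into $\U$, the vector $\u(\x)$ lies in $\U$, so Lemma~\ref{l:recover_f} applies at the point $\u(\x)$ and gives $\min_{\dv \in \Theta} \Psi_{\alpha}(\dv, \x) = \min_{\dv \in \Theta} \set{\parentheses{-F_\alpha}^\star(\dv) - \dv \cdot \u(\x)} = F_\alpha(\u(\x))$. Taking the maximum over $\x \in \X$ on both sides yields~\eqref{e:sp-1} at once; no separate minimax argument is needed here, because the inner minimization has already been resolved pointwise by Lemma~\ref{l:recover_f}.

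For concavity over $\X$, I would fix $\dv \in \Theta$ and write $\Psi_{\alpha}(\dv, \x) = \parentheses{-F_\alpha}^\star(\dv) - \sum_{i \in \I} \theta_i u_i(\x)$. The first term does not depend on $\x$, and every $\theta_i < 0$ because $\Theta \subset \reals^\I_{<0}$, so $- \sum_{i \in \I} \theta_i u_i = \sum_{i \in \I} (-\theta_i) u_i$ is a nonnegative combination of the concave utilities $u_i$ and is therefore concave; adding a constant preserves concavity, which disposes of the second claim.

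The substantive part is the strong convexity over $\Theta$. Fixing $\x$, the term $-\dv \cdot \u(\x)$ is affine in $\dv$ and contributes nothing to the curvature, so it suffices to lower-bound the Hessian of $\parentheses{-F_\alpha}^\star$ on $\Theta$. Since that conjugate is separable (Lemma~\ref{lemma:convex_conjugate}) its Hessian is diagonal, and I would reduce to the scalar component $g(\theta) = \frac{\alpha(-\theta)^{1 - 1/\alpha} - 1}{1 - \alpha}$, with $g(\theta) = -\log(-\theta) - 1$ in the boundary case $\alpha = 1$. Differentiating twice gives $g'(\theta) = (-\theta)^{-1/\alpha}$ and $g''(\theta) = \frac{1}{\alpha}(-\theta)^{-(1+\alpha)/\alpha} > 0$, a quantity that is strictly decreasing in $-\theta$. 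Over the box $\Theta = \brackets{-u_{\star,\min}^{-\alpha}, -u_{\star,\max}^{-\alpha}}^\I$ the coordinate $-\theta_i$ is largest at $-\theta_i = u_{\star,\min}^{-\alpha}$, so every diagonal entry of the Hessian is bounded below by $g''$ evaluated at that corner, giving a uniform lower bound on the smallest eigenvalue over all of $\Theta$, i.e., the strong-convexity modulus. I expect this last step to be the only delicate one: one must correctly pick out the extremal corner of $\Theta$ (equivalently, through the envelope relation $u = (-\theta)^{-1/\alpha}$, the primal endpoint $u = u_{\star,\min}$) and argue that the curvature bound holds uniformly over the whole box, not merely pointwise, so that it is a genuine strong-convexity constant in the $\norm{\spacedcdot}_2$ sense. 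The $\alpha = 1$ case must be verified on its own, but it is governed by the same monotonicity of $g''(\theta) = (-\theta)^{-2}$ and presents no extra obstacle.
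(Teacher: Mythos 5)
Your proof is correct and follows essentially the same route as the paper's: the saddle-point identity is read off pointwise from Lemma~\ref{l:recover_f}, concavity in $\x$ comes from the nonnegative weights $-\theta_i$, and strong convexity is obtained by lower-bounding the diagonal Hessian of the separable conjugate at the corner $-\theta_i = u_{\star,\min}^{-\alpha}$. One caveat on the arithmetic you leave implicit: evaluating $g''(\theta) = \frac{1}{\alpha}(-\theta)^{-1-1/\alpha}$ at that corner gives $u_{\star,\min}^{1+\alpha}/\alpha$, which coincides with the stated modulus $u_{\star,\min}^{1+1/\alpha}/\alpha$ only when $\alpha=1$; the paper's own proof asserts the stated constant without performing this check, so the discrepancy originates there rather than in your argument.
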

\begin{proof}
Equation~\eqref{e:sp-1} is a direct result of Lemma~\ref{l:recover_f}. The function  $\Psi_{\alpha}$ is concave over $\X$ because $ - \vec \theta \cdot \vec u(\x)$ is a weighted sum of concave functions with non-negative weights. To prove the strong convexity of $\Psi_{\alpha}$ w.r.t. $\norm{\spacedcdot}_2$, a sufficient condition~\cite[Lemma~14]{Shalev2007} is given by ${\dv'}^T \parentheses{\nabla_{{\dv}}^2 \Psi_{\alpha} ({\dv}, \x)} {\dv}' \geq \sigma \norm{ {\dv}'}^2_2$ for all $\dv, {\dv}' \in \Theta$, and it holds
\begin{align}
    {\dv'}^T \parentheses{\nabla_{{\dv}}^2 \Psi_{\alpha} ({\dv}, \x)}  {\dv}' = \sum_{i \in \I} {\theta_i'}^2 \frac{\partial^2}{\partial {\theta_i}} \parentheses{-F_{\alpha}}^\star({\dv}) = \sum_{i \in \I}  \frac{{\theta'_i}^2}{\alpha (-\theta_i)^{1+1/\alpha}} \geq \frac{u_{\star, \min}^{1+1/\alpha}}{\alpha } \norm{\vec{\theta}'}^2_2.
\end{align}
This concludes the proof.
\end{proof}

\section{Proof of Theorem~\ref{theorem:impossibility}}
\label{proof:impossibility}
\begin{proof}
Consider two players $\I = \set{1,2}$, allocation set $\mathcal{X} = [-1,1]$ for all $t \in \T$. We define  $\gamma_T \in [0.4,1]$,  $\psi_T\triangleq\frac{1}{T}\sum^{\gamma_T T}_{t=1} {x_t}$. We assume w.l.g. $\gamma_T T$ is a natural number. We consider two strategies selected by the adversary:

\noindent\textbf{Strategy 1.} The adversary reveals the following utilities:
\begin{align}
    \vec u_t (x) &= \begin{cases} (1 + x, 2 - x) & \text{if } t \leq \gamma_T T, \\
    (1, 1) & \text{otherwise}.
    \end{cases}
\end{align}
Under the selected utilities, the static optimum attains the following objective 
\begin{align}
    \mathrm{OPT}^{\mathrm{S1}} &= \max_{x \in \mathcal{X}}~ f_{\alpha}((1+x) \gamma_T +  (1-\gamma_T))  +f_{\alpha}((2-x) \gamma_T + (1-\gamma_T))\\
    &=\max_{x \in \mathcal{X}}~ f_{\alpha}(1 +\gamma_T x)  +f_{\alpha}( 1 +\gamma_T - \gamma_T x).
\end{align}
The above objective is concave in $x$. We can perform a derivative test to characterize its maximum
\begin{align}
    \frac{\partial f_{\alpha}(1 +\gamma_T x)  +F_{\alpha}( 1 +\gamma_T - \gamma_T x)}{\partial x} = \frac{\gamma_T}{ \parentheses{1 + \gamma_T x}^{\alpha}} - \frac{\gamma_T}{\parentheses{1 + \gamma_T - \gamma_T x}^\alpha} = 0,\qquad \text{for}~x = \frac{1}{2}.
\end{align}
Thus, it holds
\begin{align}
     \mathrm{OPT}^{\mathrm{S1}} = 2 f_{\alpha}( 1 + 0.5\gamma_T).
\end{align}
The fairness regret denoted by  $\mathfrak{R}_T^{\mathrm{S1}} (F_\alpha, \vec A)$ under this strategy of a policy $\mathcal{A}$ is given by
\begin{align}
\mathfrak{R}_T^{\mathrm{S1}} (F_\alpha, \vec A) &= \mathrm{OPT}^{\mathrm{S1}} - f_{\alpha}\left(\frac{1}{T}\left(\sum^{\gamma_T T}_{t=1} {1 + x_t} \right)  +  1 - \gamma_T\right) - f_{\alpha}\left(\frac{1}{T}\left(\sum^{\gamma_T T}_{t=1} {2 - x_t} \right) +1 -\gamma_T\right)\\
&= 2 f_{\alpha}( 1 + 0.5\gamma_T) - f_{\alpha}(1 +\psi_T) - f_{\alpha}(1 + \gamma_T - \psi_T).
\end{align}

\noindent\textbf{Strategy 2.} The adversary reveals the following utilities:
\begin{align}
    \vec u_t (x) &= \begin{cases} (1 + x, 2 - x) & \text{if } t \leq \gamma_T T, \\
    (2, 0) & \text{otherwise}.
    \end{cases}
\end{align}
Under the selected utilities, the static optimum attains the following objective 
\begin{align}
    \mathrm{OPT}^{\mathrm{S2}} &= \max_{x \in \mathcal{X}} f_{\alpha} ((1+x)\gamma_T + (1-\gamma_T)2) +  f_{\alpha} ((2-x)\gamma_T)\\
    &=\max_{x \in \mathcal{X}} f_{\alpha} (2 - \gamma_T + \gamma_T x) +  f_{\alpha} (2\gamma_T - \gamma_T x ).
\end{align}
Similar to the previous strategy, we can perform a derivative test to characterize the maximum of the the above objective
\begin{align*}
    \frac{\partial f_{\alpha} (2 - \gamma_T + \gamma_T x) +  f_{\alpha} (2\gamma_T - \gamma_T x )}{\partial x} = \frac{\gamma_T}{\parentheses{2 - \gamma_T + \gamma_T x}^\alpha} -\frac{\gamma_T}{\parentheses{2\gamma_T - \gamma_T x}^\alpha} = 0, \qquad \text{for}~ x = 1.5 - \frac{1}{\gamma_T}.
\end{align*}
Therefore, it holds
\begin{align}
     \mathrm{OPT}^{\mathrm{S2}} = 2f_{\alpha}( 1 + 0.5 \gamma_T).
\end{align}
and the fairness regret $\mathfrak{R}_T^{\mathrm{S2}} (F_\alpha, \vec A)$ under this strategy is 
\begin{align}
    \mathfrak{R}_T^{\mathrm{S2}} (F_\alpha, \vec A) &=  \mathrm{OPT}^{\mathrm{S2}} - f_{\alpha}\left(\frac{1}{T}\left(\sum^{\gamma_T T}_{t=1} {1 + x_t} \right)  +  2 -2 \gamma_T\right) - f_{\alpha}\left(\frac{1}{T}\left(\sum^{\gamma_T T}_{t=1} {2 - x_t} \right)\right)\\
    &= 2 f_{\alpha}( 1 + 0.5 \gamma_T)  - f_{\alpha}(2 - \gamma_T + \psi_T) - f_{\alpha}(2 \gamma_T - \psi_T).
\end{align}
We take the average fairness regret $\frac{1}{2}\left(  \mathfrak{R}_T^{\mathrm{S1}} (F_\alpha, \vec A) +  \mathfrak{R}_T^{\mathrm{S2}} (F_\alpha, \vec A)\right)$ across  the two strategies
\begin{align}
    &\frac{1}{2}\left(  \mathfrak{R}_T^{\mathrm{S1}} (F_\alpha, \vec A) +  \mathfrak{R}_T^{\mathrm{S2}} (F_\alpha, \vec A)\right)\\
    &= 2 f_{\alpha}(1 + 0.5 \gamma_T) - \frac{1}{2} \left(f_{\alpha}(2 - \gamma_T + \psi_T) + f_{\alpha}(2 \gamma_T - \psi_T) + f_{\alpha}(1 +\psi_T) + f_{\alpha}(1 + \gamma_T - \psi_T)\right).\label{e:two_regrets}
\end{align}
The r.h.s. of the above equation is convex in $\psi_T$, so its minimizer can be characterized through the derivative as follows
\begin{align}
  &\frac{\partial f_{\alpha}(2 - \gamma_T + \psi) + f_{\alpha}(2 \gamma_T - \psi) + f_{\alpha}(1 +\psi) + f_{\alpha}(1 + \gamma_T - \psi)}{\partial \psi} \\
  &= \frac{1}{\parentheses{2 - \gamma_T + \psi}^\alpha} - \frac{1}{\parentheses{2 \gamma_T - \psi}^\alpha} + \frac{1}{\parentheses{1 +\psi}^\alpha }- \frac{1}{\parentheses{1 + \gamma_T - \psi}^\alpha} = 0, \qquad \text{for } \psi = \gamma_T - 0.5.
\end{align}
We replace $\psi_T = \gamma_T - 0.5$ in Eq.~\eqref{e:two_regrets} to get
\begin{align}
    \frac{1}{2}\left(  \mathfrak{R}_T^{\mathrm{S1}} (F_\alpha, \vec A) +  \mathfrak{R}_T^{\mathrm{S2}} (F_\alpha, \vec A)\right) &\geq 2 f_{\alpha}(1 + 0.5 \gamma_T) - \left( f_{\alpha}(1.5) +  f_{\alpha}(0.5 + \gamma_T) \right).
    \label{e:two_regrets2}
\end{align}
We take the derivative of the lower bound 
\begin{align}
    \frac{\partial 2 f_{\alpha}(1 + 0.5 \gamma_T) -  \left(f_{\alpha}(1.5) + f_{\alpha}(0.5 + \gamma_T) \right)}{\partial \gamma_T}   = \frac{\parentheses{0.5 + \gamma}^\alpha - \parentheses{1 + 0.5 \gamma}^\alpha}{\parentheses{0.5 + \gamma}^\alpha \parentheses{1 + 0.5 \gamma}^\alpha}.
\end{align}
Note that the sign of the derivative is determined by the numerator  $\parentheses{0.5 + \gamma}^\alpha - \parentheses{1 + 0.5 \gamma}^\alpha$. It holds $\parentheses{0.5 + \gamma}^\alpha - \parentheses{1 + 0.5 \gamma}^\alpha < 0$ for $\gamma_T<1$, otherwise $\parentheses{0.5 + \gamma}^\alpha - \parentheses{1 + 0.5 \gamma}^\alpha = 0$. Hence, the lower bound in Eq.~\eqref{e:two_regrets2} is strictly decreasing for $\gamma_T < 1$, and it holds for $\gamma_T \leq 1 - \epsilon$ for $\epsilon >0$
\begin{align}
     \frac{1}{2}\left(  \mathfrak{R}_T^{\mathrm{S1}} (F_\alpha, \vec A) +  \mathfrak{R}_T^{\mathrm{S2}} (F_\alpha, \vec A)\right)&\geq 2 f_\alpha (1.5 - 0.5 \epsilon) - (f_\alpha(1.5) + f_\alpha (1.5 + 0.5 \epsilon)) > 0. 
\end{align}


In other words, the fairness regret guarantee is not attainable\footnote{Note that the fairness regret must vanish for any adversarial choice of sequence of utilities.} for values of $\gamma_T \leq 1 - \epsilon$ for any $T$ and $\epsilon > 0$. We can also verify that \ref{a:5} is violated when $\gamma_T \leq 1 - \epsilon$ for any $T$ and $\epsilon > 0$. Note that $\gamma_T$ is defined to be in the set $[0.4, 1]$.

Under strategy 1  we have $x_{\star} = \frac{1}{2}$ and it holds
\begin{align}
    \frac{1}{T} \sum^T_{t=1} \vec u_t(x_{\star})  &=  (1 + 0.5\gamma_T,1 + 0.5\gamma_T  ), \text{ and }
    \vec u_t (x_{\star}) = \begin{cases} (1.5, 1.5) & \text{if } t \leq \gamma_T T, \\
    (1, 1) & \text{otherwise}.
    \end{cases}
\end{align}

Then, it holds
\begin{align}
    \VT &\geq 2 ( 1 -\gamma_t) \gamma_T T  \geq 2 \epsilon \gamma_T T \geq 0.8 \epsilon T = \Omega(T).
\end{align}
Moreover, it can easily be checked that $\WT  = \Omega(T)$ because there is no decomposition $\set{1,2,\dots, T} =  \T_1\cup \T_2\cup\dots\cup\T_K$ where $\max\set{\T_k: k \in [K]} = o \left(T^{\frac{1}{2}}\right)$ under which $\sum^K_{k=1} \sum_{i \in \I} \abs{\sum_{t \in \T_k}  \delta_{t,i}(\x_{\star})}= o (T)$. 

To conclude, when $\gamma_T = 1 - o(1)$, we have $\min\{\VT, \WT\} \leq  \VT = o(T)$; thus, Assumption~\ref{a:5}  only holds when $\gamma_T = 1 - o(1)$ for which the vanishing fairness regret guarantee is attainable. Figure~\ref{fig:impossibility_example} provides a summary of the connection between the fairness regret under scenarios~1 and~2 and Assumption~\ref{a:5}.

\begin{figure}[t]
    \centering
    \includegraphics[width=.4\linewidth]{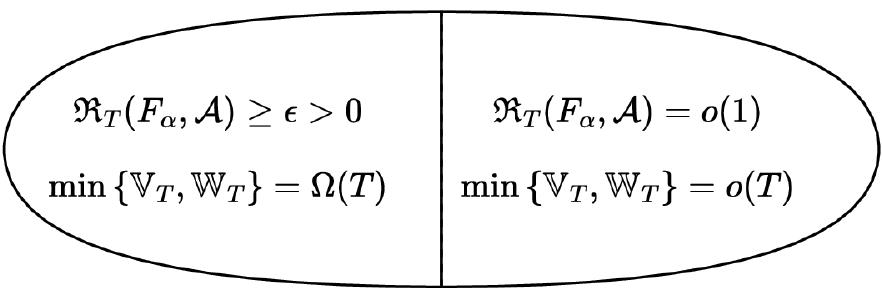}
    \caption{Assumption~\ref{a:5} and fairness regret~\eqref{e:b_regret} under scenarios~1 and 2.}
    \label{fig:impossibility_example}
\end{figure}

\end{proof}

\section{Proof of Theorem~\ref{th:maintheorem}}
\label{proof:t:maintheorem}

\begin{proof}
Note that $\Psi_{\alpha,t}: \Theta \times \X \to \reals$ is the function given by \begin{align}
     \Psi_{\alpha, t}(\dv,  \x) = \parentheses{-F_{\alpha}}^\star(\dv) - \dv \cdot \u_{t}(\x),
     \label{e:primal_dual_loss_gain}
\end{align}
where $F_\alpha: \U \to \reals$ is an $\alpha$-fairness function~\eqref{e:alpha-fair}.  From Lemma~\ref{lemma:ogd_regret}, OGD operating over the set $\Theta$ under the  $ \frac{u_{\star, \min}^{1+1/\alpha}}{\alpha }$-strongly convex (Lemma~\ref{lemma:properties_saddle_function})  cost functions $\Psi_{\alpha, t}(\underline{\vec \theta_t},  \x_t)$ 
has the following regret guarantee against any fixed $\dv \in \Theta$
\begin{align}
\label{e:algo_obj_p1}
    \frac{1}{T}\sum^T_{t=1} \Psi_{\alpha, t}(\dv_t, \x_t) -  \frac{1}{T}\sum^T_{t=1} \Psi_{\alpha, t}(\dv, \x_t) &\leq \frac{1}{T} \spacedcdot \vGroup{ \frac{1}{2}\sum^T_{t=1}\frac{\alpha}{u_{\star, \min}^{1+1/\alpha} t} \norm{\vec g_{\Theta, t}}^2_2}{\mathfrak{R}_{T, \Theta}} , 
\end{align}
From Lemma~\ref{l:recover_f}, it holds
\begin{align}
\min_{\dv \in \Theta} \frac{1}{T}\sum^T_{t=1} \Psi_{\alpha, t}(\dv, \x_t) &= F_{\alpha}\left(\frac{1}{T} \sum^T_{t=1}\vec u_t(\x_t)\right).\label{e:algo_obj_p2}
\end{align}
Combine Eq.~\eqref{e:algo_obj_p1} and Eq.~\eqref{e:algo_obj_p2} to obtain the lower bound
\begin{align}
    {F_{\alpha}\left(\frac{1}{T} \sum^T_{t=1}\u_t(\x_t)\right) }+  \frac{\mathfrak{R}_{T, \Theta}}{T} \geq { \frac{1}{T}\sum^T_{t=1} \Psi_{\alpha, t}(\dv_t, \x_t)}.
    \label{e:dual_part1}
\end{align}
From Lemma~\ref{lemma:ogd_regret}, OGD operating over the set $\X$ under the reward functions $\Psi_{\alpha, t}(\vec \theta_t,  \x)$
has the following regret guarantee for any fixed $\x_{\star} \in \X$:
\begin{align}
     \frac{1}{T}\sum^T_{t=1} \Psi_{\alpha, t}(\dv_t, \x_{\star}) - \frac{1}{T}\sum^T_{t=1} \Psi_{\alpha, t}(\dv_t, \x_t) 
    \leq \frac{1}{T} \spacedcdot \vGroup{1.5 \diam{\X} \sqrt{\sum_{t \in \T} \norm{\vec g_{\X,t}}^2_2}}{\mathfrak{R}_{T, \X}},
\end{align}
Hence, we have the following
\begin{align} 
 &{\frac{1}{T}\sum^T_{t=1} \Psi_{\alpha, t}(\dv_t, \x_t)} + \frac{\mathfrak{R}_{T, \X}}{T} \geq 
  {\frac{1}{T}\sum^T_{t=1} \Psi_{\alpha, t}(\dv_t, \x_{\star}) }\nonumber\\
   &= { \frac{1}{T}\sum^T_{t=1} F^\star(\dv_t) -\frac{1}{T} \sum^T_{t=1} \dv_t 
   \cdot \u_t(\x_{\star}) }&\text{Replace $\Psi_{\alpha, t}(\dv_t, \x_{\star})$ using Eq.~\eqref{e:primal_dual_loss_gain}}\nonumber\\
   &\geq  {F^\star\left(\frac{1}{T} \sum^T_{t=1}\dv_t\right) - \frac{1}{T} \sum^T_{t=1}\dv_t 
   \cdot \u_t(\x_{\star})} &\text{Jensen's inequality \& convexity of $F^\star$}\nonumber\\
   &\geq { F^\star\left(\bar{\dv}\right) - \bar{\dv} 
   \cdot \left(\frac{1}{T}  \sum^T_{t=1}\u_t(\x_{\star})\right)} -  {\frac{1}{T} \sum^T_{t=1} (\dv_t - \bar{\dv}) \cdot \u_t(\x_{\star})}\nonumber\\
   &\geq  \min_{\dv \in \Theta} \set{  {F^\star\left({\dv}\right) - {\dv} 
   \cdot \left(\frac{1}{T}  \sum^T_{t=1}\u_t(\x_{\star})\right) }}- { \frac{1}{T}  \sum^T_{t=1} (\dv_t - \bar{\dv}) \cdot \u_t(\x_{\star}) }\nonumber\\
   &= F_{\alpha}\left( {\frac{1}{T} \sum^T_{t=1} \u_t(\x_{\star})}\right) -  { \frac{1}{T}  \sum^T_{t=1}\left(\dv_t - \bar{\dv}\right) \cdot \u_t(\x_{\star})} .\label{e:primal_p2}
\end{align}
We combine the above equation and Eq.~\eqref{e:dual_part1} to obtain 
\begin{align}
{F_{\alpha}\left( {\frac{1}{T} \sum^T_{t=1}\u_t(\x_{\star})}\right) - F_{\alpha}\left( {\frac{1}{T} \sum^T_{t=1}\u_t(\x_t)}\right)} &\leq  {\frac{\mathfrak{R}_{T, \X}}{T} + \frac{\mathfrak{R}_{T, \Theta}}{T} + { \frac{1}{T} \sum^T_{t=1} \left(\dv_t - \bar\dv\right) \cdot \u_t(\x_{\star})}}\nonumber\\
&= { \frac{\mathfrak{R}_{T, \X}}{T} + \frac{\mathfrak{R}_{T, \Theta}}{T} } + \vGroup{{ \frac{1}{T} \sum^T_{t=1} \left( \bar\dv - \dv_t\right) \cdot \vec \delta_t(\x_{\star})}}{ \Sigma}\label{e:primal_dual_with_extra_term}
\end{align}
We provide two approaches to bound the r.h.s. term $\Sigma$ in Eq.~\eqref{eq:_a1}, and this gives the two conditions in Assumption~\ref{a:5}:

\noindent\textbf{Bound 1.} We can bound the r.h.s. term $\Sigma$ in the above equation  as follows
\begin{align}
\Sigma &= { \bar\dv\cdot  \sum^T_{t=1} \vec\delta_t(\x_{\star})}- { { \sum^T_{t=1} \dv_t\cdot \vec\delta_t(\x_{\star}) }}= {-{ \sum^T_{t=1} \dv_t\cdot \vec\delta_t(\x_{\star}) }} \\
    &\leq\frac{1}{u_{\star, \min}}{\sum_{i \in \I} \sum^T_{t=1} \delta_{t,i}(\x_{\star})  \mathds{1}_{\set{\delta_{t,i}(\x_{\star}) \geq 0}}} = \BigO{\VT}.\label{eq:_a1}
\end{align}
\noindent\textbf{Bound 2.} We alternatively bound  $ \Sigma$ as follows
\begin{align}
     \Sigma &=  {   \sum^K_{k=1} \sum_{t \in \mathcal{T}_k} \left(\bar\dv - \dv_t \right) \cdot \vec \delta_t(\x_{\star})}=  {\sum^K_{k=1} \sum_{t \in \mathcal{T}_k} \left(\bar\dv - \dv_{\min\parentheses{\T_k}}\right) \cdot \vec \delta_t(\x_{\star})+   \sum^K_{k=1} \sum_{t \in \mathcal{T}_k} \left( \dv_{\min\parentheses{\T_k}} - \dv_{t}\right) \cdot \vec \delta_t(\x_{\star})}\nonumber\\
     &\leq   \Delta_{\alpha}\sum^K_{k=1} \norm{\sum_{t \in \T_k}  \vec \delta_t(\x_{\star})}_1 +  { u_{\max} \sum^K_{k=1} \sum_{t \in \mathcal{T}_k} \norm{ \dv_{\min\parentheses{\T_k}} - \dv_{t}}_1 }, \label{e:w_rhs_proof_0}
\end{align}
where $\Delta_\alpha = \max \set{\norm{\dv - \dv'}_\infty : \dv, \dv' \in \Theta}$. We bound the term  $\sum^K_{k=1} \sum_{t \in \mathcal{T}_k} \norm{ \dv_{\min\parentheses{\T_k}} - \dv_{t}}_1$ in the above equation as 
\begin{align}
     \sum^K_{k=1} \sum_{t \in \mathcal{T}_k} \norm{ \dv_{\min\parentheses{\T_k}} - \dv_{t}}_1 &\leq  L_{\Theta} \sum^K_{k=1} \eta_{\Theta,\,{\min \parentheses{\T_k}}}\sum_{t \in \mathcal{T}_k} (t- \min\parentheses{\T_k})\leq L_{\Theta} \sum^K_{k=1} \eta_{\Theta,\,{\min \parentheses{\T_k}}} \card{\T_k}^2\\
     &= L_{\Theta}\frac{u_{\star, \min}^{1 + \frac{1}{\alpha}}}{\alpha}  \sum^K_{k=1} \frac{\card{\T_k}^2}{\min \parentheses{\T_k}} \label{e:w_rhs_proof},
\end{align}
and replacing this upper-bound in Eq.~\eqref{e:w_rhs_proof_0} gives 
\begin{align}
    \Sigma \leq    \Delta_{\alpha} \sum^K_{k=1} \norm{\sum_{t \in \T_k}  \vec \delta_t(\x_{\star})}_1 + u_{\max} L_{\Theta}\frac{\alpha}{u_{\star, \min}^{1 + \frac{1}{\alpha}}}  \sum^K_{k=1} \frac{\card{\T_k}^2}{\vGroup{\min \parentheses{\T_k}}{\sum_{k' < k } \card{\T_k}+1}}  = \BigO{\WT}\label{eq:_a2}.
\end{align}
We combine Eq.~\eqref{eq:_a1}, Eq.~\eqref{eq:_a2}, and Eq.~\eqref{e:primal_dual_with_extra_term} to obtain 
\begin{align}
     \regret &\leq  \sup_{ \set{\u_t}^T_{t=1} \in {{\U^T}}} \set{\frac{1}{T} \parentheses{{\mathfrak{R}_{T, \X} + \mathfrak{R}_{T, \Theta}}} }+ \BigO{\frac{\min\set{\VT, \WT}}{T} }\\
     &\leq  \sup_{ \set{\u_t}^T_{t=1} \in {{\U^T}}} \set{\frac{1}{T} \parentheses{{1.5 \diam{\X} \sqrt{\sum_{t \in \T} \norm{\vec g_{\X,t}}^2_2} +\frac{\alpha}{u_{\star, \min}^{1 + \frac{1}{\alpha}}}  \sum^T_{t=1}\frac{ \norm{\vec g_{\Theta,t}}^2_2}{t} } }}+  \BigO{\frac{\min\set{\VT, \WT}}{T} }.\label{e:adaptive_bound}
\end{align}
The following upper bounds hold
\begin{align*}
    \norm{\vec g_{\Theta,t}}_2 &= \norm{ \parentheses{\frac{1}{\parentheses{-\theta_{t,i}}^{1/\alpha}} - \left(\vec u_t (\x_t)\right)}_{i \in \I}}_2 \leq \sqrt{I} \max\set{\frac{1}{u_{\star, \min}^{1/\alpha}} - u_{\min}, u_{\max} - \frac{1}{u_{\star, \max}^{1/\alpha}}}  = L_{\Theta},\\
     \norm{\vec g_{\X,t}}_2 &= {\norm{ \dv_t \cdot \partial_{\x}\vec u_t (\x_t)}}_2 \leq \frac{1}{u_{\star, \min}^{\alpha}}   \norm{\partial_{\x}\vec u_t (\x_t)}_2 \leq  \frac{L_{\X}}{u_{\star, \min}^{\alpha}}.
\end{align*}
Thus, the regret bound in Eq.~\eqref{e:adaptive_bound} can be upper bounded as
\begin{align*}
    \regret &= \frac{1}{T}  \sup_{ \set{\u_t}^T_{t=1} \in {{\U^T}}} \set{{{1.5\diam{\X} \frac{L_{\X} \sqrt{T}}{u_{\star, \min}^{\alpha}} +\frac{\alpha}{u_{\star, \min}^{1 + \frac{1}{\alpha}}} \sum^T_{t=1}\frac{L^2_\Theta}{t}}}} + \frac{\min{\set{\VT, \WT}}}{T} \\
    &\leq  \frac{1}{T} \sup_{ \set{\u_t}^T_{t=1} \in {{\U^T}}} \set{{1.5\diam{\X} \frac{L_{\X} \sqrt{T}}{u_{\star, \min}^{\alpha}} +\frac{\alpha}{u_{\star, \min}^{1 + \frac{1}{\alpha}}}  L^2_{\Theta} (\log(T) + 1) }} + \frac{\min{\set{\VT, \WT}}}{T} \\
    &= \BigO{\frac{1}{\sqrt{T}} + \frac{\min \set{\VT, \WT}}{T}}.
    \label{e:final_eq}
\end{align*}
This concludes the proof.
\end{proof}

\section{Proof of Theorem~\ref{theorem:lowerbound} (Lower Bound)}
\label{proof:lowerbound}
\begin{proof}
Consider a scenario with a single player $\I=\{1\}$, $\X = \set{x \in \reals, \abs{x} \leq 1}$, and the utility selected by an adversary at time slot $t \in \T$ is given by
\begin{align}
    u_t(x) =  w_t x + 1, \quad\text{where}~  w_t \in \set{-1,+1}.
\end{align}
The weight $w_{t}$ is selected in $\{-1,+1\}$ uniformly at random for $t \in \T$. A policy $\A$ selects the sequence of decisions $\set{x_t}^T_{t=1}$ and has the following fairness regret
\begin{align*}
    &\mathbb{E} \brackets{\max_{x \in \X}f_{\alpha}\parentheses{ \frac{1}{T}\sum^T_{t=1} u_t(x)} -f_{\alpha}\parentheses{ \frac{1}{T}\sum^T_{t=1} u_t(x_t)}  } \geq  \mathbb{E} \brackets{\max_{x \in \X}f_{\alpha}\parentheses{ \frac{1}{T}\sum^T_{t=1} u_t(x)}}  -\vGroup{f_{\alpha}\parentheses{\mathbb{E}\brackets{ \frac{1}{T}\sum^T_{t=1} u_t(x_t)}} }{\text{$= 0$}}\\
    &= \mathbb{E} \brackets{f_{\alpha}\parentheses{\max_{x \in \X} \frac{1}{T}\sum^T_{t=1} u_t(x)}}  =  \mathbb{E} \brackets{f_{\alpha}\parentheses{\frac{1}{T}\abs{\sum^T_{t=1} w_{t,1}}+ 1} } \stackrel{
    \mathrm{(a)}}{\geq}{\mathbb E \brackets{\frac{1}{T}\abs{\sum^T_{t=1} w_{t,1}}}} \parentheses{\frac{2^{1-\alpha}-1}{1-\alpha}}\stackrel{\mathrm{(b)}}{\geq} \frac{\parentheses{\frac{2^{1-\alpha}-1}{1-\alpha}}}{\sqrt{2T}}\\
    &= \Omega\parentheses{\frac{1}{\sqrt{T}}}.
\end{align*}
Inequality (a) is obtained considering $f_{\alpha} (x+1)$ is concave in $x$, $f_{\alpha}(0 + 1)=0$, and $f_{\alpha}(x+1) \geq f_{\alpha}(2) x$ for $x \in [0, 1]$. Inequality~(b) is obtained through Khintchine inequality. A lower bound on the fairness regret~\eqref{e:b_regret} can be established:
\begin{align}
  \regret  \geq \mathbb{E} \brackets{\max_{x \in \X}f_{\alpha}\parentheses{ \frac{1}{T}\sum^T_{t=1} u_t(x)} -f_{\alpha}\parentheses{ \frac{1}{T}\sum^T_{t=1} u_t(x_t)}  } = \Omega\parentheses{\frac{1}{\sqrt{T}}}.
\end{align}
This concludes the proof.
\end{proof}

\section{Proof of Corollary~\ref{corollary:stochastic}}
\label{proof:stochastic}

\begin{proof}
\noindent \textbf{Expected regret.} When the utilities are i.i.d., we have the following
\begin{align}
    \mathbb E \brackets{  \u_t (\x)} = \vec u, \forall t \in \T,
\end{align}
for some fixed utility $\u \in \U$. In the proof Theorem~\ref{proof:t:maintheorem}, in particular, in Eq.~\eqref{e:primal_dual_with_extra_term} it holds
\begin{align}
    {F_{\alpha}\left( {\frac{1}{T} \sum^T_{t=1}\u_t(\x_{\star})}\right) - F_{\alpha}\left( {\frac{1}{T} \sum^T_{t=1}\u_t(\x_t)}\right)} &\leq  {\frac{\mathfrak{R}_{T, \X}}{T} + \frac{\mathfrak{R}_{T, \Theta}}{T} + { \frac{1}{T} \sum^T_{t=1} \left(\dv_t - \bar\dv\right) \cdot \u_t(\x_{\star})}}.
\end{align}
Taking the expectation of both sides gives 
\begin{align}
        \mathbb E \brackets{F_{\alpha}\left( {\frac{1}{T} \sum^T_{t=1}\u_t(\x_{\star})}\right) - F_{\alpha}\left( {\frac{1}{T} \sum^T_{t=1}\u_t(\x_t)}\right)}  \leq  \mathbb E \brackets{ \frac{\mathfrak{R}_{T, \X}}{T} + \frac{\mathfrak{R}_{T, \Theta }}{T}} + \mathbb E \brackets{{ \frac{1}{T} \sum^T_{t=1} \left(\dv_t - \bar\dv\right) \cdot \u_t(\x_{\star})}}.
\end{align}
The variables $\dv_t$ and  $\u_t$ are independent for  $t \in \T$, thus we have
\begin{align}
     \mathbb E \brackets{{ \frac{1}{T} \sum^T_{t=1} \left(\dv_t - \bar\dv\right) \cdot \u_t(\x_{\star})}} = \mathbb E \brackets{{  \parentheses{\bar\dv - \bar\dv }\cdot \u(\x_{\star})}} = 0.
\end{align}
Through Eq.~\eqref{e:final_eq}, it holds
\begin{align}
    \mathbb E \brackets{F_{\alpha}\left( {\frac{1}{T} \sum^T_{t=1}\u_t(\x_{\star})}\right) - F_{\alpha}\left( {\frac{1}{T} \sum^T_{t=1}\u_t(\x_t)}\right)} = \BigO{\frac{1}{\sqrt{T}}}.
\end{align}
This concludes the first part of the proof.

\noindent\textbf{Almost-sure zero-regret.} Let $\Delta =  \parentheses{u_{\max} - u_{\min}}$,  $\T = \T_1\cup\T_2\cup \dots \cup \T_K$ where $K = T^{2/3}$ and $\card{\T_k} = \kappa  = T^{1/3}$ for $k \in \set{1,2,\dots, K}$, and let $\beta \in (0,1/6)$. Employing Hoeffding's inequality we can bound the l.h.s. term in Eq.~\eqref{e:adv1} for $i \in \I$ as 
\begin{align}
\mathbb P \parentheses{  \abs{\sum_{t \in \T_k} \delta_{t,i} (\x)} \leq \Delta T^{1/6+\beta}} &\geq 1 -2 \exp\parentheses{\frac{-2T^{1/3+2\beta}}{\parentheses{(T-\kappa) \kappa^2/T^2 + \kappa (\kappa / T -1)^2}}} = 1 -2 \exp\parentheses{\frac{-2T^{1/3+2\beta}}{\parentheses{\kappa -\kappa^2 / T}}} \\
&=  1 - 2\exp\parentheses{\frac{-2T^{1/3+2\beta}}{\parentheses{T^{1/3} -T^{ - 1/3}}}}.
\end{align}
Hence, it follows 
\begin{align*}
   \mathbb{P} \parentheses{\sum^K_{k=1} \sum_{i \in \I} \abs{\sum_{t \in \T_k}   \delta_{t,i} (\x)} \leq \Delta T^{5/6+\beta}} &\geq \parentheses{1 -2 \exp\parentheses{\frac{-2T^{1/3+2\beta}}{\parentheses{T^{1/3} -2 T^{ - 1/3}}}}}^{I T^{2/3}}\\
   &\geq1 - 2I  T^{2/3} \exp\parentheses{\frac{-2T^{1/3+2\beta}}{\parentheses{T^{1/3} -T^{ - 1/3}}}} &\text{Bernoulli's inequality}\\
   & \geq 1 -2 I  T^{2/3} \exp\parentheses{\frac{-2T^{1/3+2\beta}}{{T^{1/3}}}}\\
   &\geq  1 - 2I  T^{2/3} \exp(-2T^{2\beta}).
\end{align*}
It follows from  the above equation paired with Eq.~\eqref{e:adv1}
\begin{align}
    \WT = \BigO{T^{5/6+\beta} + T^{2/3}} = \BigO{T^{5/6+\beta}}, \qquad \text{w.p.}\qquad p \geq  1 - 2I  T^{2/3} \exp(-2T^{2\beta}).
\end{align}
Thus, for any $\beta \in (0, 1/6)$ and $T\to \infty$, it holds
\begin{align}
    \frac{\WT}{T} \leq  0, \qquad \text{w.p.}\qquad  p \geq 1.
\end{align}
Note that given that $\WT \geq 0$ in Eq.~\eqref{e:adv2}, it holds $ \lim_{T \to \infty}  \WT = 0$ w.p. $p = 1$.   Therefore, it follows from Theorem~\ref{proof:t:maintheorem} for $T \to \infty$
\begin{align}
       \regret = \BigO{\frac{1}{\sqrt{T}} + \frac{\min\set{\VT, \WT}}{T}} = \BigO{\frac{1}{\sqrt{T}} + \frac{\WT}{T}} \leq 0, \qquad \text{w.p.}\qquad 1.
\end{align} 
This concludes the proof.
\end{proof}

\section{Additional Experimental Details}
\begin{table}[h]
	\caption{Specification of the network topologies used in experiments.}
	\label{t:setting}
	\begin{footnotesize}
	\begin{center}
	
		\begin{tabular}{|c|c|c|c|c|c|c|c|}
			\hline
			Topologies & $\card{\C}$ & $\card{\E}$ & $k_c$ & $\card{\mathcal{Q}_i}$ & $\card{\cup_{f \in \F} \Lambda_f(\C)}$ & $w$  & Figure              \\
			\hline
			\SC        & 3           & 3           & 5--5  & 1                      & 1                                      & 1--2 & Fig.~\ref{fig:topologies}~(a) \\ 
			\BT-1--\BT-3        & 13          & 12          & 1--5  & 2--5                      & 1                                      & 1--9 & Fig.~\ref{fig:topologies}~(b)--(d) \\
			\Grid      & 9           & 12          & 1--5  & 2                      & 1                                      & 1--7 & Fig.~\ref{fig:topologies}~(e) \\
			\Abilene   & 12          & 13          & 1--5  & 2                      & 2                                      & 1--8 & Fig.~\ref{fig:topologies}~(f) \\
			\GEANT     & 22          & 33          & 1--5  & 3                      & 2                                      & 1--9 & Fig.~\ref{fig:topologies}~(g) \\
			\hline
		\end{tabular}
	\end{center}
	\end{footnotesize}
	\label{table:topologies}
\end{table} 
\begin{figure}[h]
    \centering
    \subcaptionbox{Stationary }{\includegraphics[width=.2\linewidth]{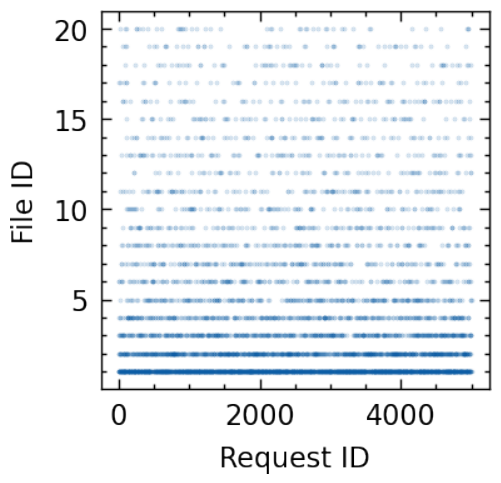}}
       \subcaptionbox{Non-Stationary }{\includegraphics[width=.2\linewidth]{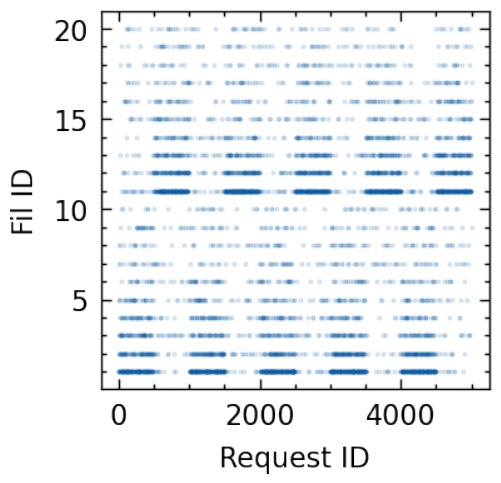}}
    \caption{Request traces stationary~(a) and non-stationary~(b) configured with $\sigma = 1.2$, $T = 5000$, $F = 20$, $D = 100$. Each dot indicates a requested file.}
    \label{fig:traces}
\end{figure}
\section{Departing and Arriving Agents}
\label{appdendix:departingarriving}
\new{ The system model in Section~\ref{s:system_model} supports departing and arriving agents. Consider a population of agents $\mathcal I$, the system may only observe  a subset of the agents as the \emph{participating} agents, $\mathcal I_t \subset \mathcal I$ at time $t$, and the utility of \emph{absent} agents is simply $u_{t,i}(\,\cdot\,) = 0$. For example, in the extreme scenario where a single agent $t \in \mathcal T$ is participating at a given time slot, the long-term fairness objective~\eqref{eq:hf_objective} falls back to the slot-fairness objective~\eqref{eq:sf_objective}, i.e.,  $F_\alpha\left({\sum_{t \in \mathcal T}} \vec u_t(\vec x_t)\right) = \sum_{t\in \mathcal T} f_\alpha \left( u_{t,t} (\vec x_t)\right)$ where the fairness is ensured across the different agents arriving at different timeslots $t \in \mathcal T$. It is easy to verify that even in the case when the set of agents $\mathcal I$ is unknown to the controller in advance, one could augment the dual space with an extra dimension each time a new user appears, and the same guarantees hold.}

\section{Time-Complexity of Algorithm~\ref{alg:primal_dual_ogaogd}}
\label{appdendix:time_complexity}
\new{Algorithm~1 applied to the virtualized caching system application has a time complexity $\BigO{C F^2}$, where $C$ is number  of caches and $F$ is the number of files in the catalog; the most expensive operation in Algorithm~1 is the projection step in line 8 that corresponds to the Euclidean projection onto a capped simplex, and this can be performed in $\BigO{F^2}$ steps~\cite{wang2015projection} for each cache state.  Despite the  high time complexity ($F$ is typically large),  in practice solvers (e.g., CVXPY~\cite{diamond2016cvxpy}) support \emph{warm-start} that speeds up the projection when the warm-start parameters are close to the ones of obtained by the solution, and since Algorithm~1 is iterative and the cache states do not severely change, typically a lower computational cost is achieved. Moreover, the proposed caching model in Section~\ref{s:experiments} supports request batching, where a batch
includes the requests arriving between two consecutive cache updates. Batching amortizes the computational
cost of the different policies, reducing the cost per request by the batch size ($R_t$ at time slot $t$).}
\end{document}